\newcommand{\ignore}[1]{}
\newenvironment{proofof}[1]{\par{\noindent \bf Proof of #1:}}{\qed\par}
\renewcommand{\subsubsection}{\@startsection{subsubsection}{3}{0pt}{-12pt}{-5pt}{\normalsize\bf}}
\newcommand{\red}[1]{{{#1}}}
\newtheorem{claim}{Claim}[section]
\newtheorem{remark}[claim]{Remark}
\newtheorem{lemma}[claim]{Lemma}
\newtheorem*{theorem*}{Theorem}
\newtheorem*{lemma*}{Lemma}
\newtheorem{theorem}[claim]{Theorem}
\newtheorem{definition}[claim]{Definition}
\newtheorem{corollary}[claim]{Corollary}
\newtheorem*{definition*}{Definition}
\newcommand{\R}{{\bf R}}
\newcommand{\bX}{{\bf X}}
\newcommand{\bx}{{\bf x}}
\newcommand{\by}{{\bf y}}
\newcommand{\bz}{{\bf z}}
\newcommand{\bZ}{{\bf Z}}
\newcommand{\dist}{\mathrm{dist}}
\newcommand{\Inf}{\mathsf{Inf}}
\newcommand{\E}{\mathop{\mathbf{E}}}
\newcommand{\Var}{\mathrm{Var}}
\newcommand{\eps}{\epsilon}
\newcommand{\sgn}{\mathrm{sgn}}
\newcommand{\poly}{\mathrm{poly}}
\newcommand{\res}{\upharpoonright}
\newcommand{\Has}[1]{\mathsf{Has}_{#1}}
\newcommand{\coords}{\mathcal{S}}
\newcommand{\othercoords}{\mathcal{T}}
\newcommand{\Dict}{\mathsf{Dict}}
\newcommand{\dictset}{\mathcal{D}}
\newcommand{\junta}{\mathcal{J}}
\newcommand{\infsamp}{\mathcal{X}}
\newcommand{\Unif}{\mathsf{Unif}}
\newcommand{\favg}{f_{\mathsf{avg},\coords}}
\newcommand{\gavg}{g_{\mathsf{avg},\coords'}}
\newcommand{\fsmooth}{f_{\mathsf{smooth}}}
\newcommand{\fsmoothavg}{f_{\mathsf{smooth},\mathsf{avg},\coords'}}
\newcommand{\ffinal}{f_{\mathsf{smooth},\mathsf{avg},\tilde \coords'}}
\newcommand{\randeta}{\mathbf{\eta}} 
\begin{document}
 \title{Junta-Correlation is Testable}
\author{Anindya De\thanks{Supported by NSF grant CCF-1814706}\\ 
University of Pennsylvania \\{\tt anindyad@cis.upenn.edu} \and Elchanan Mossel\thanks{Supported by ONR grant N00014-16-1-2227   and 
NSF grant CCF 1320105.} \\ MIT \\ {\tt elmos@mit.edu} \and Joe Neeman\thanks{Supported by the Alfred P.\ Sloan Foundation} \\ UT Austin \\ {\tt jneeman@math.utexas.edu}}
 \maketitle

 \begin{abstract}
The problem of tolerant junta testing is a natural and challenging problem which asks if the property of a function having some specified correlation with a $k$-Junta is testable. 
In this paper we give an affirmative answer to this question: 
We show that given distance parameters $\frac{1}{2} >c_u>c_{\ell} \ge 0$, there is a  tester
which given oracle access to $f:\{-1,1\}^n \rightarrow \{-1,1\}$, with query complexity $ 2^k \cdot \mathsf{poly}(k,1/|c_u-c_{\ell}|)$ and distinguishes between the following cases: 
\begin{enumerate}
\item The distance of $f$ from any $k$-junta is at least $c_u$;
\item There is a $k$-junta $g$ which has distance at most $c_\ell$ from $f$. 
\end{enumerate}
This is the first non-trivial tester (i.e., query complexity is independent of $n$) which works for all $1/2 > c_u > c_\ell \ge 0$. 
The best previously known results by Blais \emph{et~ al.}, required $c_u \ge 16 c_\ell$.   In fact, with the same query complexity, we accomplish the stronger goal of identifying the most correlated $k$-junta, up to permutations of the coordinates. 

We can further improve the query complexity to $\mathsf{poly}(k, 1/|c_u-c_{\ell}|)$ for the (weaker) task of distinguishing between the following cases: 
\begin{enumerate}
\item The distance of $f$ from any $k'$-junta is at least $c_u$. 
\item There is a $k$-junta $g$ which is at a distance at most $c_\ell$ from $f$. 
\end{enumerate}
Here $k'=O(k^2/|c_u-c_\ell|^2)$. 

Our main tools are Fourier analysis based algorithms that simulate oracle access to influential coordinates of functions.

\end{abstract}

\newpage
\section{Introduction}~\label{sec:intro}
 Juntas are a fundamental class of functions in Boolean function analysis. A function $f: \{-1,1\}^n \rightarrow \{-1,1\}$ is said to be a $k$-junta if there are some $k$-coordinates $i_1, \ldots, i_k \in [n]$ such that $f(x)$ only depends on $x_{i_1}, \ldots, x_{i_k}$. In particular, special attention has been devoted to the problem of testing juntas. 
 
 We recall that a property testing algorithm for a class of functions $\mathcal{C}$  is an algorithm which  given oracle access to an $f:\{-1,1\}^n \rightarrow \{-1,1\}$ and a distance parameter $\epsilon>0$, satisfies 
 \begin{enumerate}
 \item If $f \in \mathcal{C}$, then the algorithm accepts with probability at least $2/3$; 
 \item If $\mathsf{dist}(f,g) \ge \epsilon$ for every $g \in \mathcal{C}$, then the algorithm rejects with probability at least $2/3$. Here $\mathsf{dist}(f,g) = \Pr_{x \in \{-1,1\}^n} [f(x) \not =g (x)]$. 
 \end{enumerate}
The principal measure of the efficiency of the algorithm is its \emph{query complexity}. Also, the precise value of the confidence parameter is irrelevant and $2/3$ can be replaced by any constant $1/2 < c<1$. 

Fischer \emph{et~al.}\cite{FKRSS03} were 
the first to study the problem of testing $k$-juntas 
and showed that $k$-juntas can be tested with query complexity $\tilde{O}(k^2/\epsilon)$. The crucial feature of their algorithm is that the query complexity is independent of the ambient dimension $n$. Since then, there has been a long line of work on testing juntas~\cite{blais2009testing,blais2008improved,servedio2015adaptivity,Chen:2017:SQC,CLSSX18} and it continues to be of interest down to the present day. The flagship result here is that $k$-juntas can be tested with $\tilde{O}(k/\epsilon)$ queries and this is tight~\cite{blais2009testing,Chen:2017:SQC}. While the initial motivation to study this problem came from long-code testing~\cite{belgolsud98, PRS02} (related to PCPs and inapproximability), another strong motivation comes from the \emph{feature selection} problem in machine learning (see, e.g.~\cite{Blum:94, BlumLangley:97}). 

\paragraph*{Tolerant testing} The definition of property tester above requires the algorithm to accept if and only if $f \in \mathcal{C}$. However, for many applications, it is important consider a \emph{noise-tolerant} definition of property testing. In particular, Parnas, Ron and Rubinfeld~\cite{parnas2006tolerant} introduced the following definition of noise tolerant testers. 
\begin{definition}
For constants $1/2>c_u> c_{\ell} \geq 0$ and a function class $\mathcal{C}$, a $(c_u,c_{\ell})$-noise tolerant tester  for $\mathcal{C}$ is an algorithm which given oracle access to a function $f: \{-1,1\}^n \rightarrow \{-1,1\}$ 
\begin{enumerate}
\item accepts with probability at least $2/3$ if 
$\min_{g \in \mathcal{C}} \mathsf{dist}(f,g) \le c_\ell$. 
\item  rejects with probability at least $2/3$ if 
$\min_{g \in \mathcal{C}} \mathsf{dist}(f,g) \ge c_u$. 
\end{enumerate}
\end{definition}
We observe that we restrict $c_u, c_\ell <1/2$. This is because most natural classes $\mathcal{C}$ are closed under complementation -- i.e., if $g \in \mathcal{C}$, then $-g \in \mathcal{C}$. For such a class $\mathcal{C}$ and for any $f$,  $\min_{g \in \mathcal{C}} \mathsf{dist}(f,g) \le 1/2$. 
Further, note that the standard notion of property testing corresponds to a $(\epsilon,0)$-noise tolerant tester.


 The problem of testing juntas becomes quite challenging in the presence of noise. Parnas \emph{et~al.}~\cite{parnas2006tolerant} observed that 
 any tester whose (individual) queries are uniformly distributed are inherently noise tolerant in a very weak sense. In particular, \cite{DLM+:07} used this observation to show that the junta tester of \cite{FKRSS03} is in fact a $(\epsilon, \mathsf{poly}(\epsilon/k))$-noise tolerant tester for $k$-juntas -- note that $c_\ell$ is quite small, namely $\mathsf{poly}(\epsilon/k)$. Later, Chakraborty~\emph{et al.}~\cite{chakraborty2012junto} showed that the tester of Blais~\cite{blais2009testing} yields a $(C\epsilon,\epsilon)$ tester (for some large but fixed $C>1$) with query complexity  
  $\mathsf{exp}(k/\epsilon)$. Recently, there has been a surge of interest in tolerant junta testing. On one hand, Levi and Waingarten showed that there are constants $1/2>\epsilon_1>\epsilon_2>0$ such that any non-adaptive $(\epsilon_1, \epsilon_2)$ tester requires $\tilde{\Omega}(k^2)$ non-adaptive queries. Contrast this with the result of Blais~\cite{blais2009testing} who showed that there is  non-adaptive tester for $k$-juntas with $O(k^{3/2})$ queries when there is no noise. In particular, this shows a gap between testing in the noisy and noiseless case. 

In the opposite (i.e., algorithmic) direction, Blais~\emph{et~al.}~\cite{blais2018tolerant}  proved the following theorem. 
\begin{theorem}~\label{thm:blais2018}
There is an algorithm which for any $\rho \in (0,1)$, $\epsilon \in (0,1)$ and parameter $k \in \mathbb{N}$, is a $(\epsilon, \frac{\rho \epsilon}{16})$-noise tolerant tester for $k$-juntas. The query complexity of the tester is $O\big( \frac{k \log k}{\epsilon \rho (1-\rho)^k}\big)$.
\end{theorem}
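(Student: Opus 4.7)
\begin{proofsketch}
The plan is to analyze a noise-sensitivity-based tester. The algorithm draws $M = O(k\log k / (\epsilon\rho(1-\rho)^k))$ i.i.d.\ pairs $(x^{(i)}, y^{(i)})$ with $x^{(i)}\sim\Unif(\bn)$ and $y^{(i)}$ obtained from $x^{(i)}$ by independently flipping each coordinate with probability $\rho$, queries $f$ at all $2M$ points, and computes the empirical stability $\widehat q := M^{-1}\sum_i \mathbf{1}[f(x^{(i)})=f(y^{(i)})]$. It accepts iff $\widehat q$ lies in an appropriately chosen ``junta-compatible'' window centered near the baseline $(1-\rho)^k$.

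Completeness is a union bound. If $f$ is at distance $\delta = \rho\epsilon/16$ from a $k$-junta $g$ with relevant set $J$ of size $k$, then whenever none of the $k$ coordinates in $J$ flips --- an event of probability $(1-\rho)^k$ --- we have $g(x) = g(y)$. Hence $\Pr[g(x)=g(y)]\geq (1-\rho)^k$, and replacing $g$ by $f$ at both points loses at most $2\delta$ by a union bound, yielding $\Pr[f(x)=f(y)]\geq (1-\rho)^k - \rho\epsilon/8$.

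The main obstacle is soundness: one must show that every $f$ that is $\epsilon$-far from every $k$-junta is separated from this completeness baseline by an $\Omega(\rho\epsilon)$ margin. A one-sided threshold on $\widehat q$ alone is not enough, since noise-sensitive Boolean functions like $\Maj_3$ (majority of three variables) are $\tfrac14$-far from every $2$-junta yet satisfy $\Pr[f(x)=f(y)]\approx 1 - \tfrac{3}{2}\rho > (1-\rho)^2$, violating any monotone ``$\widehat q$ versus junta-distance'' relationship. The natural fix is to work in the Fourier basis and focus on the $k$-coordinate set $J^{\ast}$ maximizing $p^{\ast} := \sum_{S\subseteq J^{\ast}}\widehat f(S)^2$. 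Setting $f_{J^{\ast}}(x_{J^{\ast}}) := \E_{x_{\overline{J^{\ast}}}}[f(x)]$, the inequality $|f_{J^{\ast}}|\leq 1$ gives $\dist(f,\sgn\circ f_{J^{\ast}}) \leq (1-p^{\ast})/2$, so $\epsilon$-farness forces $1-p^{\ast} \geq 2\epsilon$. The delicate step is to convert this Fourier-mass deficit into an $\Omega(\rho\epsilon)$-scale gap in a concrete pair-query statistic, either by augmenting the test with block-level noise-sensitivity measurements on a random partition of $[n]$ (in the spirit of Blais' noiseless tester, detecting more than $k$ influential blocks) or via an inductive argument that peels off one junta coordinate at a time. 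This soundness lemma tying pair-query statistics to junta-distance is the technical heart of the proof.

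The query complexity then follows from Bernstein-type concentration applied to the Bernoulli indicator $\mathbf{1}[f(x)=f(y)]$: its variance is at most $1-(1-\rho)^k$, so resolving the additive $\Theta(\rho\epsilon)$ gap between the completeness and soundness regimes requires $O(1/(\rho\epsilon(1-\rho)^k))$ samples up to a $\log k$ factor that absorbs either a union bound over candidate junta coordinate sets or the randomness of a block-partition preprocessing step used in the soundness analysis.
\end{proofsketch}
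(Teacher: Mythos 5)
First, note that Theorem~\ref{thm:blais2018} is quoted in the paper as an external result of Blais~et~al.\ (\cite{blais2018tolerant}) and is not proved there, so there is no in-paper argument to compare against; what follows assesses your sketch on its own.

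Your completeness calculation is fine, and your Fourier observation ($\eps$-farness forces $1 - p^{\ast} \geq 2\eps$ for the best $k$-set $J^{\ast}$) is correct. The genuine gap, which you yourself flag, is that the soundness argument is never carried out, and the algorithm as stated --- a single empirical stability $\widehat q$ compared against a ``window'' around $(1-\rho)^k$ --- provably cannot work, for a reason slightly stronger than what you give. Completeness only yields a \emph{lower} bound $\Pr[f(x)=f(y)] \geq (1-\rho)^k - \rho\eps/8$; there is no matching upper bound, because the class of $k$-juntas already spans the entire interval of stabilities: a constant ($0$-junta, hence a $k$-junta) has stability $1$, a single dictator has stability $1-\rho$, a $k$-bit parity has stability $(1+(1-2\rho)^k)/2$, etc. Meanwhile your own $\Maj_3$ example, which is $\tfrac14$-far from every $2$-junta, has stability $\approx 1 - \tfrac32\rho$, strictly inside that same range for $k=2$. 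So no acceptance window for the scalar $\widehat q$ can simultaneously accept all near-$k$-juntas and reject $\Maj_3$. This means the fix you gesture at --- running noise-stability tests at the level of random \emph{blocks} in a partition of $[n]$ and counting how many blocks look influential, in the spirit of Blais' partition-based junta tester --- is not an optional refinement but an essential restructuring of the algorithm, and it is precisely that structure which the actual Blais~et~al.\ argument uses. As written, the ``technical heart'' you identify (converting the Fourier deficit $1-p^{\ast} \geq 2\eps$ into a provable gap in a concrete block-level statistic, with the $\log k$ coming from a union bound over blocks) is left entirely undone, so the proposal is a correct diagnosis of the difficulty rather than a proof.
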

Note that for any $C>16$, this yields an $(\epsilon, \epsilon/C)$-tolerant tester for $k$-juntas with $\exp(k)$ query complexity. Thus, it improves on the result of \cite{chakraborty2012junto} who showed the same result for an unspecified large constant $C$. 

To understand the main shortcoming of  \cite{blais2018tolerant}, note that this algorithm does not yield a $(c_u, c_\ell)$ noise tolerant tester once $c_\ell > \frac{1}{32}$ -- e.g, no setting of parameters in the tester of \cite{blais2018tolerant} can yield (say) a $(0.1, 0.05)$ noise-tolerant tester for $k$-juntas. Naturally, one would like to obtain $(c_u, c_\ell)$ testers for any $1/2> c_u > c_\ell$. The main result of this paper  accomplishes  this goal. Below we formalize and state the main results of the paper. 

We will use $\mathcal{J}_{n,k}$ to denote the class of $k$-juntas on $n$ variables.  Also, for a subset $S\subseteq [n]$, we let $\mathcal{J}_{S,k}$ denote the class of $k$-juntas on the variables in $S$. Further, unless indicated otherwise, all expectations are taken over uniformly random elements of $\{-1,1\}^n$ where the ambient dimension $n$ will be clear from the context. Our first result constructs $(c_u, c_\ell)$ testers for any $1/2> c_u > c_\ell$. 
\begin{theorem}~\label{thm:main-tester}
There is an algorithm \textsf{Maximum-correlation-junta} which takes as input parameters $k \in \mathbb{N}$, distance parameter $\epsilon>0$, oracle access to function $f: \{-1,1\}^n \rightarrow \{-1,1\}$  and has the   following guarantee: With probability $2/3$, it outputs a number $\widehat{\mathsf{Corr}}_{f,k}$ such that 
\[
\big| \widehat{\mathsf{Corr}}_{f,k} - \max_{\ell \in \mathcal{J}_{n,k}} \E_{\bx}[\ell(\bx) \cdot f(\bx)] \big| \le \epsilon. 
\]
It also outputs a function $h: \{-1,1\}^k \rightarrow \{-1,1\}$ such that there is a set of coordinates $\mathcal{T} = \{i_1, \ldots, i_k\} \subseteq [n]$ and 
\[
\big|\max_{\ell \in \mathcal{J}_{n,k}} \E_{\bx}[\ell(\bx) \cdot f(\bx)] -   \E_{\bx}[h(\bx_{i_1}, \ldots, \bx_{i_k}) \cdot f(\bx)]\big|  \leq \epsilon. 
\]
The query complexity of the algorithm is $ 2^k \cdot \mathsf{poly}(k,1/\epsilon)$.
\end{theorem}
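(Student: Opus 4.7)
The plan is to split the algorithm into two stages. The first and harder stage, call it \emph{Find-Coordinates}, identifies a set $\mathcal{T} = \{i_1,\dots,i_k\} \subseteq [n]$ of size $k$ for which $\max_{\ell \in \mathcal{J}_{\mathcal{T},k}} \E[\ell(\bx) f(\bx)]$ is within $\eps/2$ of the global maximum $\max_{\ell \in \mathcal{J}_{n,k}} \E[\ell(\bx) f(\bx)]$. The second and easier stage, \emph{Evaluate-Junta}, takes $\mathcal{T}$ and, for each of the $2^k$ settings $x_\mathcal{T} \in \{-1,1\}^k$, estimates the conditional mean $f^{=\mathcal{T}}(x_\mathcal{T}) := \E_{\bx_{\bar{\mathcal{T}}}}[f(x_\mathcal{T}, \bx_{\bar{\mathcal{T}}})]$ to additive error $\eps/4$ using $O(k/\eps^2)$ uniform samples per setting. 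Since the best junta on $\mathcal{T}$ is $\ell^{\star}(x_\mathcal{T}) = \sign(f^{=\mathcal{T}}(x_\mathcal{T}))$ with correlation $\E_{x_\mathcal{T}}|f^{=\mathcal{T}}(x_\mathcal{T})|$, we output $h(x_\mathcal{T}) = \sign(\widetilde{f}(x_\mathcal{T}))$ and $\widehat{\mathsf{Corr}}_{f,k} = 2^{-k}\sum_{x_\mathcal{T}} |\widetilde{f}(x_\mathcal{T})|$, where $\widetilde{f}$ is the estimator for $f^{=\mathcal{T}}$. A Hoeffding bound plus a union bound over the $2^k$ settings give additive error $\le \eps/2$ and query cost $2^k \cdot \mathsf{poly}(k/\eps)$, matching the claimed budget.

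The nontrivial work is the first stage, which must run in $\mathsf{poly}(k/\eps)$ queries. The key Fourier observation is that $\sum_{T \subseteq S^{\star}} \hat f(T)^2 \geq (\mathsf{Corr}^{\star})^2$, from $\mathsf{Corr}^{\star} = \|f^{=S^{\star}}\|_1 \leq \|f^{=S^{\star}}\|_2$. Assuming $\mathsf{Corr}^{\star} \geq \eps$ (otherwise we may output zero), at least one nonempty $T \subseteq S^{\star}$ has $|\hat f(T)| \geq \eps/2^{k/2}$, so $S^{\star}$ is always touched by a ``heavy'' low-degree Fourier coefficient. To turn this into an algorithm I would use the standard block-partition and localization recipe of Fischer et al.~\cite{FKRSS03}, adapted to the tolerant setting: randomly partition $[n]$ into $M = \mathsf{poly}(k/\eps)$ blocks so that w.h.p.\ the coordinates of $S^{\star}$ land in $k$ distinct blocks; use a block-level influence estimator to flag those $k$ heavy blocks; and pick one arbitrary coordinate from each block as the output $\mathcal{T}$. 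Since the true junta coordinate inside each heavy block is not known, the second-stage evaluator must be modified to fix \emph{all} variables in a heavy block to the same value $z_r$ (rather than fixing only the representative), thereby faithfully simulating oracle access to whichever coordinate in the block is actually relevant. This is precisely the ``simulate oracle access to influential coordinates'' idea flagged in the abstract.

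The main obstacle I foresee is proving that the block-identification step succeeds with $\mathsf{poly}(k/\eps)$ queries when the variables of $S^{\star}$ have low individual influence on $f$, which can happen when $f$ is far from juntas and the relevant Fourier mass is concentrated on large-$|T|$ coefficients whose signal is diluted by the random partition. Handling this likely requires working with \emph{restricted} low-degree block influences of the form $\sum_{T \subseteq B,\,1 \leq |T| \leq k} \hat f(T)^2$, combined with a noise-operator smoothing step that pushes the relevant Fourier weight down to manageable degree. A second subtlety, hinted at above, is that replacing a uniform distribution over a block by a deterministic all-$z_r$ assignment changes the distribution against which $\widehat{\mathsf{Corr}}_{f,k}$ is measured; quantifying this bias and showing it contributes at most $O(\eps)$ of error is the most delicate calculation.
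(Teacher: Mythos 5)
There is a genuine gap in the first stage of your plan, and it is exactly at the point you flagged as the ``main obstacle.'' Your Fourier observation only gives $\sum_{T \subseteq S^\star} \hat f(T)^2 \ge (\mathsf{Corr}^\star)^2$, which is a statement about the aggregate Fourier mass living inside $S^\star$; it does not lower bound the individual low-degree influences $\Inf_j^{\le k}(f)$ for $j \in S^\star$. Indeed those can genuinely be $o(1)$, and no amount of smoothing or block-level aggregation will make a coordinate detectable if its influence in $f$ is tiny. The paper resolves this by an observation in the opposite direction (Claim~\ref{clm:correlation1}): if $\Inf_j^{\le k}(f) < \tau^2/k^2$ then one can simply \emph{drop} coordinate $j$ from the junta (replace $g$ by $\frac12(g(x)+g(x^{\oplus j}))$ and re-round), losing only $\sqrt{\Inf_j^{\le k}(f)} \le \tau/k$ in correlation; iterating over at most $k$ coordinates costs $\tau$ total. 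So you never need to \emph{find} low-influence coordinates of $S^\star$ — you only need to find all coordinates whose $\Inf_j^{\le k}(f)$ exceeds $\tau^2/k^2$, of which there are $\poly(k/\tau)$ by a total-influence bound. Your proposal has the burden of proof exactly backwards: you try to hunt for hard-to-detect coordinates, whereas the paper shows they are safe to ignore.

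The second part of your first stage also has a real problem that you cannot sweep under the rug. Even granting that you have identified $k$ heavy blocks $B_1,\dots,B_k$, fixing \emph{all} variables of $B_r$ to a common value $z_r$ computes a different quantity than the conditional mean $f^{=\mathcal T}$: in Fourier terms, the restricted function picks up every $\hat f(T)$ with $T \subseteq B_1\cup\cdots\cup B_k$, not just the $T \subseteq \mathcal T$, and there is no generic $O(\epsilon)$ bound on this excess mass. The paper avoids the issue entirely by constructing approximate dictator oracles in the sense of Definition~\ref{def:approximate-oracle}, via a combination of random restrictions (Claim~\ref{clm:restrict}, which turns a high-influence coordinate into one with a polynomially large degree-1 Fourier coefficient with inverse-polynomial probability), the H{\aa}stad operator $\Has{\eta}$ (Lemma~\ref{lem:isolate-variable}, which isolates a single coordinate once its degree-1 coefficient is large), and a dictatorship test to filter out bad candidates. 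Once such oracles are in hand, the second stage (your Evaluate-Junta, the paper's \textsf{Find-best-fit}) is essentially as you describe and is the easy part. The missing ingredients in your proposal are precisely Claim~\ref{clm:correlation1} and the restriction-plus-H{\aa}stad-operator machinery of Section~\ref{sec:oracle}, and without them the block-partition route does not close.
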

Note that the above algorithm is doing something stronger than ``merely"
computing correlation of $f$ with $k$-juntas -- in fact, the algorithm also
outputs the a  $k$-junta that is most correlated up to $\eps$. Note that the
algorithm cannot identify the actual subset of the coordinates of $f$ that maps
to those of the junta, as an standard information theory argument shows that this
requires the number of queries to depend on $n$, even without noise. Further,
for the task of approximately finding the most correlated $k$-junta, our query complexity is essentially optimal,
since even giving the description of the most correlated $k$-junta
takes $2^k$ bits.  An immediate
corollary of Theorem~\ref{thm:main-tester} is the existence of a noise tolerant
tester for $k$-juntas.

\begin{corollary}~\label{corr:testing-main}
For any constant $\frac{1}{2} >c_u>c_\ell\ge 0$ and $k \in \mathbb{N}$, there is $(c_u, c_\ell)$-noise tolerant tester for $k$-juntas with query complexity 
$2^k \cdot \mathsf{poly}(k, 1/|c_u-c_\ell|)$. 
\end{corollary}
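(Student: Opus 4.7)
\begin{proofsketch}
The plan is to translate the distance-based decision problem into a correlation-based estimation problem and then invoke Theorem~\ref{thm:main-tester}.

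For any $f, g : \{-1,1\}^n \to \{-1,1\}$ one has the identity
\[
\E_{\bx}[f(\bx) g(\bx)] \;=\; 1 - 2\,\mathsf{dist}(f,g).
\]
Taking maximum over $g \in \mathcal{J}_{n,k}$ on both sides, we obtain
\[
M(f) := \max_{g \in \mathcal{J}_{n,k}} \E_{\bx}[f(\bx) g(\bx)] \;=\; 1 - 2 \min_{g \in \mathcal{J}_{n,k}} \mathsf{dist}(f,g).
\]
Thus the ``close'' case ($\min_g \mathsf{dist}(f,g) \le c_\ell$) is equivalent to $M(f) \ge 1 - 2 c_\ell$, while the ``far'' case ($\min_g \mathsf{dist}(f,g) \ge c_u$) is equivalent to $M(f) \le 1 - 2 c_u$. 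These two regimes are separated by a gap of $2(c_u - c_\ell)$ in the value of $M(f)$.

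Given this reformulation, I would set $\epsilon := (c_u - c_\ell)/2$ and run the algorithm \textsf{Maximum-correlation-junta} of Theorem~\ref{thm:main-tester} on $f$ with parameters $k$ and $\epsilon$, obtaining an estimate $\widehat{\mathsf{Corr}}_{f,k}$ that satisfies $|\widehat{\mathsf{Corr}}_{f,k} - M(f)| \le \epsilon$ with probability at least $2/3$. The tester then accepts iff $\widehat{\mathsf{Corr}}_{f,k} \ge 1 - (c_u + c_\ell)$ (the midpoint of the two target values $1 - 2c_\ell$ and $1 - 2 c_u$). In the close case, $M(f) \ge 1 - 2 c_\ell$ and so $\widehat{\mathsf{Corr}}_{f,k} \ge 1 - 2 c_\ell - \epsilon = 1 - (c_u + c_\ell) + (c_u - c_\ell)/2 > 1 - (c_u + c_\ell)$, so the tester accepts. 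In the far case, $M(f) \le 1 - 2 c_u$ and so $\widehat{\mathsf{Corr}}_{f,k} \le 1 - 2 c_u + \epsilon = 1 - (c_u + c_\ell) - (c_u - c_\ell)/2 < 1 - (c_u + c_\ell)$, so the tester rejects.

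The query complexity is that of Theorem~\ref{thm:main-tester} with $\epsilon = (c_u - c_\ell)/2$, namely $2^k \cdot \mathsf{poly}(k, 1/|c_u - c_\ell|)$. There is essentially no obstacle here: the entire corollary is a black-box reduction whose only ingredient beyond the theorem statement is the elementary correlation–distance identity for $\pm 1$-valued functions; the only thing to be careful about is choosing $\epsilon$ strictly smaller than the gap $c_u - c_\ell$ so that a noisy estimate still lies on the correct side of the chosen threshold.
\end{proofsketch}
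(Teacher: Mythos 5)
Your proposal is correct and matches the paper's proof essentially line for line: you set $\epsilon = (c_u - c_\ell)/2$, run \textsf{Maximum-correlation-junta}, and threshold at $1-(c_u+c_\ell)$, which is exactly the paper's $\mathsf{Thr}=1-2c_\ell-2\epsilon$. The only cosmetic difference is that you state the correlation--distance identity $\E[fg]=1-2\,\mathsf{dist}(f,g)$ explicitly, whereas the paper uses it implicitly.
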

\begin{proof}
Let $\epsilon = \frac{c_u-c_\ell}{2}$. Run the algorithm \textsf{Maximum-correlation-junta} with distance parameter $\epsilon$. Let the output be $\widehat{\mathsf{Corr}}_{f,k}$. Set $\mathsf{Thr} = 1-2c_\ell - 2 \epsilon = 1 - 2c_u + 2\epsilon$.  
The rest of the algorithm is 
\begin{enumerate}
\item If $\widehat{\mathsf{Corr}}_{f,k} \ge \mathsf{Thr}$, then the algorithm accepts.
\item The algorithm rejects otherwise. 
\end{enumerate}
Note that if there is a $k$-junta $g$ such that $\mathsf{dist}(f,g) \le c_\ell$, then $\max_{g \in \mathcal{J}_{n,k}} \E_{\bx}[g(\bx) \cdot f(\bx)] \ge 1-2c_\ell$. Thus, $\widehat{\mathsf{Corr}}_{f,k} \ge 1-2c_\ell-\epsilon$ (w.p. $2/3$),
and so the algorithm will accept.

On the other hand, if $\dist(f, g) \ge c_u$ for every $g \in \junta_{n,k}$ then
$\max_{g \in \junta_{n,k}} \E_{\bx}[g(\bx)f(\bx)] \le 1 - 2 c_u = \mathsf{Thr} - 2\epsilon$, meaning
that the algorithm will reject with probability at least $2/3$.
\end{proof}
We also remark here that the algorithm \textsf{Maximum-correlation-junta} can be modified in a straightforward manner to yield a noise tolerant tester against any \emph{subclass of juntas}, including any \emph{specific junta} -- e.g., for any $1/2>c_u>c_\ell \ge 0$, we can obtain a $(c_u, c_\ell)$-tester for $k$-linear functions~\cite{blais2012tight, Saglam2018} with query complexity $\poly(k,1/|c_u-c_\ell|)$. We leave the proof to the interested reader.

Finally, we can also improve the query complexity to have a polynomial dependence on $k$ at the cost of achieving a weaker guarantee. 
\begin{theorem}~\label{thm:main-tester-gap}
There is an algorithm \textsf{Maximum-correlation-gap-junta} which takes as
input parameters $k \in \mathbb{N}$, distance parameter $\epsilon>0$, oracle
access to a function $f: \{-1,1\}^n \rightarrow \{-1,1\}$ and has the following
guarantee:  With probability $2/3$, it outputs a number
$\widehat{\mathsf{Corr}}_{f,\mathsf{gap},k}$ satisfying
\[
    \max_{g \in \junta_{n,k}} \E_{\bx}[g(\bx) f(\bx)] - \epsilon \le \widehat{\mathsf{Corr}}_{f,\mathsf{gap},k}
    \le \max_{g \in \junta_{n,k^2/\epsilon^2}} \E[g(\bx)f(\bx)] + \epsilon.
\]
The query complexity of the algorithm is $\poly(k, 1/\epsilon)$.
\end{theorem}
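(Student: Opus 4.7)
The plan is to combine a Fourier-analytic structural reduction with an algorithmic estimation step, avoiding the $2^k$ bottleneck from Theorem \ref{thm:main-tester}. The key definition is the set of coordinates with non-trivial \emph{low-degree} influence: with $\tau := \epsilon^2/k$, set
\[
\coords' := \{i \in [n] : \Inf_i^{\le k}(f) \ge \tau\}.
\]
Since $\sum_i \Inf_i^{\le k}(f) = \sum_{|T| \le k} |T|\, \hat f(T)^2 \le k \cdot \sum_T \hat f(T)^2 = k$, we have $|\coords'| \le k/\tau = k^2/\epsilon^2$. Write $f_{\coords'}(y) := \E[f(\bx) \mid \bx_{\coords'} = y]$ for the projection of $f$ onto $\coords'$, and note that $\E[|f_{\coords'}|] = \max_{g \in \junta_{\coords', |\coords'|}} \E[gf]$.

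The structural claim I will prove is
\[
\max_{g \in \junta_{n,k}} \E[gf] - \epsilon \;\le\; \E[|f_{\coords'}|] \;\le\; \max_{g \in \junta_{n,k^2/\epsilon^2}} \E[gf].
\]
The right inequality is immediate since $\sgn(f_{\coords'})$ is a junta on $\coords'$ and $|\coords'| \le k^2/\epsilon^2$. For the left, fix any $k$-junta $g$ on $S$. Because $g$ has Fourier degree at most $k$, I write $\E[gf] = \sum_{T \subseteq S} \hat g(T) \hat f(T)$ and split on whether $T \subseteq \coords'$. The ``good'' piece, $\sum_{T \subseteq S \cap \coords'} \hat g(T) \hat f(T)$, equals $\E[\tilde g\, f_{S \cap \coords'}]$, where $\tilde g(x_{S \cap \coords'}) := \E_{x_{S \setminus \coords'}}[g(x)]$ satisfies $|\tilde g| \le 1$; hence this piece is at most $\E[|f_{S \cap \coords'}|] \le \E[|f_{\coords'}|]$ (adding coordinates only increases $\E[|f_U|]$, by Jensen). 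The ``bad'' piece is bounded by Cauchy--Schwarz: $\sum \hat g(T)^2 \le 1$, and
\[
\sum_{T \subseteq S,\, T \cap (S \setminus \coords') \neq \emptyset} \hat f(T)^2 \;\le\; \sum_{i \in S \setminus \coords'} \Inf_i^{\le k}(f) \;\le\; k\tau \;=\; \epsilon^2,
\]
so this piece is at most $\epsilon$.

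Given the reduction, it suffices to estimate $\E[|f_{\coords'}|]$ to $O(\epsilon)$ accuracy using $\poly(k, 1/\epsilon)$ queries. Following the ``simulate oracle access to influential coordinates'' paradigm highlighted in the paper, the plan is: (i) partition $[n]$ into $B = \poly(k/\epsilon)$ random blocks so that coordinates of $\coords'$ are isolated with high probability; (ii) estimate each block's ``block-level-$\le k$ Fourier weight'' and retain the at most $O(k^2/\epsilon^2)$ blocks whose weight exceeds $\tau$ (call this set $\mathcal B$); (iii) estimate $\E[|f_{\coords'}|]$ by nested sampling---draw $y$ uniformly over $\{\pm 1\}^{|\mathcal B|}$, estimate $\E[f \mid \text{bits on } \mathcal B = y]$ by averaging $f$ over random completions of $y$, and finally average the absolute value. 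Hoeffding bounds give $\poly(1/\epsilon)$ sample complexity at each stage.

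The main obstacle is the algorithmic step: because an information-theoretic argument (remarked on in the paper) prevents identifying individual elements of $\coords'$ with queries independent of $n$, the whole analysis must work at the block level. Specifically, I must argue that conditioning on block-level values is a faithful surrogate for conditioning on $\coords'$ itself. The point is that inside each heavy block only the distinguished element of $\coords'$ carries significant level-$\le k$ Fourier weight, so the block-conditioned expectation of $f$ differs from $f_{\coords'}$ by an error governed by the same $\sqrt{k\tau} = \epsilon$ Cauchy--Schwarz estimate from the structural step. Once this faithfulness lemma is established, the remainder is standard concentration.
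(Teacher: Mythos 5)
Your structural reduction is correct and, in fact, a bit cleaner than the one in the paper. Defining $\coords' = \{i: \Inf_i^{\le k}(f) \ge \epsilon^2/k\}$ directly, bounding $|\coords'| \le k^2/\epsilon^2$ by summing low-degree influences, and then splitting $\E[gf]$ into the piece supported inside $\coords'$ (bounded by $\E[|f_{\coords'}|]$ via $|\tilde g|\le 1$ and Jensen) and the piece with a leg outside $\coords'$ (bounded by $\sqrt{\sum_{i\in S\setminus\coords'}\Inf_i^{\le k}(f)} \le \epsilon$ by Cauchy--Schwarz) is a valid chain of inequalities. The paper instead passes through a noise operator $T_{1-\epsilon/(2k)}$ to control total influence; your low-degree version accomplishes the same thing without the smoothing step.

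However, the algorithmic step---estimating $\E[|f_{\coords'}|]$ with $\poly(k,1/\epsilon)$ queries---is the actual substance of the theorem, and your sketch does not establish it. The ``faithfulness lemma'' you postpone (that block-conditioning is a surrogate for conditioning on $\coords'$) is false as stated. Here is a counterexample: let $f=\mathrm{Maj}(x_1,\dots,x_m)$ with $m$ large and odd, and put all of $\{1,\dots,m\}$ in a single block $b$. Every $\Inf_i^{\le k}(f)$ is tiny for $k\ll m$, so $\coords'=\emptyset$ and $\E[|f_{\coords'}|]=|\E[f]|=0$, consistent with $\max_g \E[gf]$ being small. Yet $\sum_{T\cap b\ne\emptyset,\,|T|\le k}\hat f(T)^2=\sum_{0<|T|\le k}\hat f(T)^2$ is a constant, so block $b$ passes any reasonable heaviness threshold, and fixing all bits of $b$ to a common value $y_b$ gives $\E_z[f\mid\text{block } b=y_b]=y_b$, whence your nested-sampling estimate is $1$. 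The root cause is that the block-conditioned distribution is not uniform on $\{\pm1\}^n$: it forces equality among the many low-influence bits sharing a block with a relevant one, and those correlations can conspire. This is precisely the obstruction that limits the earlier block-based tolerant testers (it is the source of the $c_u\ge 16 c_\ell$ restriction in~\cite{blais2018tolerant}).

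The paper circumvents this with a genuinely different mechanism: it constructs \emph{coordinate oracles}---black boxes computing $x\mapsto x_i$ for $i\in\coords$, built via the H{\aa}stad operator after a random restriction---so the algorithm can evaluate $x_{\coords}$ at a genuinely uniform $x$ without ever learning the index $i$. Given such oracles, $\favg(x)=\E[f(\by)\mid\by_\coords=x_\coords]$ is computed by running a Bonami--Beckner Markov chain that resamples $[n]\setminus\coords$ uniformly while rejecting any step that disturbs an oracle value (Lemma~\ref{lem:averaging}), influences are estimated via influence-testing samples (Lemma~\ref{lem:threshold-influences}), and the final $\E[|\ffinal|]$ is estimated by the same averaging primitive. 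This oracle-plus-Markov-chain machinery, not a blocking scheme, is what makes the $\poly(k,1/\epsilon)$ query complexity achievable; it is the idea missing from your proposal.
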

Analogous to  Corollary~\ref{corr:testing-main}, we get the following corollary by applying Theorem~\ref{thm:main-tester-gap}. 
\begin{corollary}~\label{corr:tester-gap}
For any constant $\frac{1}{2} >c_u>c_\ell\ge 0$ and $k \in \mathbb{N}$, there is an algorithm with query complexity 
$\mathsf{poly}(k, 1/|c_u-c_\ell|)$ which with probability $2/3$ can distinguish between the following 
cases: 
\begin{enumerate}
\item $\min_{g \in \mathcal{J}_{n,k}} \mathsf{dist}(g,f) \le c_\ell$. 
\item $\min_{g \in \mathcal{J}_{n,k'}}\mathsf{dist}(g,f) \ge c_u$, where $k' = k^2/(c_u - c_\ell)^2$.
\end{enumerate}
\end{corollary}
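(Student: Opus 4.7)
The plan is to mirror the proof of Corollary~\ref{corr:testing-main}, replacing the invocation of Theorem~\ref{thm:main-tester} with an invocation of Theorem~\ref{thm:main-tester-gap}. Concretely, I would set $\epsilon$ to be a small constant multiple of $c_u - c_\ell$ (say $\epsilon = (c_u - c_\ell)/4$), run \textsf{Maximum-correlation-gap-junta} with parameters $k$ and $\epsilon$ to obtain an estimate $\widehat{\mathsf{Corr}}_{f,\mathsf{gap},k}$, and accept iff $\widehat{\mathsf{Corr}}_{f,\mathsf{gap},k} \ge \mathsf{Thr}$, where $\mathsf{Thr} := 1 - c_u - c_\ell$ is the midpoint of the two regimes.

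In the first (close) case, any $k$-junta $g$ with $\mathsf{dist}(f,g) \le c_\ell$ witnesses $\max_{g \in \mathcal{J}_{n,k}} \E[g f] \ge 1 - 2c_\ell$, so the lower-bound half of Theorem~\ref{thm:main-tester-gap} gives $\widehat{\mathsf{Corr}}_{f,\mathsf{gap},k} \ge 1 - 2c_\ell - \epsilon \ge \mathsf{Thr}$ and the tester accepts. In the second (far) case, every $g \in \mathcal{J}_{n,k'}$ has $\mathsf{dist}(f,g) \ge c_u$, so $\max_{g \in \mathcal{J}_{n,k'}} \E[g f] \le 1 - 2c_u$, and the upper-bound half of Theorem~\ref{thm:main-tester-gap} then yields $\widehat{\mathsf{Corr}}_{f,\mathsf{gap},k} \le 1 - 2c_u + \epsilon \le \mathsf{Thr}$, so the tester rejects.

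The only mildly subtle point is aligning the junta-class size $k^2/\epsilon^2$ delivered by Theorem~\ref{thm:main-tester-gap} with the $k'$ appearing in the corollary: because we need $\epsilon$ strictly smaller than $c_u - c_\ell$ to open a gap around $\mathsf{Thr}$, the value $k^2/\epsilon^2$ is inevitably a constant factor larger than $k^2/(c_u - c_\ell)^2$. Consequently the ``$k' = k^2/(c_u-c_\ell)^2$'' in the statement has to be interpreted with an implicit absolute constant, matching the $O(\cdot)$ bound already stated in the abstract; with $\epsilon = (c_u - c_\ell)/4$ one may take $k' = 16 k^2/(c_u - c_\ell)^2$, and the inclusion $\mathcal{J}_{n, k^2/\epsilon^2} \subseteq \mathcal{J}_{n, k'}$ used in the far-case bound is then automatic. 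The query complexity is inherited directly from Theorem~\ref{thm:main-tester-gap} as $\poly(k, 1/\epsilon) = \poly(k, 1/|c_u - c_\ell|)$, and the $2/3$ success probability comes along for free. I do not expect any genuine technical obstacle here beyond this constant-factor bookkeeping.
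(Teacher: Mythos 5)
Your proof is correct and follows exactly the route the paper intends (the paper omits the proof, saying only that it is ``analogous to Corollary~\ref{corr:testing-main}''): set $\epsilon$ to a fraction of $c_u-c_\ell$, run \textsf{Maximum-correlation-gap-junta}, and threshold the output at the midpoint $1-c_u-c_\ell$. You also correctly observe the small inconsistency between the corollary's bare $k'=k^2/(c_u-c_\ell)^2$ and the abstract's $k'=O(k^2/|c_u-c_\ell|^2)$: since the separation argument forces $\epsilon<c_u-c_\ell$, one needs $k'\ge k^2/\epsilon^2$ to be a constant factor larger than $k^2/(c_u-c_\ell)^2$, matching the $O(\cdot)$ form.
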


We remark that~\cite{blais2018tolerant} contains a result along the same lines
as Corollary~\ref{corr:tester-gap}, but with $k' = 4k$ and $c_u = 16 c_\ell$. That is,
compared to the result of~\cite{blais2018tolerant}, Corollary~\ref{corr:tester-gap}
has a worse $k'$, but allows for arbitrarily good noise tolerance.
 
\subsection{Overview of techniques}~\label{sec:overview}
One of our main contributions is a {\em Fourier based algorithm}
to simulate {\em oracles} to interesting coordinates of $f$. 
In particular, the first step in both \textsf{Maximum-correlation-junta}
and \textsf{Maximum-correlation-gap-junta} is to obtain oracle access
to the functions $x \mapsto x_\ell$ for all $\ell$ with large
low-degree influence in $f$. This idea previously appeared
(but in a real-valued setting) in~\cite{DMN:18}, and it may
be of independent interest. In particular, it is a substantial
departure from previous approaches to tolerant junta testing.

\subsubsection{Sketch of algorithm  \textsf{Maximum-correlation-junta}.}
We begin by giving the high level overview of the algorithm \textsf{Maximum-correlation-junta} (from Theorem~\ref{thm:main-tester}). 
Let $f: \{-1,1\}^n \rightarrow \{-1,1\}$ and let $g:\{-1,1\}^n \rightarrow \{-1,1\}$ be a maximally correlated $k$-junta -- for this description, assume that  $\mathbf{E}_{\bx} [f(\bx) \cdot g(\bx)] =\Omega(1)$. The main steps in our algorithm is as follows. 

\begin{enumerate}
\item First, we show that up to a small loss in correlation, we may assume that
    every variable in $g$ has at least $k^{-\Theta(1)}$ influence in $f$ -- see
    Claim~\ref{clm:correlation1} for the precise quantitative parameters.
    We call these the ``interesting variables,'' and our first main goal
    is to obtain oracle access to them.

\item Suppose that $x_\ell$ is an interesting variable. We show that
    by randomly (in a carefully chosen sense) restricting certain variables
    of $f$, with probability $k^{-O(1)}$ we obtain a function (call it $f_{\res \rho}$)
    such that $|\widehat{f_{\res \rho}}(\ell)| \ge k^{-O(1)}$.
    (See Claim~\ref{clm:restrict} for the precise details.)
    In other words, influential coordinates of $f$ end up with large Fourier coefficients
    under random restrictions.
    \label{step:restrict}

\item Assuming that $|\widehat{f_{\res \rho}}(\ell)| \ge k^{-O(1)}$, we construct a (randomized)
    operator on the function $f_{\res \rho}$ which, with probabililty $k^{-O(1)}$, gives us
    an oracle to the variable $x_\ell$.
    This operator is a variant of the operator used by
    H{\aa}stad~\cite{Hastad}) in the context of dictatorship testing and in
    turn uses a modified version of the standard Bonami-Beckner noise. The
    details of this are in Section~\ref{sec:single-oracle}. 
    \label{step:Hastad}

\item Having obtained an oracle for one particular variable $x_\ell$, we can
    just repeat steps~\ref{step:restrict} and~\ref{step:Hastad} $k^{\Theta(1)}$ times to obtain a set
    $\coords$ of (oracles to) variables that contains all of the interesting variables.
    This reduces the original problem (estimating $\max_{g \in \junta_{n,k}} \E_{\bx}[g(\bx)f(\bx)]$)
    to the problem of estimating $\max_{g \in \junta_{\coords,k}} \E_{\bx}[g(\bx)f(\bx)]$.
    We do this via a simple sampling based algorithm in
    \textsf{Find-best-fit}. The query complexity of this routine is  $2^k \cdot
    \mathsf{poly}(k)$ and is the bottleneck for
    \textsf{Maximum-correlation-junta}.  
\end{enumerate}  

\subsubsection{Sketch of Algorithm \textsf{Maximum-correlation-gap-junta}}
The difference in the proof of Theorem~\ref{thm:main-tester-gap} vis-a-vis Theorem~\ref{thm:main-tester} lies in Step~4 of the above overview. Namely, having obtained the set $\mathcal{S}$, our goal is find smaller subset $\mathcal{S'} \subset \mathcal{S}$ of size $O(k^2/\eps^2)$ of the variables that achieves the same correlation and moreover find this correlation. 
\begin{enumerate}
\item The novel idea of the proof is to find a polynomial algorithm that is able to compute the function 
\[
f_{\mathsf{avg}, \mathcal{S}}(x) :=
\mathbf{E}_{\by \in \{-1,1\}^{[n] \setminus \mathcal{S}}} [f(x,\by)]
\]
{\red{while only having oracle access to the variables in $\mathcal{S}$}}.
The details of this algorithm are in Section~\ref{sec:averaging}, and we will
give an outline shortly.
    Note that $f_{\mathsf{avg}, \mathcal{S}}$ depends only on the variables in $\coords$ (of which there are $\poly(k)$), and among all such
    functions it has the highest correlation with $f$. Further,
    $\max_{g \in \junta_{\coords,k}} \E_{\bx}[g(\bx)f(\bx)] = \max_{g \in \junta_{\coords,k}} \E_{\bx}[g(\bx)f_{\mathsf{avg}, \mathcal{S}}(\bx)]$.

\item
    We replace $f_{\mathsf{avg}, \mathcal{S}}$ by $T_{\rho} f_{\mathsf{avg}, \mathcal{S}}$ (for $\rho = 1 - O(\eps/k)$), incurring an $O(\epsilon)$ error in the correlation.
    The advantage of $T_{\rho} f_{\mathsf{avg}, \mathcal{S}}$ over $f_{\mathsf{avg}, \mathcal{S}}$ is that it has at most $O(k^2/\eps^2)$
    high-influence (meaning, influence $\Omega(\eps/k)$) variables,
    and that restricting our attention to juntas on these variables
    only incurs another $O(\epsilon)$ error in the correlation.
    It is also easy to produce an oracle to $T_{\rho} f_{\mathsf{avg}, \mathcal{S}}$ from $f_{\mathsf{avg}, \mathcal{S}}$ with polynomially many samples. 

\item
    Our next step is to estimate the influence (in the function $T_\rho f_{\mathsf{avg}, \mathcal{S}}$) of all the variables in $\mathcal{S}$.
    We can do so by sampling correlated pairs $(x,y)$ repeatedly until we obtain
    pairs that differ in one coordinate and then checking the effect on $T_{\rho} f_{\mathsf{avg}, \mathcal{S}}$. 
    Having estimated the influences, we let $\coords' \subseteq \coords$ be
    the set of high-influence variables.
    The problem is reduced to that of estimating $\max_{g \in \junta_{\coords',k}} \E_{\bx}[g(\bx)f(\bx)]$.

\item 
    The final output of the algorithm is an estimate for the correlation of $f$ with the best function
    depending only on variables of $\coords'$. This is just
    \[
        \E_\bx[|f_{\mathsf{avg},\coords'}(\bx)|],
    \]
    which can be estimated using the same averaging procedure that we mentioned in the first step.
\end{enumerate}

It remains to explain how to carry out the averaging procedure needed for the first and last steps
of the outline above: how do we estimate $\E_{\by \in \{\pm 1\}^{[n] \setminus \coords}} [f(x, \by)]$,
given only oracle access to the variables in $\coords$?
The basic idea is to perform a random walk on the the subset of $y$'s that agree with $x$ 
on all the elements of $\mathcal{S}$. We let $y^{(0)} = x$. Given $y^{(i)}$ we sample $y^{(i+1)}$ to be a noisy version of $y^{(i)}$, where each coordinate is flipping with probability about $1/|\mathcal{S}|$. 
We {\em accept} $y^{(i+1)}$ if the value of all oracle functions in $\mathcal{S}$ is identical for $y^{(i+1)}$ and 
$y^{(i)}$. If we reject the current proposal of $y^{(i+1)}$ we try again an independent noisy $y^{(i+1)}$.

Thus we effectively perform a random walk on the noisy hyper-cube on the
coordinates in $[n] \setminus \mathcal{S}$. The spectral gap of the random walk
is inverse polynomial in $k$, and hence by taking $\poly(k)$ steps of this random
walk, we can essentially independently resample those coordinates of $x$
that do not belong to $\coords$. By repeating this, we can estimate $f_{\mathsf{avg},\coords}(x)$.

\section{Preliminaries}~\label{sec:preliminaries}
We begin with the basics of Fourier analysis, in particular the notion of Fourier expansion of functions. 
\begin{definition}~\label{def:Fourier-expansion}
For any subset $S \subseteq [n]$, we define $\chi_S: \{-1,1\}^n \rightarrow \{-1,1\}$ as $\chi_S(x) = \prod_{i \in S} x_i$. Any function $f: \{-1,1\}^n \rightarrow \mathbb{R}$ can be expressed as a linear combination of $\{\chi_S(x)\}_{S \subseteq [n]}$ (as follows): 
\[
f(x) = \sum_{S \subseteq [n]} \widehat{f}(S) \chi_S(x). 
\]
This is referred to as the Fourier expansion of $f$ and the coefficients $\{\widehat{f}(S)\}_{S \subseteq [n]}$ 
are referred to as the Fourier coefficients of $f$. 
\end{definition}
We next define the concept of influence of variables in $f: \{-1,1\}^n \rightarrow \mathbb{R}$ 
\begin{definition}~\label{def:influence}
For any function $f: \{-1,1\}^n \rightarrow \{-1,1\}$ and any $i \in [n]$, $\mathsf{Inf}_i(f) = \Pr_{x \in \{-1,1\}^n} [f(x) \not = f(x^{\oplus i})]$ (where $x^{\oplus i}$ differs from $x$ exactly in the $i^{th}$ position). 
In terms of Fourier coefficients, $\Inf_i(f) = \sum_{S \ni i} \widehat{f}^2(S)$; in the case of
a real-valued function $f: \{-1,1\}^n \rightarrow \mathbb{R}$, we take this latter formula as the definition of $\Inf_i(f)$.

For a number $k \le n$, we let $\mathsf{Inf}_i^{\le k}(f)$ denote the quantity $\mathsf{Inf}_i^{\le k}(f) = \sum_{S \ni i: |S| \le k} \widehat{f}^2(S)$.  We also define the total influence of $f$, denoted by $\mathsf{Inf}(f)$, as $\sum_{S} |S| \widehat{f}^2(S)$. 
\end{definition}

We now define the Bonami-Beckner noise operator on the space of functions on $\{-1,1\}^n$. To do this, we first define a general notion of noise distribution on $\{-1,1\}^n$. 
For $\eta \in [-1,1]^n$, we let $\mathbf{Z}_\eta$ denote the product distribution on $\{-1,1\}^n$ where the expectation of the $i^{th}$ bit is $\eta_i$. 
\begin{definition}
For any $\rho \in [-1,1]$, let $\overline{\rho} \in [-1,1]^n$ denote the vector all of whose coordinates are $\rho$. 
The Bonami-Beckner noise operator (denoted by $T_{\rho}$) operates on $f: \{-1,1\}^n \rightarrow \mathbb{R}$ as 
\[
T_{\rho} f(x) = \mathbf{E}_{\by \sim \mathbf{Z}_{\overline{\rho}}} [f(x \cdot y)]. 
\]
We let $x \cdot y$ denote the coordinate wise product of $x$ and $y$.  
\end{definition}
A standard fact about the operator $T_{\rho}$ is its action on the Fourier expansion of $f$ (see~\cite{ODonnell:book} for details). 
\[
T_{\rho} f(x) = \sum_{S \subseteq [n]} \rho^{|S|} \widehat{f}(S) \chi_S(x). 
\]
\ignore{
\begin{claim}~\label{clm:total-inf}
Let $f: \{-1,1\}^n \rightarrow [-1,1]$. Then, $\mathsf{Inf}(T_{1-\delta} f) \le 1/\delta$. 
\end{claim}
\begin{proof}
It is easy to see that  
$\max_{t \in \mathbb{N}} t \cdot (1-\delta)^t  \le 1/\delta$. 
\[
\mathsf{Inf}(T_{\delta} f)  = \sum_{S} |S| \cdot (1-\delta)^{2|S|} \cdot \widehat{f}^2(S) \le 
 \max_{S} |S| \cdot (1-\delta)^{2|S|} \le 1/\delta. 
\]
\end{proof}}

\subsubsection*{Bonami-Beckner noise operator as a Markov chain}
It will be useful for us to view the Bonami-Beckner noise operator as a Markov chain. We recall the definition of a Markov chain (on a finite set). 
\begin{definition}~\label{def:mchain}
Let $G$ be a finite set and let $P \in \mathbb{R}^{G \times G}$ be a stochastic matrix. 
The random variables (taking values in $G$) $(\bx_i)_{i=1}^T$ are said to follow the Markov chain $\mathcal{M}_P$ (with transition matrix given by $P$) 
 if for any $T \ge j >1$ and any $g_1, \ldots, g_j \in G$, 
\[
\Pr[\bx_j=g_j | \bx_{j-1} = g_{j-1}, \ldots,  \bx_{1} = g_{1}] =\Pr[\bx_j=g_j | \bx_{j-1} = g_{j-1}] = P(g_{j-1}, g_j).  
\]
We refer the reader to the book by Levin and Peres~\cite{levin2017markov} for definitions of standard notions such as ergodicity, aperiodicity and stationary distributions.
\end{definition}
Now, 
consider any $\rho \in [-1,1]$ and define the stochastic matrix $P_{\rho}$ (whose rows and columns are indexed by $\{-1,1\}^n$) such that 
\[
P_{\rho}(x,y) = 
\Pr_{\bz \sim \bZ_{\overline{\rho}}} 
[x \cdot \bz=y]. 
\]
It is easy to see that the matrix $P_{\rho}$ is a symmetric matrix and further, the second largest eigenvalue of the matrix $P_{\rho}$ is at most $\rho$. The matrix $P_{\rho}$ also defines a corresponding Markov chain $\mathcal{M}_{\rho}$ (i.e., the transition matrix of $\mathcal{M}_{\rho}$ is $P_{\rho}$). Markov chains have a certain ``averaging property" which is particularly useful for us and is stated below. We will instantiate it to the Markov chain $\mathcal{M}_{\rho}$ in {Section~\ref{sec:poly-tester} later.} We now state the following result due to Lezaud~\cite{lezaud1998chernoff} (Theorem~1.1 in the paper) which applies to ergodic and reversible Markov chains. Similar results which apply to the special case of random walks on undirected graphs have found many applications in computer science~\cite{gillman1998chernoff, healy2008randomness, vadhan2012pseudorandomness, GLSS18}. 

\begin{lemma}~\label{lem:bias}
    For $\rho \in (-1, 1)$, let $\bx^{(1)}, \bx^{(2)}, \dots$ follow the Bonami-Beckner Markov chain $\mathcal{M}_\rho$
    with an arbitrary initial value $\bx^{(1)} = x \in \{\pm 1\}^n$.
    There is a constant $C$ such that for any $f: \{-1, 1\}^n \to [-1, 1]$ and any $\gamma, \delta > 0$, if
    $T \ge \frac{C \log(1/\delta)}{\gamma^2 \cdot (1-\rho)}$ then
\[
\Pr\bigg[ \bigg| \frac{f(\mathbf{x}_1) + \ldots + f (\mathbf{x}_T)}{T}  - \mathbf{E}_{\bx \sim \{-1,1\}^n} [f(\bx)]\bigg| > \gamma \bigg] \leq \delta. 
\] 
\end{lemma}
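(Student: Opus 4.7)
\medskip

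\noindent\textbf{Proof proposal.}  The plan is to derive this as a direct specialization of Lezaud's Chernoff-type bound for reversible ergodic Markov chains (Theorem~1.1 of~\cite{lezaud1998chernoff}) to the chain $\mathcal{M}_\rho$. The work splits cleanly into three pieces: (a) check that $\mathcal{M}_\rho$ satisfies the hypotheses of Lezaud's theorem, (b) identify its spectral gap, (c) instantiate the theorem with appropriate constants.

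First I would verify that $\mathcal{M}_\rho$ is reversible and ergodic. The transition matrix $P_\rho$ is symmetric (as already noted after Definition~\ref{def:mchain}), so the uniform distribution on $\{-1,1\}^n$ is stationary and the chain is reversible with respect to it. For $\rho\in(-1,1)$ every single-step transition probability $P_\rho(x,y)=\prod_i \tfrac{1+\rho\, x_i y_i}{2}$ is strictly positive, because each coordinate independently retains its value with probability $(1+\rho)/2\in(0,1)$ and flips with probability $(1-\rho)/2\in(0,1)$. Hence the chain is irreducible and aperiodic, and therefore ergodic.

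Second, I would identify the spectral gap. The formula $T_\rho \chi_S = \rho^{|S|} \chi_S$ recorded in the preliminaries shows that the Fourier basis $\{\chi_S\}_{S\subseteq[n]}$ diagonalizes $P_\rho$, with eigenvalue $\rho^{|S|}$ on $\chi_S$. The top eigenvalue is $1$ (at $S=\emptyset$), and the second-largest eigenvalue is $\rho$ (at $|S|=1$); more generally, every nonzero eigenvalue other than $1$ lies in $[-|\rho|,|\rho|]$. Thus the spectral gap is $1-\rho$ (and the absolute spectral gap is $1-|\rho|$, which for $\rho\in(-1,1)$ is always $\Theta(1-\rho)$ up to the factor of $2$ that can be absorbed into the universal constant $C$).

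Third, I would apply Lezaud's bound to the centered function $\tilde f := f - \E_{\bx}[f(\bx)]$, which has mean zero under the stationary (uniform) distribution and satisfies $\|\tilde f\|_\infty \le 2$. Lezaud's Theorem~1.1 then yields
\[
\Pr\!\left[\left|\frac{1}{T}\sum_{t=1}^T \tilde f(\bx^{(t)})\right| > \gamma\right]\le A_x\cdot \exp\!\left(-\frac{T\gamma^2(1-\rho)}{C_0}\right)
\]
for an absolute constant $C_0$, where $A_x$ is the initial-distribution factor $\|\delta_x/\pi\|_2$. Choosing $T \ge C\log(1/\delta)/(\gamma^2(1-\rho))$ for a sufficiently large absolute $C$ makes the right-hand side at most $\delta$, which is exactly the claim.

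The one subtlety — and the main place I expect to spend care — is the initial-distribution factor $A_x$, which for a deterministic start equals $2^{n/2}$ and would in principle add an $O(n)$ term to the exponent. The fix I would use is a short burn-in: run the chain for $T_0 = O((\log n)/(1-\rho))$ steps before beginning to accumulate, during which $\mathcal{M}_\rho$ mixes to within constant $L^2$-distance of uniform (this is a standard fact about the noise operator's hypercontractive mixing), and then invoke Lezaud on the nearly-stationary chain, for which $A$ is $O(1)$. Absorbing $T_0$ into the main term only inflates the constant $C$, and the extra samples used during burn-in are negligible compared to the $\log(1/\delta)/(\gamma^2(1-\rho))$ term. (In the applications made later in the paper, $f$ is effectively a $\poly(k)$-junta, so one can equivalently run the argument on the restricted state space $\{-1,1\}^k$ where no burn-in trick is needed, and $n$ simply does not appear.)
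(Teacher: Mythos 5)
Your high-level plan — reversibility and ergodicity of $\mathcal{M}_\rho$, spectral gap $1-\rho$ from the eigenvalues $\rho^{|S|}$, then invoke Lezaud's Theorem~1.1 on the centered function — is exactly what the paper intends by citing~\cite{lezaud1998chernoff}. You also deserve credit for flagging the initial-distribution factor $N_q=\|\delta_x/\pi\|_2 = 2^{n/2}$, which the paper states the lemma around without comment. (The remark that the absolute spectral gap $1-|\rho|$ is $\Theta(1-\rho)$ is not true as $\rho\to -1$, but this only matters in a regime the paper never uses.)

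The burn-in fix, however, does not close the gap you identified. Starting from a point mass at $x$, the $\chi^2$-distance to stationarity after $t_0$ steps is $(1+\rho^{2t_0})^{n}-1$, so driving $N_{q_{t_0}}$ down to $O(1)$ requires $t_0 = \Omega\bigl(\tfrac{\log n}{1-\rho}\bigr)$. This is not a constant-factor inflation: for fixed $\gamma,\delta,\rho$ the lemma claims $T$ is $O(1)$ as $n\to\infty$, while the burn-in alone costs $\Omega(\log n)$. Worse, the lemma as stated for an \emph{arbitrary} start $x$ appears to be literally false: take $f(z)=\tanh\bigl(n^{-1/2}\sum_i z_i\bigr)$ (bounded, mean $0$) and $x=(1,\dots,1)$; then $f(\bx^{(t)})\approx 1$ for all $t \lesssim \tfrac{\log n}{1-\rho}$, so the empirical average over any $n$-independent horizon $T$ is near $1$, not near $\E[f]=0$. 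Your parenthetical escape hatch also doesn't hold: in \textsf{Coordinate-projection} the function $f$ is the full target function with only the $\coords$-coordinates pinned, so it can depend on all $n-|\coords|$ surviving coordinates and is not a $\poly(k)$-junta. What actually saves the paper's algorithms is that every invocation of this lemma is at a starting point whose off-$\coords$ coordinates are (essentially) uniformly random, i.e.\ the chain effectively starts at stationarity, making $N_q=O(1)$ so Lezaud applies cleanly. The right repair is therefore to weaken the lemma's hypothesis to a stationary (or near-stationary) start, not to add a burn-in.
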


\subsection{Random restrictions}
A crucial role in our algorithm will be played by the notion of random restrictions from circuit complexity~\cite{FSS:84, Hastad:86}. 
\begin{definition}~\label{def:restriction}
For any $\mu \in [0,1]$, we let $\mathcal{R}_{\mu} \in \{-1,1,\ast\}^n$ denote
the product distribution where each coordinate is $\ast$ with probability
$\mu$, $\pm 1$ with probability $(1-\mu)/2$ each. Further, for $f: \{\pm 1\}^n
\rightarrow \{\pm 1\}$ and $\xi \in \{-1,0,1\}^n$, we let $f_{\res \xi}:
\{-1,1\}^S \rightarrow \{-1,1\}$ 
where 
\begin{enumerate}
    \item $S = \{i \in [n]: \xi_i = 0\}$. The set of variables in $S$ are said to \emph{survive} in $f_{\res \xi}$. 
\item For $x \in \{\pm 1\}^S$, we $f_{\res \xi}(x)= f(z)$ where $z_S = x$ and $z_j= \xi_j$ for $j \not \in S$. 
\end{enumerate}
\end{definition}
 
\section{Construction of coordinate oracles}~\label{sec:oracle}

The main result of this section is an algorithm for constructing a set of ``oracles'' to
all of the interesting coordinates of a function, assuming that there are not too many of them.
The basic definition is the following: {\red{For $1 \le i \le  n$, let $\Dict_i:\{-1,1\}^n \rightarrow \{-1,1\}$ be defined as $\Dict_i: x \mapsto x_i$. }}
\begin{definition}\label{def:oracle}
    Let $\dictset$ be a set of functions from $\{-1, 1\}^n$ to $\{-1, 1\}$.
    We say that $\dictset$ is an oracle for the coordinates $\coords \subset [n]$
    if
    \begin{itemize}
        \item for every $g \in \dictset$, there is some $i \in \coords$ such that
            $g = \Dict_i$ or $g = -\Dict_i$; and
        \item for every $i \in \coords$, there is some $g \in \dictset$ such that
            $g = \Dict_i$ or $g = -\Dict_i$.
    \end{itemize}
    In other words, $\dictset$ is an oracle for $\coords$ if
    $\dictset = \{\Dict_i: i \in \coords\}$ ``up to sign''.
\end{definition}

Due to our constraints, we will not be able to produce coordinate oracles exactly according
to the definition above, so we will relax it slightly.
Recall that we have fixed an underlying function $f: \{\pm 1\}^n \to \{-1, 1\}$ and
a parameter $k \in \mathbb{N}$.

\begin{definition}\label{def:approximate-oracle}
    Let $\dictset$ be a set of functions from $\{-1, 1\}^n$ to $\{-1, 1\}$.
    For $\epsilon \le \frac 18$, we say that $\dictset$ is an $\nu$-oracle for $\coords \subset [n]$ if
    \begin{itemize}
        \item for every $g \in \dictset$, there is some $i \in \coords$ such that $g$ is $\nu$-close to
            $\Dict_i$ or $-\Dict_i$ (necessarily only one,
          since $\nu \le \frac 18$);
        \item for every $i \in \coords$, there is exactly one $g \in \dictset$ that is $\nu$-close to
            $\Dict_i$ or $-\Dict_i$; and
        \item for every $x \in \{\pm 1\}^n$, every $g \in \dictset$, and every $\delta > 0$, there is a randomized algorithm
            to compute $g(x)$ correctly with probability at least $1-\delta$, using $\poly(k, \log \frac 1\delta)$
            queries to $f$.
    \end{itemize}
\end{definition}
{\red{While the definition of $\mathcal{D}$ involves both $\nu$ and $k$, since the latter will remain fixed throughout, the above definition is only quantified in terms of $\nu$.}}
The parameters $\nu$ and $\delta$ that we choose will essentially allow us to pretend that
an $\nu$-oracle is an oracle. In particular, we will fix $\delta = 2^{-\omega(k)}$ when evaluating coordinate
oracles at a point. This will preserve the $\poly(k)$ query complexity of each oracle query, while ensuring
that (with high probability) every query that we make to a coordinate oracle will be computed correctly
(since in all of our algorithms, we will make no more than $2^{k} \cdot \poly(k)$ queries).
Our choice of $\nu$ will depend on the setting: in the setting of Theorem~\ref{thm:main-tester},
we will make at most $2^{k} \cdot \poly(k)$ queries to each coordinate oracle, so
we will take $\nu = (2^k \cdot \poly(k))^{-1}$. This means that each oracle query requires at most
$\poly(k)$ queries to $f$, while ensuring that each coordinate oracle is so close to a dictator (or anti-dictator)
that we will (with high probability) not observe the difference.
In the setting of Theorem~\ref{thm:main-tester-gap}, we will set take $\nu = \poly(1/k)$ and make
at most $\poly(k)$ queries to each coordinate oracle; this requires $\poly(k)$ queries to $f$ and ensures
that (with high probability) we will not observe the difference between any coordinate oracle
and its corresponding dicator.
With this in mind, and to prevent a proliferation of parameters, we will often pretend that we
have access to an oracle in the sense of Definition~\ref{def:oracle} when we really
have access to an $\nu$-oracle in the sense of Definition~\ref{def:approximate-oracle}.

\subsection{A single oracle}~\label{sec:single-oracle}

We begin by describing how to construct an oracle to a single coordinate.
The basic step notion is the following operator, which
is related to one used by H{\aa}stad in the context of dictatorship testing~\cite{Hastad}.

\begin{definition}~\label{def:Hastad-op}
Let $\eta \in [-1,1]^n$ and $\bZ_{\eta}$ denote the product distribution on $\{-1,1\}^n$ where the expectation of the $i^{th}$ bit is $\eta_i$. For any
$f: \{-1,1\}^n \rightarrow \mathbb{R}$, we define the operator 
\[
\Has{\eta} f(x) = \mathbf{E}_{\by_1, \by_2\in \{-1,1\}^n, \by_3 \in \bZ_{\eta}} [f(\by_1) f(\by_2) f(x \oplus\by_1 \oplus \by_2 \oplus \by_3)]. 
\] 
\end{definition}

In terms of the Fourier expansion, it is easy to check that
\[
\Has{\eta} f(x) = \sum_S \widehat{f}^3(S) \chi_S(x) \eta^S,
\]
where $\eta^S = \prod_{j \in S} \eta_j$. A consequence of this expansion is that for the right choice of $\eta$,
$\Has{\eta} f$ is a good approximation to a certain dictator function (depending on $\eta$).

\begin{lemma}\label{lem:isolate-variable}
    Suppose that $|\hat f(1)| \ge \kappa$, where $\kappa \in (0, 1)$, and let $\alpha = \frac{\kappa^3}{16}$.
    Choose $\randeta \in \{0, \alpha\}^n$ randomly so that $\Pr[\randeta_i = \alpha] = \kappa^{6}/16$,
    independently for every $i$. Then with probability at least $\Omega(\kappa^6)$,
    for every $x \in \{\pm 1\}^n$,
    \[
        \big|\Has{\randeta}f(x) - \widehat{f}^3(0) -\alpha \widehat{f}^3(1) x_1\big| \le \frac{\alpha}{4} |\hat f(1)|^3.
    \]
\end{lemma}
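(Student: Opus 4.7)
The plan is direct Fourier analysis. Expanding $\Has{\randeta} f$ using the identity stated just before the lemma, and setting $A := \{i \in [n] : \randeta_i = \alpha\}$, we get $\randeta^S = \alpha^{|S|} \mathbf{1}[S \subseteq A]$ and hence
\[
    \Has{\randeta} f(x) = \sum_{S \subseteq A} \widehat f(S)^{3}\, \chi_S(x)\, \alpha^{|S|}.
\]
The two ``target'' terms are $S = \emptyset$, contributing $\widehat f(\emptyset)^{3}$, and (provided $1 \in A$) $S = \{1\}$, contributing $\alpha\, \widehat f(\{1\})^{3}\, x_1$. Since the lemma requires a bound that is uniform in $x$, I would bound the remaining error using $|\chi_S(x)| = 1$ by $E_a + E_b$, where
\[
    E_a := \alpha \sum_{j \in A \setminus \{1\}} |\widehat f(\{j\})|^{3}, \qquad E_b := \sum_{\substack{S \subseteq A \\ |S| \ge 2}} |\widehat f(S)|^{3}\, \alpha^{|S|}.
\]

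Next, I would compute $\E[E_a \mid 1 \in A]$ and $\E[E_b \mid 1 \in A]$, exploiting the mutual independence of the $\randeta_j$'s and the elementary inequality
\[
    \sum_S |\widehat f(S)|^{3} \;\le\; \max_S |\widehat f(S)| \cdot \sum_S \widehat f(S)^{2} \;\le\; 1,
\]
which holds since $\|f\|_\infty \le 1$ forces $|\widehat f(S)| \le 1$ while Parseval gives $\sum_S \widehat f(S)^2 \le 1$. Since $\Pr[j \in A] = \kappa^6/16$ for each $j$, a direct computation produces $\E[E_a \mid 1 \in A] \le \alpha \cdot (\kappa^6/16)$, and $\E[E_b \mid 1 \in A]$ is smaller by another factor of order $\alpha \kappa^6/16$, because every $S$ in that sum has $|S| \ge 2$ and therefore requires at least one extra random inclusion beyond $1 \in A$.

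Because $|\widehat f(\{1\})| \ge \kappa$, it suffices to have $E_a, E_b \le \tfrac{\alpha}{8} \kappa^{3}$. Comparing these targets to the conditional expectations above, Markov's inequality with slack $M_a = O(1/\kappa^{3})$ gives $\Pr[E_a > \tfrac{\alpha}{8}\kappa^{3} \mid 1 \in A] = O(\kappa^{3})$, and slack $M_b = O(1/\kappa^{6})$ gives $\Pr[E_b > \tfrac{\alpha}{8}\kappa^{3} \mid 1 \in A] = O(\kappa^{6})$. A union bound then leaves an absolute-constant lower bound on $\Pr[\text{both hold} \mid 1 \in A]$; multiplying by $\Pr[1 \in A] = \kappa^6/16$ yields the claimed $\Omega(\kappa^6)$ success probability. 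The only conceptual step is the reduction of the uniform-in-$x$ statement to an $\ell^{1}$-style bound on the Fourier tail, which is what forces the worst-case $|\chi_S| = 1$ estimate; everything after that is routine Markov book-keeping and should not pose a serious obstacle.
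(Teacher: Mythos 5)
Your proof is correct, but it takes a genuinely different route from the paper. Both arguments start from the Fourier expansion $\Has{\randeta}f(x) = \sum_{S \subseteq A} \widehat f(S)^3 \chi_S(x) \alpha^{|S|}$ and split the error into a degree-one piece and a degree-$\ge 2$ piece, but they diverge in how they handle the degree-one piece. The paper first singles out the set $\Gamma = \{j : |\widehat f(j)| \ge \kappa^3/8\}$, which by Parseval has size $O(\kappa^{-6})$, and conditions on the refined event that $\randeta_1 = \alpha$ \emph{and} $\randeta_j = 0$ for all $j \in \Gamma\setminus\{1\}$. This event has probability $p(1-p)^{|\Gamma|-1} = \Omega(\kappa^6)$, and on it the bound holds \emph{deterministically} (each surviving degree-one term has a small Fourier coefficient, and the degree-$\ge 2$ terms are handled by $|\eta^S| \le \alpha^2$). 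You instead condition only on $1 \in A$ and control the same two error sums $E_a, E_b$ by computing their conditional expectations (using the $\ell^1$-type bound $\sum_S |\widehat f(S)|^3 \le 1$) and invoking Markov twice. Your approach is a bit slicker in that it avoids introducing $\Gamma$ and the fourth-moment-like accounting of its size, at the cost of the final guarantee being probabilistic rather than deterministic on the conditioning event; both give $\Omega(\kappa^6)$. One small imprecision in your write-up: $\E[E_b \mid 1 \in A]$ is smaller than $\E[E_a \mid 1 \in A]$ by a factor of order $\alpha$, not $\alpha\kappa^6/16$ (the dominant terms in $E_b$ are $S = \{1,j\}$, which cost the same single extra inclusion $j \in A$ as the degree-one terms in $E_a$ plus one extra power of $\alpha$). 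This does not affect your Markov bound, since $\E[E_b] \lesssim p\alpha^2$ still yields $\Pr[E_b > \tfrac{\alpha}{8}\kappa^3 \mid 1 \in A] = O(\kappa^6)$, and the union bound leaves a probability bounded below by an absolute constant for all $\kappa \in (0,1)$.
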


\begin{proof}
    Let $p = \kappa^{6}/16 = \Pr[\randeta_i = \alpha]$.
    Let $\Gamma = \{i: |\hat f(i)| \ge \kappa^3/8\}$; since $\sum_S \hat f(S)^2 \le 1$, we have
    $|\Gamma| \le 64 \kappa^{-6}$. Let $E$ be the event that $\eta_1 = \alpha$
    and $\eta_j = 0$ for all $j \in \Gamma \setminus \{1\}$. Then
    $\Pr[E] = p(1-p)^{|\Gamma| - 1} \ge p(1-p)^{64 \kappa^{-6}} = \Omega(\kappa^6)$, and we will show that the claimed
    inequality happens on $E$. Indeed, the Fourier expansion above implies that
    \begin{align*}
        \Has{\randeta}f(x) - \widehat{f}^3(0) -\alpha \widehat{f}^3(1) x_1
        &= \sum_{1 < j \le n} \eta_j \hat f^3(j) x_j + \sum_{|S| > 1} \eta^S \hat f^3(S) \chi_S(x) \\ \\
        &= \sum_{j \not \in \Gamma} \eta_j \hat f^3(j) x_j + \sum_{|S| > 1} \eta^S \hat f^3(S) \chi_S(x),
    \end{align*}
    where the second equality holds on the event $E$.
    Now,
    \[
        \sum_{j \not \in \Gamma} |\eta_j \hat f^3(j)|
        \le \alpha \frac{\kappa^3}{8} \sum_{j \not \in \Gamma} \hat f^2(j)
        \le \frac{\alpha \kappa^3}{8}
    \]
    and
    \[
        \sum_{|S| > 1} |\eta^S \hat f^3(S)| \le \sum_{|S| > 1} \alpha^2 \hat f^2(S) \le \alpha^2,
    \]
    and so the triangle inequality gives
    \[
        |\Has{\randeta}f(x) - \widehat{f}^3(0) -\alpha \widehat{f}^3(1) x_1|
        \le \frac{\alpha \kappa^3}{8} + \alpha^2 \le \frac{\alpha \kappa^3}{4} \le \frac{\alpha}{4} |\hat f(1)|^3.
        \qedhere
    \]
\end{proof}

Lemma~\ref{lem:isolate-variable} gives us a natural algorithm for computing
something that might be a coordinate oracle: the basic idea is to sample $\randeta$
and then to define
\[
    g_\eta(x) = \sgn(\Has{\randeta} f(x) - \hat f^3(0)) = \sgn(\Has{\randeta} f(x) - \E[f]^3).
\]
Note that computing the function $g_\eta$ requires randomness (to estimate $\Has{\randeta} f(x)$
and $\E[f]$). However, a straightforward Chernoff bound implies that with $O(\kappa^{-12} \log \frac 1\delta)$ queries
to $f$, we can estimate both $\E[f]$ and $\Has{\randeta} f(x)$ to additive accuracy $O(\kappa^6)$, with probability $1-\delta$;
according to Lemma~\ref{lem:isolate-variable}, this is sufficient to correctly evaluate $g_\eta(x)$ with probability
$1 - \delta$. This is the same sort of guarantee required in Definition~\ref{def:approximate-oracle};
as discussed there, we can choose $\delta = 2^{-k^2}$ so that with high probability, every evaluation
of $g_\eta$ will be correct.

Now, Lemma~\ref{lem:isolate-variable} only provides us with a small probability of finding a good $g_\eta$.
To filter out the bad ones, we add in a dictatorship test: for $i \le n$, let $\mathsf{Dict}_i:
\{-1,1\}^n \rightarrow \{-1,1\}$ denote the function $\mathsf{Dict}_i(x) =
x_i$.  We call $\Dict_i$ a dictator function, and $-\Dict_i$ an anti-dictator
function. There is an algorithm for testing whether a function is
a dictator function (see Chapter~7 of~\cite{ODonnell:book}):

\begin{theorem}~\label{thm:dictatorship-test}
 There is an algorithm \textsf{Dictator-test} which given an error parameter {\red{$\nu>0$}} and confidence parameter {\red{$\tilde{\delta}>0$}}, makes $O(\nu^{-1} \log \tilde{\delta}^{-1})$ queries to $f: \{-1,1\}^n \rightarrow \{-1,1\}$ and has the following properties: 
 \begin{enumerate}
 \item If $f: \{-1,1\}^n \rightarrow \{-1,1\}$ is a dictator or an anti-dictator, then it accepts with probability $1$. 
 \item Any $f: \{-1,1\}^n \rightarrow \{-1,1\}$ which is $\nu$-far from every dictator and anti-dictator is accepted with probability at most $\tilde{\delta}$. 
 \end{enumerate}
\end{theorem}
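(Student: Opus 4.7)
The plan is to build the tester by amplifying a constant-query ``base'' dictator test whose rejection probability on $\nu$-far functions is linear in $\nu$. I would take the classical BLR~$+$~NAE test as the base. One iteration proceeds as follows: (i) sample $\bx, \by \in \{-1,1\}^n$ uniformly and independently and check $f(\bx)f(\by) = f(\bx \cdot \by)$; (ii) sample $\bx', \by', \bz' \in \{-1,1\}^n$ where for each coordinate $i$ the triple $(\bx'_i, \by'_i, \bz'_i)$ is uniform on the six non-all-equal configurations in $\{\pm 1\}^3$, independently across coordinates, and check that $f(\bx'), f(\by'), f(\bz')$ are not all equal. One iteration uses $6 = O(1)$ queries.

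Completeness is immediate: characters are linear, so pass BLR with probability $1$; and for a (anti-)dictator $f(x) = \pm x_i$, the three values $f(\bx'), f(\by'), f(\bz')$ equal $\pm \bx'_i, \pm \by'_i, \pm \bz'_i$, which by construction are never all equal. For soundness, a short Fourier computation gives $\E[\chi_S(\bx')\chi_S(\by')] = (-\tfrac13)^{|S|}$ (and likewise for the other two pairs), so the NAE acceptance probability of a character $\chi_S$ is $\tfrac34 - \tfrac34(-\tfrac13)^{|S|}$; this equals $1$ exactly when $|S| = 1$ and is at most $\tfrac23$ whenever $|S| \neq 1$. Now suppose $f$ is $\nu$-far from every (anti-)dictator and, aiming at a contradiction, the base test accepts with probability at least $1 - c\nu$ for a constant $c>0$ to be fixed later. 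Then BLR accepts with probability at least $1 - c\nu$, so by the standard BLR robustness theorem $f$ is $O(c\nu)$-close to some character $\pm\chi_S$; at the same time the NAE test accepts with probability at least $1 - c\nu$, and since the NAE acceptance probability is Lipschitz in $f$ (it changes by at most $3 \cdot O(c\nu)$ under an $O(c\nu)$ perturbation), the true character $\pm\chi_S$ must itself accept NAE with probability at least $1 - O(c\nu)$. By the dichotomy above, this forces $|S| = 1$, so $f$ is $O(c\nu)$-close to a (anti-)dictator; choosing $c$ small enough makes this distance less than $\nu$, a contradiction. Thus the base test rejects any $\nu$-far $f$ with probability at least $c\nu$.

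With the base test in hand, the algorithm \textsf{Dictator-test} runs it independently $T = \lceil (c\nu)^{-1} \ln(1/\tilde\delta)\rceil$ times and accepts iff every iteration accepts. Completeness is still $1$, and soundness is at least $1 - (1 - c\nu)^T \ge 1 - \tilde\delta$. Total query complexity is $6T = O(\nu^{-1}\log \tilde\delta^{-1})$, matching the statement.

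The main obstacle is the soundness step: converting the two pieces of information (``$f$ is $O(c\nu)$-close to some $\chi_S$'' from BLR, and ``the NAE test on $f$ passes with probability $\ge 1 - c\nu$'') into the conclusion $|S| = 1$. This requires the Fourier computation of NAE acceptance for characters together with careful bookkeeping of how the $O(c\nu)$ closeness to $\chi_S$ propagates (with a factor of $3$) into the NAE acceptance probability, so that the constant $c$ can be chosen small enough to yield a strict contradiction with $\nu$-farness. Everything else is either standard (BLR robustness) or routine amplification.
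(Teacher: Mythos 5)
The paper itself only cites O'Donnell's book (Chapter~7) for this theorem, and the BLR$+$NAE approach you propose is indeed the standard one there. However, your base test has a genuine completeness bug: an \emph{anti-}dictator $f(x) = -x_i$ fails your BLR step with probability $1$, because $f(\bx)f(\by) = \bx_i \by_i$ while $f(\bx \cdot \by) = -\bx_i\by_i$. More generally, the BLR equation $f(\bx)f(\by) = f(\bx \cdot \by)$ is satisfied by $\chi_S$ but never by $-\chi_S$, so your test rejects every anti-dictator, contradicting item~1 of the statement. The standard fix is to replace BLR by its affine version: query $f(\vec 1)$ once and test $f(\bx)f(\by)f(\bx\cdot\by) = f(\vec 1)$ (equivalently, run BLR on $g(x) := f(x)f(\vec 1)$, which is $\pm \chi_S$-invariant). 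With this change the robustness conclusion becomes ``close to some $\pm\chi_S$,'' which is what your NAE dichotomy then consumes, and the rest of the argument goes through.

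A smaller point: your NAE acceptance formula $\tfrac34 - \tfrac34(-\tfrac13)^{|S|}$ is correct, but your claim that it is ``at most $\tfrac23$ whenever $|S| \ne 1$'' is false; at $|S| = 3$ it equals $\tfrac79$. What is true (and all you need) is that it is bounded away from $1$ for every $|S| \ne 1$ (the worst case is $\tfrac79$ at $|S| = 3$), so the contradiction argument still closes with an appropriate choice of the constant $c$. The amplification step and the resulting query count $O(\nu^{-1}\log \tilde\delta^{-1})$ are fine as written.
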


Algorithm \textsf{Construct-coordinate-oracle}
gives the algorithm for constructing coordinate oracles. 

\begin{algorithm}[H]
    \SetKwFor{RepeatTimes}{repeat}{times}{end}
    \KwIn{$f$ (target function), 
   $k$ (arity of Junta),  \\ $\delta$ (confidence parameter),  $\nu \le 1/8$ (first accuracy parameter), $\tau$ (second accuracy parameter)}
    \KwOut{an oracle $\dictset$} 
    \tcp{Construct the initial oracles}
    Let $T = C k^5 \tau^{-5} \red{\log(1/\delta)}$ and let $M = C k^7 \tau^{-7} \red{\log(1/\delta)}$\;
    Let $\tilde \delta = \delta/(MT)$ \;
    Initialize $\dictset = \emptyset$ \;
    \RepeatTimes{$T$ \label{line:coord-oracle-first-loop}}{
        Sample $\rho$ according to $\mathcal{R}_{1/k}$ (as in Definition~\ref{def:restriction})\;
        \RepeatTimes{$M$}{
            Sample $\randeta$ as in Lemma~\ref{lem:isolate-variable}\;
            Let $g(x) = \sgn(\Has{\randeta} f_{\res \rho}(x) - \E[f_{\res \rho}]^3)$\;
            Apply \textsf{Dictator-test} to $g$ with confidence $\tilde \delta$ and accuracy $\nu$\;
            \If{\textsf{Dictator-test} accepts}{
                Add $g$ to $\dictset$ \;
            }
        }
    }

    \tcp{Clean out duplicates}
    Let $N = C \log(MT/\delta)$, and sample $\bx^{(1)}\,\dots, \bx^{(N)} \in \{\pm 1\}^n$ independently and uniformly \;
    \While{there exist $g\ne h \in \dictset$ such that $|N^{-1} \sum_i g(\bx^{(i)}) h(\bx^{(i)})| \ge \frac 12$\label{line:coord-oracle-second-loop}}{
        Remove $g$ from $\dictset$\;
    }
    \Return{$\dictset$}
    \caption{\textsf{Construct-coordinate-oracle}}\label{alg:coord-oracle}
\end{algorithm}

\begin{lemma}\label{lem:coord-oracle}
The algorithm~\textsf{Construct-coordinate-oracle} has the following guarantee: 
\begin{enumerate}
\item As input, it gets oracle access to $f: \{-1,1\}^n \rightarrow \{-1,1\}$, an arity parameter $k$, two accuracy parameters $\tau$ and $\nu \le 1/8$ and a confidence parameter $\delta>0$. 
\item With probability $1-\delta$, there is some set $\coords \supseteq \{i: \Inf_i^{\le k}(f) \ge \tau^2/k^2\}$
    such that the output of Algorithm~\ref{alg:coord-oracle} is an $\nu$-oracle
    to $\coords$. 
    \item The number of oracles in $\dictset$ is at most $\poly(k, \tau^{-1}, \log(1/\delta))$. 
    \item The query complexity of the procedure is $\nu^{-1} \cdot \poly(k, \tau^{-1}, \log(1/\delta))$. 
\end{enumerate}  
\end{lemma}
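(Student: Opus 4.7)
The plan is to analyze the algorithm phase by phase: random restrictions isolate coordinates with large low-degree influence; the Håstad operator evaluated with a sparse $\randeta$ produces a function that is effectively a dictator of the isolated coordinate; the dictator test filters out the bad candidates; and the final while loop removes duplicates. Failure probabilities are union-bounded across phases to sum to at most $\delta$.

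For the \emph{isolation} phase, fix a coordinate $i$ with $\Inf_i^{\le k}(f)\ge \tau^2/k^2$. I will invoke the random-restriction claim sketched in step~2 of the overview (Claim~\ref{clm:restrict}) to conclude that with probability $p=\Omega(\mathrm{poly}(\tau/k))$ over $\rho\sim \mathcal{R}_{1/k}$, the variable $x_i$ survives $\rho$ and $|\widehat{f_{\res \rho}}(\{i\})|\ge \kappa$ for some $\kappa=\Omega(\mathrm{poly}(\tau/k))$. Since $\sum_j \Inf_j^{\le k}(f)\le k$, there are at most $k^3/\tau^2$ such interesting coordinates, and once $T$ is the value set in the algorithm, a union bound plus Chernoff yields that every interesting $i$ is isolated in at least one outer iteration with probability $\ge 1-\delta/3$. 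Conditional on a good $\rho$, Lemma~\ref{lem:isolate-variable} (with the role of coordinate $1$ played by $i$) gives that with probability $\Omega(\kappa^6)$ over $\randeta$, $\Has{\randeta}f_{\res\rho}(x)-\widehat{f_{\res\rho}}^3(\emptyset)$ agrees in sign with $\alpha\,\widehat{f_{\res\rho}}^3(\{i\})\,x_i$ at every $x$; estimating both $\E[f_{\res\rho}]^3$ and $\Has{\randeta}f_{\res\rho}(x)$ with $O(\kappa^{-12}\log(1/\tilde\delta))$ samples keeps the empirical error below the separation $\Omega(\alpha\kappa^3)$ with confidence $1-\tilde\delta$. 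With $M$ inner iterations as in the algorithm, Chernoff-bounding across the good outer iterations yields that with probability $\ge 1-\delta/3$, every interesting $i$ has at least one inner iteration producing a $g$ that evaluates as $\pm\Dict_i$ on all query points used downstream.

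For the \emph{filtering} phase, Theorem~\ref{thm:dictatorship-test} says that any (anti-)dictator is accepted with probability $1$, so every good $g$ enters $\dictset$; conversely, any $g$ that is $\nu$-far from all $\pm\Dict_j$ is rejected except with probability $\tilde\delta=\delta/(MT)$, and a union bound across the $\le MT$ dictator tests shows that every element of $\dictset$ is $\nu$-close to some $\pm\Dict_j$, with probability $\ge 1-\delta/3$. For \emph{deduplication}, if $g,h$ are both $\nu$-close to the same $\pm\Dict_j$ then $|\E[gh]|\ge 1-4\nu\ge \tfrac34$, while if they are close to different dictators then $|\E[gh]|\le 4\nu<\tfrac14$; with $N=C\log(MT/\delta)$ uniform samples, Hoeffding concentrates the empirical correlations sharply enough to separate the two cases. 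The while loop therefore terminates with exactly one representative per covered coordinate, so $\dictset$ is a $\nu$-oracle for $\coords=\{j:\text{some }g\in\dictset\text{ is }\nu\text{-close to }\pm\Dict_j\}$, which contains every interesting $i$. The size $|\dictset|$ is bounded by its pre-deduplication size $MT=\mathrm{poly}(k,\tau^{-1},\log(1/\delta))$, and the query complexity accumulates to $TM$ iterations each using $O(\nu^{-1}\log(1/\tilde\delta))$ queries to $g$ and $O(\kappa^{-O(1)}\log(1/\tilde\delta))$ queries to $f$ per $g$-evaluation, giving the stated $\nu^{-1}\cdot\mathrm{poly}(k,\tau^{-1},\log(1/\delta))$ total.

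The main obstacle is the isolation analysis: one needs a random-restriction statement asserting that every coordinate with $\Inf^{\le k}$-influence at least $\tau^2/k^2$ acquires a first-order Fourier coefficient of magnitude $\mathrm{poly}(\tau/k)$ under $\mathcal{R}_{1/k}$, with probability $\mathrm{poly}(\tau/k)$. The exact polynomial exponents appearing there must be threaded through the choices of $T$, $M$, and $\kappa$ so that every Chernoff/union bound closes with total error $\delta$; everything downstream (Lemma~\ref{lem:isolate-variable}, Theorem~\ref{thm:dictatorship-test}, and Hoeffding for deduplication) is off-the-shelf once this bookkeeping is in place.
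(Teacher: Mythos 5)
Your proof follows the same route as the paper's: Claim~\ref{clm:restrict} handles the restriction step, Lemma~\ref{lem:isolate-variable} handles the H{\aa}stad-operator step, Theorem~\ref{thm:dictatorship-test} filters the candidates, and Hoeffding/union bounds handle deduplication, with failure probabilities split across the three phases. The only additions are minor bookkeeping (counting at most $k^3/\tau^2$ interesting coordinates, spelling out the per-evaluation sample complexity of $g_\eta$), which the paper leaves implicit; the approach is otherwise identical.
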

\ignore{
\begin{lemma}\label{lem:coord-oracle}
    With probability
    at least $1-\delta$, there is some set $\coords \supseteq \{i: \Inf_i^{\le k}(f) \ge \tau^2/k^2\}$
    such that the output of Algorithm~\ref{alg:coord-oracle} is an $\epsilon$-oracle
    to $\coords$. \red{Further, the size of the output is $\poly(k, \tau^{-1}, \log(1/\delta))$.}
\end{lemma}}

\begin{proof}
{\red{Note that the number of oracles in $\dictset$ is at most $MT$. This immediately gives Item~3. Similarly, the query complexity of the algorithm is $M \cdot T$ times the cost of applying the \textsf{Dictator-test} in Step~9. By plugging the bound from~Theorem~\ref{thm:dictatorship-test}, we get Item~4. This leaves us with  Item~2. 
Like the algorithm itself, there are two steps in this proof.}}
The first step is to show that after executing the loop on line~\ref{line:coord-oracle-first-loop},
the set $\dictset$ contains only $\nu$-approximate dictators, and it contains at least
one dictator for every coordinate $i$ with large $\Inf_i^{\ge k}(f)$.
The second step is to show that by the end of the algorithm, each coordinate that was represented
by a dictator after executing the loop on line~\ref{line:coord-oracle-first-loop} is now
represented by a unique dictator. Actually, this second step is trivial: 
if $g$ and $h$ are $\nu$-approximate dictators for $\nu \le \frac 18$ then
$|\E[gh]| \le \frac 14$ if they represent different coordinates, while $|\E[gh]| \ge \frac 34$ if they represent the
same coordinate.
By a union bound over the $O(M^2 T^2)$ pairs of elements in $\dictset$, $\log(C M T/\delta)$ samples
are enough to estimate all of these correlations to accuracy $\frac 14$ with confidence $1-\delta/4$,
meaning that the loop starting on line~\ref{line:coord-oracle-second-loop} correctly (with probability at least $1-\delta/4$)
chooses exactly one dictator function to represent each coordinate.

We turn to the correctness of the first loop: it should find a dictator for every influential coordinate.
The analysis of this part is itself divided into two parts, corresponding to the two nested loops:
we will argue that for every influential coordinate $i$, at least one iteration of the outer
loop will sample $\rho$ for which $\widehat{f_{\res \rho}}(i)$ is large. For this execution of the outer loop,
we will argue that at least one iteration of the inner loop will find an oracle for coordinate $i$.

We will write the first of these parts as a separate claim, and prove it later:
\begin{claim}\label{clm:restrict}
    If $\Inf_i^{\le k}(f) \ge \tau^2/k^2$ and $\rho \sim \mathcal{R}_{1/k}$ then with
    probability $\Omega(k^{-4} \tau^4)$, $|\widehat{f_{\res \rho}}(i)| \ge \tau/(4k)$.
\end{claim}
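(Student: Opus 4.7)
The plan is to reduce the claim to a second-moment computation on $\widehat{f_{\res \rho}}(\{i\})$ followed by a Paley--Zygmund style anti-concentration argument, working in two stages: first condition on $i$ surviving (which happens with probability $1/k$), then lower bound the second moment of $\widehat{f_{\res \rho}}(\{i\})$ under this conditioning.

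First I will derive a formula for $\widehat{f_{\res \rho}}(\{i\})$ assuming $i$ survives. Writing $V = \{j : \rho_j = *\}$ and expanding $f(x)$ in Fourier basis, one gets
\[
\widehat{f_{\res \rho}}(\{i\}) \;=\; \sum_{\substack{S \ni i \\ S\setminus\{i\} \subseteq [n]\setminus V}} \widehat{f}(S)\, \chi_{S\setminus\{i\}}(\rho).
\]
Squaring and taking expectations over the random fixings $\rho_j \in \{\pm 1\}$ for $j \notin V$ (still conditioning on $i \in V$), the cross-terms $S \neq S'$ vanish since the corresponding fixings are independent uniform signs. Consequently
\[
\E\bigl[\,\widehat{f_{\res \rho}}(\{i\})^2 \,\big|\, i \in V\,\bigr]
\;=\; \sum_{S \ni i} \widehat{f}(S)^2 \cdot \Pr[S\setminus\{i\} \subseteq [n]\setminus V \mid i \in V]
\;=\; \sum_{S \ni i} \widehat{f}(S)^2 (1-\tfrac{1}{k})^{|S|-1}.
\]

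Second, I will throw away contributions from $|S| > k$ and use $(1-1/k)^{|S|-1} \ge (1-1/k)^{k-1} \ge 1/e$ for $|S| \le k$ to conclude
\[
\E\bigl[\,\widehat{f_{\res \rho}}(\{i\})^2 \,\big|\, i \in V\,\bigr] \;\ge\; \tfrac{1}{e}\,\Inf_i^{\le k}(f) \;\ge\; \frac{\tau^2}{e k^2}.
\]

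Third, let $X = \widehat{f_{\res \rho}}(\{i\})$ conditioned on $i \in V$. Note $|X| \le 1$ (as $\widehat{f_{\res \rho}}(\{i\}) = \E[f_{\res \rho}(\bz)\bz_i]$ with $|f_{\res \rho}| \le 1$), so $\E[X^4] \le \E[X^2]$. Paley--Zygmund applied to $X^2$ gives $\Pr[X^2 \ge \tfrac{1}{2}\E[X^2]] \ge \tfrac{1}{4}\,\E[X^2]^2/\E[X^4] \ge \tfrac{1}{4}\E[X^2] \ge \tau^2/(4ek^2)$. Since $\tau/(\sqrt{2e}\,k) \ge \tau/(4k)$, this shows $\Pr[|X| \ge \tau/(4k) \mid i \in V] = \Omega(\tau^2/k^2)$. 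Multiplying by the probability $1/k$ that $i$ survives yields $\Pr[|\widehat{f_{\res \rho}}(\{i\})| \ge \tau/(4k)] = \Omega(\tau^2/k^3)$, which is comfortably stronger than the claimed $\Omega(\tau^4/k^4)$ bound.

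The only non-routine step is the second-moment computation, where one has to be careful that the orthogonality of the $\chi_{S\setminus\{i\}}(\rho)$ under the fixing distribution kills cross terms cleanly; the $(1-1/k)^{|S|-1} \ge 1/e$ estimate is what forces the restriction of the influence to degree $\le k$ in the hypothesis.
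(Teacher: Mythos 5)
Your proof is correct. It shares the same skeleton as the paper's: compute the second moment of $\widehat{f_{\res\rho}}(\{i\})$ (equivalently, of the discrete-derivative polynomial $D_if$ evaluated at $\rho$), lower-bound it via $(1-1/k)^{k-1}\ge 1/e$ and the hypothesis $\Inf_i^{\le k}(f)\ge \tau^2/k^2$, and then invoke an anti-concentration inequality. Where you diverge is in the final step. The paper appeals to a fourth-moment anti-concentration inequality (Lemma~13 of~\cite{De:2018BFA}), bounding the central fourth moment of $D_if(\rho)$ by the \emph{absolute constant} $16$; this gives $\Pr\bigl[|\widehat{f_{\res\rho}}(i)|\ge \tau/(4k)\bigr]\ge \Omega(1/t^4)$ with $t=4k/\tau$, i.e.\ $\Omega(\tau^4/k^4)$. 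You instead notice the much stronger pointwise bound $|\widehat{f_{\res\rho}}(\{i\})|\le 1$, which gives the \emph{relative} fourth-moment control $\E[X^4]\le \E[X^2]$; plugged into Paley--Zygmund this yields $\Pr[|X|\ge\tau/(4k)\mid i\in V]\ge\tfrac14\E[X^2]=\Omega(\tau^2/k^2)$, and after multiplying by $\Pr[i\in V]=1/k$ you get $\Omega(\tau^2/k^3)$, which dominates the claimed $\Omega(\tau^4/k^4)$ for $\tau\le 1$ and $k\ge 1$. So your route is both more elementary (no citation to an external anti-concentration lemma) and quantitatively sharper. (One additional remark: your conditioning on $i\in V$ is actually immaterial, since the expression $\sum_{S\ni i}\hat f(S)\chi_{S\setminus\{i\}}(\rho)$ does not involve $\rho_i$ at all, so conditioning on $\rho_i=*$ does not change its distribution; you could have skipped the final $1/k$ factor and obtained $\Omega(\tau^2/k^2)$. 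But the conservative interpretation you chose is certainly safe.)
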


With our choice of $T$, Claim~\ref{clm:restrict} and a union bound imply that with probability
at least $1 - \delta/3$, for every $i$ with $\Inf_i^{\le k}(f) \ge \tau^2/k^2$ there is at least one
iteration of the outer loop for which $|\widehat{f_{\res \rho}}(i)| \ge \frac{\tau}{4k}$.
We will fix this iteration, and examine the inner loop: by Lemma~\ref{lem:isolate-variable}
with $\kappa = \frac{\tau}{4k}$, each iteration has $\Omega(\tau^6 k^{-6})$ probability 
of producing $g$ that is equal to $\pm \Dict_i$. By a union bound, with probability at least
$1 - \delta/3$, the inner loop will succeed in adding $\pm \Dict_i$ to $\dictset$.

Finally, a union bound implies that with probability at least $1 - \delta/3$, every $g$ that
passes the dictatorship test is in fact $\epsilon$-close to some dictator.
\end{proof}

\begin{proof}[Proof of Claim~\ref{clm:restrict}]
    For $i \in [n]$, define the polynomial $D_i f: \R^n \to \R$ by
    \[
        (D_i f)(x) = \frac{1}{2} (f(x) - f(x_{-,i})),
    \]
    where $x_{-,i}$ is equal to $x$, except that the $i$th coordinate is negated. In terms of Fourier coefficients,
    it is easy to check that
    \[
        (D_i f)(x) = x_i^2 \sum_{S \subseteq [n] \setminus \{i\}} \hat f(S \cup \{i\}) \chi_S(x).
    \]
    On the other hand, it is also immediate that for every $x \in \{-1, 0, 1\}^n$,
    and it is easy to check that $\widehat{f_{\res \rho}}(i) = D_i f(\rho)$. In particular,
    \begin{align*}
        \Var_\rho[\widehat{f_{\res \rho}}(i)]
        &= \Var_\rho[D_i f(\rho)] \\
        &= \sum_{S \subseteq [n] \setminus \{i\}} \hat f^2(S \cup \{i\}) (1 - 1/k)^{|S|} \\
        &\ge (1-1/k)^k \sum_{\substack{S \subseteq [n] \setminus \{i\} \\ |S| \le k}} \hat f^2(S \cup \{i\}) \\
        &\ge \frac 14 \Inf_i^{\le k}(f) \\
        &\ge \frac{\eta^2}{4k^2}.
    \end{align*}

    We will apply an anti-concentration inequality to the random variable $D_i f(\rho) = \widehat{f_{\res \rho}}(i)$
    (see Lemma~13 in~\cite{De:2018BFA}):
    for any real-valued random variable $\bX$ with variance at least $\sigma^2$ and central fourth moment
    at most $t^4 \sigma^4$,
    \[
        \Pr\left[|X| > \frac{\sigma}{2} \right] \ge \frac{9}{128(t + 2)^4}.
    \]
    Immediately from the definition of $D_i f$, we see that the central fourth moment of $D_i f(\rho)$
    is at most $16$. Setting $\sigma = \eta/(2k)$ and $t = 4k/\eta$, we have
    \[
        \Pr\left[
            |\widehat{f_{\res \rho}}(i)| > \frac{\eta}{4k}
        \right]
        \ge \Omega(k^{-4} \eta^{4}),
    \]
    as claimed.
\end{proof}
{\red{Finally, the following claim, which will be useful in both Section~\ref{sec:exp-query} and Section~\ref{sec:poly-tester}, says that if $f$ correlates with 
a $k$-junta, then it also correlates nearly as well with a junta on some set $S$ such that for every variable $j \in S$, $\mathsf{Inf}_j^{\le k} (f)$ is large. This is useful because Lemma~\ref{lem:coord-oracle} can then be used to construct oracles to all these \emph{relevant variables}.}}
\begin{claim}~\label{clm:correlation1}
 Let $f: \{\pm 1\}^n \rightarrow \{\pm 1\}$ and let $g: \{\pm 1\}^n \rightarrow \{\pm 1\}$ be a $k$-junta 
on the variables $\{1,\ldots, k\}$  
 such that $\mathbf{E}[f \cdot g] \ge c$. Then, for any $\tau>0$, there is a set $S \subseteq [k]$ of variables such that 
 \begin{enumerate}
 \item For $i \in S$, $\mathsf{Inf}_i^{\le k}(f) \ge \frac{\tau^2}{k^2}$. 
 \item There is a junta on $S$ with correlation $c-\tau$ with $f$. 
 \end{enumerate}
 \end{claim}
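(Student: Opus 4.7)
The natural choice is $S := \{i \in [k] : \Inf_i^{\le k}(f) \ge \tau^2/k^2\}$, which automatically satisfies condition~(1). For condition~(2) the plan is to show that ``zeroing out'' the low-influence coordinates of $[k]$ inside $g$ costs at most $\tau$ in correlation with $f$, and then to round the resulting real-valued function to a Boolean junta on $S$.

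Let $g_S$ be the projection of $g$ onto coordinates in $S$: in Fourier terms $g_S = \sum_{T \subseteq S} \widehat{g}(T)\,\chi_T$, equivalently $g_S(x_S) = \mathbf{E}_{\by \in \{\pm 1\}^{[k] \setminus S}}[g(x_S, \by)]$, and in particular $\|g_S\|_\infty \le 1$. Since $g$ depends only on coordinates in $[k]$, the function $g - g_S$ has Fourier support on $\{T \subseteq [k] : T \not\subseteq S\}$, and each such $T$ contains some $i \in [k] \setminus S$. Applying Cauchy--Schwarz,
\[
|\mathbf{E}[f(g - g_S)]| \le \sqrt{\sum_{T \subseteq [k],\, T \not\subseteq S} \widehat{f}(T)^2} \cdot \sqrt{\sum_{T \subseteq [k],\, T \not\subseteq S} \widehat{g}(T)^2},
\]
and the second factor is at most $\|g\|_2 = 1$. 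For the first factor, a union bound over $i \in [k] \setminus S$ (using $T \subseteq [k] \Rightarrow |T| \le k$) gives
\[
\sum_{T \subseteq [k],\, T \not\subseteq S} \widehat{f}(T)^2 \le \sum_{i \in [k] \setminus S} \Inf_i^{\le k}(f) \le k \cdot \tau^2/k^2 = \tau^2/k.
\]
Hence $\mathbf{E}[fg_S] \ge c - \tau/\sqrt{k} \ge c - \tau$.

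Finally, define $g'(x_S) := \sgn(\mathbf{E}_{\by}[f(x_S, \by)])$, which is a Boolean junta on $S$. Since $|g_S| \le 1$ pointwise,
\[
\mathbf{E}[fg_S] = \mathbf{E}_{\bx_S}\bigl[g_S(\bx_S)\cdot\mathbf{E}_{\by}[f(\bx_S, \by)]\bigr] \le \mathbf{E}_{\bx_S}\bigl[|\mathbf{E}_{\by}[f(\bx_S, \by)]|\bigr] = \mathbf{E}[fg'],
\]
so $\mathbf{E}[fg'] \ge c - \tau$, as required. The one subtle point is the union-bound step above: applying Cauchy--Schwarz once to the whole tail (rather than coordinate-by-coordinate) avoids a spurious $\sqrt{k}$ loss, and the fact that each excluded frequency $T$ already witnesses at least one variable in $[k] \setminus S$ makes the union bound effective.
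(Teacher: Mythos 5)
Your proof is correct and takes a genuinely different (and somewhat cleaner) route than the paper's. The paper removes low-influence coordinates \emph{one at a time}: it picks a single $j$ with $\Inf_j^{\le k}(f) < \tau^2/k^2$, averages $g$ over that one coordinate, bounds the correlation loss by $\sqrt{\Inf_j^{\le k}(f)} < \tau/k$ via Cauchy--Schwarz on the frequencies containing $j$, invokes a randomized rounding step to restore Boolean range, and then recurses; after at most $k$ removals the accumulated loss is at most $\tau$. You instead drop all bad coordinates in one shot, projecting $g$ onto the surviving set $S$, and apply Cauchy--Schwarz \emph{once} to the entire tail $g - g_S$. Two things make your version tighter: (i) the union-bound step $\sum_{T \subseteq [k],\, T \not\subseteq S}\widehat f(T)^2 \le \sum_{i \in [k]\setminus S}\Inf_i^{\le k}(f) \le \tau^2/k$ yields a loss of only $\tau/\sqrt k$ rather than $\tau$, because the square-root is applied after summing the tail rather than once per coordinate; and (ii) rounding to $\sgn\bigl(\E_{\by}[f(\cdot,\by)]\bigr)$ is deterministic and is literally the best $S$-junta against $f$, so the inequality $\E[f g_S] \le \E[f g']$ is immediate from $\|g_S\|_\infty \le 1$, sidestepping the paper's probabilistic existence argument for a good rounding. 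The underlying idea --- that any frequency of $g$ left out by $S$ must contain a low-influence coordinate of $f$, so its contribution to $\E[fg]$ is controlled by $\Inf^{\le k}$ --- is shared, but your batched version is shorter, avoids the induction, and gives a better constant.
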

\begin{proof}
Let us first start with $S = [k]$. If all variables $i \in S$ satisfy $\mathsf{Inf}_i^{\le k}(f) \ge \frac{\tau^2}{k^2}$, we are done. Else, 
let $j$ be any variable such that $\mathsf{Inf}_j^{\le k}(f) \le \frac{\tau^2}{k^2}$. For this variable $j$, define $g_j: \{\pm 1\}^n \rightarrow [-1,1]$ as the junta on the set $[k] \setminus \{j\}$ 
\[
g_j(x) = \frac12 \big( g(x) + g(x^{\oplus j})\big),
\]
where $x^{\oplus j}$ is the same as $x$ with the coordinate at $j$ flipped.  Observe that for any $S$, 
$\widehat{g}_j(S) =\widehat{g}(S)$ if $j \not \in S$ and $0$ otherwise. 
Note that $$ 
|\mathbf{E}_{x} [f\cdot g_j(x)] -  \mathbf{E}_{x} [f\cdot g(x) ]|  = | \sum_{S  \ni j} \widehat{g}(S) \widehat{f}(S) | = | \sum_{S  \ni j : |S| \leq k} \widehat{g}(S) \widehat{f}(S) | \le \sqrt{\sum_{S \ni j: |S| \le k} \widehat{f}^2(S) } \le \sqrt{\Inf_j^{\le k} (f)}. 
$$ 
Thus, $\mathbf{E}[f\cdot g_j(x)]] \ge c -\frac{\tau}{k}$. By doing a simple randomized rounding step, we can in fact, assume that the range of $g_j$ is $\pm 1$.  We now set $S \leftarrow [k] \setminus j$ and $g= g_j$ and inductively repeat the argument. This finishes the proof. 
\end{proof}

\section{Description of \textsf{Maximum-correlation-junta}}~\label{sec:exp-query}
In this section, we want to prove Theorem~\ref{thm:main-tester}. 
The final ingredient required to describe the algorithm \textsf{Maximum-correlation-junta} (from Theorem~\ref{thm:main-tester}) is the routine \textsf{Find-best-fit}. Here, we state the algorithmic guarantee for this routine.  The description and its proof of correctness is deferred to Appendix~\ref{sec:best-fit}. 
\begin{restatable}{lemma}{rl}~\label{lem:find-best-fit}
There is an algorithm \textsf{Find-best-fit}
with the following guarantee: 
\begin{enumerate}
\item The algorithm gets as input oracle access to a function $f: \{-1,1\}^n \rightarrow \{-1,1\}$ as well as a set of oracles $\mathcal{S} \subseteq \{\mathsf{Dict}_1, \ldots, \mathsf{Dict}_n\}$. We clarify that algorithm is only given the oracle $\mathsf{Dict}_i$ (for $\mathsf{Dict}_i \in \mathcal{S}$) but not $i$. 
\item The algorithm gets as input error parameter $\epsilon>0$, arity parameter $k$ and confidence parameter $\delta>0$. 
\item The algorithm  makes $N(k,|\mathcal{S}|, \epsilon, \delta) = O(2^k/\epsilon^2 \cdot |\mathcal{S}| \cdot (\log (1/\delta) + k^2 + |\mathcal{S}|))$ queries with probability $1-\delta$. 
\item Each query point (to either $f$ or oracle in $\mathcal{S}$) is distributed as a uniformly random element of $\{-1,1\}^n$. 
\item With probability $1-\delta$, the algorithm outputs a number $\widehat{\mathsf{Corr}}_{f,\mathcal{S},k}$ and $h_{\mathcal{S},k}:\{-1,1\}^k \rightarrow \{-1,1\}$ such that there exists oracles $\mathsf{Dict}_{i_1}, \ldots, \mathsf{Dict}_{i_k} \in \mathcal{S}$ 
\[
 \mathbf{E}_{\bx} [f(\bx)  \cdot h_{\mathcal{S},k}(\bx_{i_1}, \ldots, \bx_{i_k})] \ge \max_{\ell \in \mathcal{J}_{\mathcal{S},k}}\mathbf{E}_{\bx} [f(\bx)  \cdot \ell(\bx)] - \epsilon \ \ \textrm{and} \ \ |\widehat{\mathsf{Corr}}_{f,\mathcal{S},k} - \max_{\ell \in \mathcal{J}_{\mathcal{S},k}}\mathbf{E}_{\bx} [f(\bx)  \cdot \ell(\bx)]| \le \epsilon.
\]
\end{enumerate} 
\end{restatable}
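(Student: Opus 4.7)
The plan is a direct empirical-estimation approach that exploits the fact that a candidate $k$-junta depends on only $k$ of the $|\mathcal{S}|$ oracle values. I sample $N$ points $\bx^{(1)}, \ldots, \bx^{(N)}$ uniformly and independently from $\{-1,1\}^n$, query $f$ and every oracle in $\mathcal{S}$ at each sample (for a total of $N(|\mathcal{S}|+1)$ queries, each point uniformly distributed as required), and let $\bz^{(j)} \in \{-1,1\}^{\mathcal{S}}$ denote the tuple of oracle values at $\bx^{(j)}$. For every subset $T \subseteq \mathcal{S}$ of size $k$ and every pattern $y \in \{-1,1\}^T$, I form the empirical estimate $\widehat{\mu}_{T,y}$ of $\mu_{T,y} := \E[f(\bx) \mid \bz_T = y]$ by averaging $f(\bx^{(j)})$ over those $j$ with $\bz^{(j)}_T = y$. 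I then set $\widehat{h}_T(y) := \sgn(\widehat{\mu}_{T,y})$, define $\widehat{C}_T := N^{-1} \sum_j \widehat{h}_T(\bz^{(j)}_T) f(\bx^{(j)})$, and output the pair $(\widehat{C}_{T^*}, \widehat{h}_{T^*})$ for $T^* \in \arg\max_T \widehat{C}_T$.

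For correctness, fix $T$ and $y$. Since the oracles in $T$ are distinct dictators, $\Pr[\bz_T = y] = 2^{-k}$ exactly, so a Chernoff bound shows the slice contains $\Theta(N/2^k)$ samples with high probability; conditioned on this, the values $f(\bx^{(j)})$ in the slice are i.i.d.\ with mean $\mu_{T,y}$ and range $\{-1,1\}$, so Hoeffding's inequality yields $|\widehat{\mu}_{T,y} - \mu_{T,y}| \le \epsilon/4$ with failure probability $\exp(-\Omega(N\epsilon^2/2^k))$. Taking $N = \Theta(2^k \epsilon^{-2}(k^2 + |\mathcal{S}| + \log(1/\delta)))$ and union bounding over the $2^k\binom{|\mathcal{S}|}{k}$ pairs $(T,y)$ succeeds with probability $1-\delta/3$; the elementary inequality $k\log|\mathcal{S}| \le k^2 + |\mathcal{S}|$ (split on whether $k \le \log|\mathcal{S}|$) absorbs $\log\binom{|\mathcal{S}|}{k}$ into the $(k^2 + |\mathcal{S}|)$ factor. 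A parallel Hoeffding bound, union bounded over all pairs $(T, h)$ with $T$ of size $k$ and $h : \{-1,1\}^T \to \{-1,1\}$, gives $|\widehat{C}_T - \E[f(\bx)\widehat{h}_T(\bz_T)]| \le \epsilon/4$ for all $T$ simultaneously with probability $1-\delta/3$; here the $\log(2^{2^k}) = 2^k$ cost of enumerating all candidate juntas per $T$ is already absorbed by the leading $2^k$ factor in $N$.

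The last ingredient is a simple sign-perturbation inequality: whenever $|\widehat{\mu}_{T,y} - \mu_{T,y}| \le \epsilon/4$, we have $\widehat{h}_T(y)\,\mu_{T,y} \ge |\mu_{T,y}| - \epsilon/2$, since the only nontrivial case (signs disagree) forces $|\mu_{T,y}| \le \epsilon/4$. Averaging over $y$ gives $\E[f(\bx)\widehat{h}_T(\bz_T)] \ge \E_y[|\mu_{T,y}|] - \epsilon/2$ for every $T$, and $\E_y[|\mu_{T,y}|]$ is exactly the population-optimal correlation of any $k$-junta on $T$. Chaining this with the two-sided accuracy of $\widehat{C}$ and with $\widehat{C}_{T^*} \ge \widehat{C}_{T^{\mathrm{opt}}}$ shows both that $\widehat{h}_{T^*}$ has true correlation within $\epsilon$ of $\max_{\ell \in \mathcal{J}_{\mathcal{S},k}} \E[\ell(\bx)f(\bx)]$ and that $\widehat{C}_{T^*}$ is within $\epsilon$ of the same quantity, which is the conclusion of the lemma.

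The main obstacle is calibrating $N$ to absorb the union bound over the $\binom{|\mathcal{S}|}{k}$ candidate subsets without incurring an $|\mathcal{S}|^k$ blowup in the query count; the inequality $k\log|\mathcal{S}| \le k^2 + |\mathcal{S}|$ is precisely what converts this multiplicative cost into the additive $(k^2 + |\mathcal{S}|)$ factor in $N$ and matches the stated query bound $O(2^k\epsilon^{-2}|\mathcal{S}|(k^2 + |\mathcal{S}| + \log(1/\delta)))$. A minor technical point is that $\widehat{h}_T$ is built from the same samples used to compute $\widehat{C}_T$, but the additional union bound over the at most $2^{2^k}$ possible juntas per $T$ handles this at no extra asymptotic cost.
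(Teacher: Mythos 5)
Your proof is correct and follows essentially the same strategy as the paper: condition on the $k$-tuple of oracle values, empirically estimate the conditional mean of $f$ on each of the $2^k$ slices, round the signs to get the hypothesis, and maximize the estimated correlation over all $\binom{|\mathcal{S}|}{k}$ subsets. The paper's version Poissonizes the sample size and normalizes each slice estimate by the expected slice count $M=N2^{-k}$ (which sidesteps the separate Chernoff step you use to control $|A_{T,y}|$), and it avoids your auxiliary union bound over all $2^{2^k}$ candidate juntas per $T$ by observing that once the per-$(\mathcal{T},y)$ estimates are accurate, both the population correlation of $h_{\mathcal{T}}$ and the estimator $\widehat{\mathsf{Corr}}_{\mathcal{T}}$ are directly controlled; these are cosmetic differences and both versions land on the same query-complexity bound.
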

We next describe the algorithm \textsf{Maximum-correlation-junta} (Algorithm~\ref{alg:maximum-junta}).

\begin{algorithm}[H]
    \SetKwFor{RepeatTimes}{repeat}{times}{end}
    \KwIn{$f$ (target function), 
   $k$ (arity of Junta),    $\epsilon$ (distance parameter)}
    \KwOut{$\widehat{\mathsf{Corr}}_{f,k} \in [0,1]$ and $h: \{-1,1\}^k \rightarrow \{-1,1\}$} 
    \tcp{Set parameters for the algorithm}
    Let $\tau=\frac{\epsilon}{4}$, $\delta =1/3$ \; 
    Let $\mathsf{R} = \poly(k,\tau^{-1}, \log(1/\delta))$  -- the upper bound 
    on output size from Item~3 of Lemma~\ref{lem:coord-oracle} \; 
    Set $N = N(k, R, \epsilon/4, \delta/4)$ where $N(\cdot)$ is the function in Item~3 of Lemma~\ref{lem:find-best-fit} \;
    Set $\nu = \delta/ 4N$ \;
    \tcp{Construct the initial oracles}
Run \textsf{Construct-coordinate-oracle}   with input function $f$, junta arity parameter $k$,   confidence parameter $\delta/4$, first accuracy parameter $\nu$ and second accuracy parameter $\tau$ \;
    Let $\dictset$ be the set of returned oracles \; 
    \tcp{Use the oracles to find maximally correlated junta}
    Error parameter for queries to any oracle $g\in \dictset$ is set to $\nu$ \;
    Run \textsf{Find=best-fit} with target function $f$, oracles given by $\dictset$, confidence parameter $\delta/4$, distance parameter $\epsilon/4$ and arity parameter $k$. \;
    Let the output of this routine be $ \widehat{\mathsf{Corr}}_{f,\mathcal{D},k}$ and  $h_{\mathcal{D},k}: \{-1,1\}^k \rightarrow \{-1,1\}$. \; 
    Set $h \leftarrow h_{\mathcal{D},k}$ and $ \widehat{\mathsf{Corr}}_{f,k} \leftarrow \widehat{\mathsf{Corr}}_{f,\mathcal{D},k}$ \;
    \Return{($\widehat{\mathsf{Corr}}_{f,k}$, $h$)}
    \caption{\textsf{Maximum-correlation-junta}}\label{alg:maximum-junta}
\end{algorithm}
\ignore{

We are now ready to describe the algorithm \textsf{Maximum-correlation-junta} (see Figure~\ref{fig:trj1}). 
\begin{figure}[tb]
\hrule
\vline
\begin{minipage}[t]{0.98\linewidth}
\vspace{10 pt}
\begin{center}
\begin{minipage}[h]{0.95\linewidth}
{\small
\underline{\textsf{Input}}
\vspace{5 pt}

\begin{tabular}{ccl}
$k$ &:=& arity of junta  \\
$\epsilon$ &:=& distance parameter  \\
$\delta$ &:=& confidence parameter 
\end{tabular}

\underline{\textsf{Parameters}}
\vspace{5 pt}

\begin{tabular}{ccl}
$\tau$ &:=& $\epsilon/4$. \\ 
$\delta_0$ &=:& $2^{-k^2}$. \\
$R$ &=:& $\poly(k,\epsilon^{-1} , \log(1/\delta))$ (upper bound on size of output from Lemma~\ref{lem:coord-oracle}). \\ 
$\epsilon_0$ &=:& $2^{-k} \cdot (R \cdot k)^{-2}$. \\ 
\end{tabular}

\vspace{5 pt}
\underline{\textsf{Testing algorithm}}
\begin{enumerate}
\item Run \textsf{Construct-coordinate-oracle} with input function $f$, junta arity parameter $k$, confidence parameter $\delta_0$, error parameter $\epsilon_0$ and influence threshold parameter $\tau$. 
\item Let $\mathcal{D}$ be the output of Step~1.  Let $R$ be the upper bound on $|\mathcal{D}|$ (from Lemma~\ref{lem:coord-oracle}). 
\item Run \textsf{Find-best-fit} with target function $f$, oracles given by $\mathcal{D}$, confidence parameter $\delta$, distance parameter $\epsilon$ and arity parameter $k$. 
\item Let the output be of the routine be $ \widehat{\mathsf{Corr}}_{f,\mathcal{D},k}$ and $h_{\mathcal{D},k}: \{-1,1\}^k \rightarrow \{-1,1\}$. 
\end{enumerate}

\vspace{5 pt}
}
\end{minipage}
\end{center}

\end{minipage}
\hfill \vline
\hrule
\caption{Description of the \textsf{Maximum-correlation-junta} algorithm}
\label{fig:trj1}
\end{figure}
}
\begin{proofof}{Theorem~\ref{thm:main-tester}}
We begin with the following claim. 
\begin{claim}~\label{clm:upper-bound-sample}
The query complexity of the algorithm is $2^k \cdot \mathsf{poly}(k,1/\epsilon)$. 
\end{claim}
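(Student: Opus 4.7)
The plan is a routine accounting exercise: plug the parameter settings of Algorithm~\ref{alg:maximum-junta} into the query-complexity bounds of Lemma~\ref{lem:coord-oracle} and Lemma~\ref{lem:find-best-fit}, and verify that both phases together cost $2^k \cdot \poly(k,1/\epsilon)$. There is no ``hard part'' here per se; the one subtlety is that \textsf{Find-best-fit} does not query $f$ directly but queries oracles $g \in \dictset$, each call to which costs some number of queries to $f$. We therefore have to track (a) the number of oracle queries made by \textsf{Find-best-fit} and (b) the per-oracle-query cost, noting in particular that the per-query confidence $\nu$ is itself chosen as a function of $N$.

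First I would record the cheap parameters. Since $\delta = 1/3$ and $\tau = \epsilon/4$, Item~3 of Lemma~\ref{lem:coord-oracle} gives $R = |\dictset| \le \poly(k,1/\epsilon)$. Substituting $|\mathcal{S}| = R$ into the $N(\cdot)$ formula of Lemma~\ref{lem:find-best-fit},
\[
    N \;=\; O\!\left(\tfrac{2^k}{\epsilon^2}\cdot R \cdot (\log(1/\delta) + k^2 + R)\right)
    \;=\; 2^k \cdot \poly(k,1/\epsilon),
\]
so that the accuracy parameter is $\nu = \delta/(4N) = 2^{-k}/\poly(k,1/\epsilon)$, and in particular $\log(1/\nu) = O(k + \log(1/\epsilon))$.

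Next I would bound the two phases. By Item~4 of Lemma~\ref{lem:coord-oracle}, \textsf{Construct-coordinate-oracle} issues at most $\nu^{-1}\cdot \poly(k,1/\epsilon,\log(1/\delta)) = 2^k \cdot \poly(k,1/\epsilon)$ queries to $f$. For the second phase, by Item~3 of Lemma~\ref{lem:find-best-fit} \textsf{Find-best-fit} makes $N$ queries, each of which is either a query to $f$ (cost $1$) or a query to some $g \in \dictset$. From the discussion following Lemma~\ref{lem:isolate-variable} (used to simulate a coordinate oracle to confidence $\nu$ in the sense of Definition~\ref{def:approximate-oracle}), a single oracle query can be answered using $\poly(k,1/\epsilon)\cdot \log(1/\nu) = \poly(k,1/\epsilon)$ queries to $f$. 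Multiplying, the second phase costs at most
\[
    N \cdot \poly(k,1/\epsilon) \;=\; 2^k \cdot \poly(k,1/\epsilon)
\]
queries to $f$.

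Summing the two phases yields the claimed bound $2^k \cdot \poly(k,1/\epsilon)$. The only place to be careful is the circular-looking dependence $\nu = \delta/(4N)$ where $N$ depends on $R$ but not on $\nu$, so the definition is well-posed; and the only reason $\log(1/\nu)$ does not blow up the final bound is that $\log(1/\nu) = O(k + \log(1/\epsilon))$ is already absorbed into $\poly(k,1/\epsilon) \cdot 2^{O(\log k)}$, hence into $2^k \cdot \poly(k,1/\epsilon)$.
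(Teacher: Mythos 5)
Your proposal is correct and follows essentially the same accounting as the paper's own proof: substitute the parameter settings into Lemma~\ref{lem:coord-oracle} and Lemma~\ref{lem:find-best-fit}, observe that $\nu^{-1} = 2^k\cdot\poly(k,1/\epsilon)$ so the first phase is $2^k\cdot\poly(k,1/\epsilon)$, and that each of the $N = 2^k\cdot\poly(k,1/\epsilon)$ queries made by \textsf{Find-best-fit} costs only $\poly(k,\log(1/\nu))=\poly(k,1/\epsilon)$ queries to $f$ since $\log(1/\nu)=O(k+\log(1/\epsilon))$. You also correctly flag (and dismiss) the apparent circularity in setting $\nu$ as a function of $N$, which is the one place one might otherwise worry.
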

\begin{proof}
By just plugging the bounds, we see that $\mathsf{R}$ defined in Step~2 of the algorithm is $\poly(k/\epsilon)$; $N$ defined in Step~3 is $2^k \cdot \poly(k/\epsilon)$ and $\nu$ defined in Step~4 is $2^{-k} \cdot \poly(\epsilon/k)$. 

Now, the query complexity of Step~5, i.e., \textsf{Construct-coordinate-oracle} is then $\nu^{-1} \cdot \poly(k, \tau^{-1}) = 2^{k} \cdot \poly(k/\epsilon)$ (Item~4 of Lemma~\ref{lem:coord-oracle}).
Next, note that $|\dictset| \le \poly(k/\epsilon)$ (by plugging the bound from Item~3 of Lemma~\ref{lem:coord-oracle}). However, this means that the algorithm \textsf{Find-best-fit} (from Step~3 of Lemma~\ref{lem:find-best-fit}) makes at most  $2^k \cdot  \poly(k, |\dictset|) = 2^k \cdot \poly(k,1/\epsilon)$ queries where each query is either to $f$ or to an oracle in $\dictset$. However, each call to $\mathcal{D}$ is made with error $\nu$, the query complexity of making each such call is $\poly(k, \log(1/\nu)) = \poly(k)$. Thus, the total query complexity is $2^k \cdot \poly(k/\epsilon)$. 
\end{proof}
Having proven a bound on the query complexity, we now
turn to the proof of correctness of this algorithm. Note that every oracle $g \in \mathcal{D}$ is $\nu$ close to $\mathsf{Dict}_i$ for some $i \in [n]$. Further, at point $x$, (by definition of a $\nu$-oracle), we have an algorithm, which returns the value $g(x)$ with probability at least $1-\nu$. We say that the evaluation of $g$ at point $x$ is \emph{good}, if we get the value of $g(x) = \mathsf{Dict}_i(x)$. Note that a randomly chosen point $x \in \{-1,1\}^n$, 
\begin{eqnarray*}
\Pr_{\bx \in \{-1,1\}^n}[ \textrm{Evaluation of }g \textrm{ is not good}] &\le& \Pr_{\bx \in \{-1,1\}^n}[g(x) \not= \mathsf{Dict}_i(x)] + \Pr[\textrm{Evaluation of }g \textrm{ is incorrect}] \\ &\le& \nu + \nu = 2\nu. 
\end{eqnarray*}
Since the query by the algorithm \textsf{Find-best-fit}   to each oracle in $\dictset$ is a random point in $\{-1,1\}^n$ (Item~4 in Lemma~\ref{lem:find-best-fit}) and the total number of queries to \textsf{Find-best-fit} is $N$, hence the probability the evaluation of $g \in \dictset$ is not good at any point is at most $2N \nu =\delta/2$. Thus, from now on, we will assume that $g \in \dictset$ is $\nu$-close to $\mathsf{Dict}_i$, then the algorithm \textsf{Find-best-fit} has exact access to $\mathsf{Dict}_i$. However, with this the following claim is immediate from Item~5 of Lemma~\ref{lem:find-best-fit}. 
\begin{claim}\label{clm:soundness}
Suppose the algorithm outputs $(\widehat{\mathsf{Corr}_{f,k}},h)$. Let $\mathsf{V} = \{i \in [n]: \mathsf{Dict}_i$ is $\nu$-close to some $g \in \mathcal{D}\}$. Then, with probability $1-\delta/4$, there is some $\{i_1, \ldots, i_k\} \in \mathsf{V}$ such that 
\[
\mathbf{E}_{\bx} [f(\bx) \cdot h(\bx_{i_1}, \ldots, \bx_{i_k})] \ge \max_{\ell \in \mathcal{J}_{\mathsf{V},k}} \mathbf{E}_{\bx} [f( \bx) \cdot \ell(\bx)] - \frac{\epsilon}{4} \textrm{ and } |\widehat{\mathsf{Corr}_{f,k}} - \max_{\ell \in \mathcal{J}_{\mathsf{V},k}} \mathbf{E}_{\bx} [f( \bx) \cdot \ell(\bx)]| \le \frac{\epsilon}{4}. 
\] 
\end{claim}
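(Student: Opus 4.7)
\begin{proofsketch}
The plan is that the claim follows essentially by bookkeeping on top of Item~5 of Lemma~\ref{lem:find-best-fit}: all of the real work was done in the preceding paragraph, where we argued that with probability at least $1 - \delta/2$ every evaluation of every $g \in \dictset$ performed by \textsf{Find-best-fit} returns the value $\mathsf{Dict}_i(x)$, where $i$ is the (unique) coordinate such that $g$ is $\nu$-close to $\mathsf{Dict}_i$ or $-\mathsf{Dict}_i$ (uniqueness uses $\nu \le 1/8$, as guaranteed by Definition~\ref{def:approximate-oracle}). The set $\mathsf{V}$ is exactly the set of such coordinates, so on this good event we may, without loss of generality, pretend that \textsf{Find-best-fit} was invoked with the exact oracle set $\mathcal{S} = \{\mathsf{Dict}_i : i \in \mathsf{V}\}$.

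Next I would invoke Item~5 of Lemma~\ref{lem:find-best-fit} with this $\mathcal{S}$, confidence parameter $\delta/4$, distance parameter $\epsilon/4$, and arity parameter $k$ (exactly the values used when \textsf{Find-best-fit} was called inside \textsf{Maximum-correlation-junta}). This produces, with probability at least $1 - \delta/4$, a number $\widehat{\mathsf{Corr}}_{f,\mathcal{D},k}$ and a function $h_{\mathcal{D},k}: \{-1,1\}^k \to \{-1,1\}$ together with indices $i_1,\dots,i_k$ (the coordinates corresponding to the oracles selected from $\mathcal{S}$) such that
\[
\mathbf{E}_{\bx}[f(\bx) \cdot h_{\mathcal{D},k}(\bx_{i_1},\dots,\bx_{i_k})] \ge \max_{\ell \in \mathcal{J}_{\mathcal{S},k}} \mathbf{E}_{\bx}[f(\bx)\cdot \ell(\bx)] - \frac{\epsilon}{4}
\]
and $|\widehat{\mathsf{Corr}}_{f,\mathcal{D},k} - \max_{\ell \in \mathcal{J}_{\mathcal{S},k}} \mathbf{E}_{\bx}[f(\bx)\cdot \ell(\bx)]| \le \epsilon/4$.

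The last step is a trivial identification: because $\mathcal{S} = \{\mathsf{Dict}_i : i \in \mathsf{V}\}$, a junta in $\mathcal{J}_{\mathcal{S},k}$ is by definition a function on some $k$ of the variables indexed by $\mathsf{V}$, so $\mathcal{J}_{\mathcal{S},k} = \mathcal{J}_{\mathsf{V},k}$ as classes of functions on $\{-1,1\}^n$. Substituting this into the two displayed inequalities above, and using the assignments $h \leftarrow h_{\mathcal{D},k}$ and $\widehat{\mathsf{Corr}}_{f,k} \leftarrow \widehat{\mathsf{Corr}}_{f,\mathcal{D},k}$ made in the last lines of Algorithm~\ref{alg:maximum-junta}, yields the two inequalities in the statement of the claim. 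No part of this argument is genuinely hard; the only thing one has to be slightly careful about is the probability accounting — the $\delta/4$ failure probability quoted in the claim is the one inherited from the internal confidence of \textsf{Find-best-fit}, conditioned on the good-evaluation event of probability at least $1 - \delta/2$ that was already paid for in the paragraph preceding the claim.
\end{proofsketch}
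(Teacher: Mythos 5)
Your proposal matches the paper's own proof: both rely on the preceding paragraph's $1-\delta/2$ good-evaluation event (which lets you treat the approximate oracles as the genuine dictators $\{\Dict_i : i \in \mathsf{V}\}$) and then simply read off the conclusion from Item~5 of Lemma~\ref{lem:find-best-fit} with confidence $\delta/4$ and accuracy $\epsilon/4$. Your identification of $\mathcal{J}_{\mathcal{S},k}$ with $\mathcal{J}_{\mathsf{V},k}$ and your probability accounting are exactly what the paper leaves implicit when it says the claim is ``immediate''.
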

Finally, we have the following claim. 
\begin{claim}~\label{clm:completeness}
Let $f:\{-1,1\}^n \rightarrow \{-1,1\}$ and assume that there exists a $k$-junta $g:\{-1,1\}^n \rightarrow \{-1,1\}$ such that $\mathbf{E}_{\bx} [f(\bx) \cdot g(\bx)] \ge \mathsf{Corr}_{f,g}$. Let $\mathcal{D}$ be the set of returned oracles in Step~6.  Let $\mathsf{V} = \{i \in [n]: \mathsf{Dict}_i$ is $\nu$-close to some $g \in \mathcal{D}\}$. Then, with probability $1-\delta/4$, there exists $T \subseteq \mathsf{V}$, $|T| =k$ and a junta on $T$ with correlation at least $\mathsf{Corr}_{f,g} - \epsilon/4$ 
with $f$. 
\end{claim}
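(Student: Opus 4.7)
The plan is to combine Claim~\ref{clm:correlation1} (influence pruning) with Lemma~\ref{lem:coord-oracle} (correctness of \textsf{Construct-coordinate-oracle}). The first tool says that any well-correlated $k$-junta can, at the cost of $\tau$ in correlation, be replaced by a junta supported only on coordinates with large low-degree influence in $f$; the second tool says that, with high probability, the set $\mathcal{D}$ produced in Step~6 is an $\nu$-oracle for a set of coordinates containing exactly these high-influence coordinates. The claim then follows by chaining these two statements.

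Concretely, first I would apply Claim~\ref{clm:correlation1} to the $k$-junta $g$ with parameter $\tau = \epsilon/4$. This yields a subset $S$ of the variables on which $g$ depends (so $|S| \le k$) such that $\Inf_i^{\le k}(f) \ge \tau^2/k^2 = \epsilon^2/(16 k^2)$ for every $i \in S$, together with a $\pm 1$-valued junta $g'$ on $S$ satisfying $\mathbf{E}[f \cdot g'] \ge \mathsf{Corr}_{f,g} - \epsilon/4$.

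Next, observe that Step~5 of \textsf{Maximum-correlation-junta} invokes \textsf{Construct-coordinate-oracle} with exactly the same second accuracy parameter $\tau = \epsilon/4$ and confidence $\delta/4$. By Lemma~\ref{lem:coord-oracle}, with probability at least $1 - \delta/4$ the returned set $\mathcal{D}$ is an $\nu$-oracle for some $\mathcal{S} \supseteq \{i : \Inf_i^{\le k}(f) \ge \tau^2/k^2\}$. In particular $S \subseteq \mathcal{S}$, and by Definition~\ref{def:approximate-oracle}, for every $i \in \mathcal{S}$ there is some $g \in \mathcal{D}$ that is $\nu$-close to $\mathsf{Dict}_i$ or $-\mathsf{Dict}_i$. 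By definition of $\mathsf{V}$, this gives $S \subseteq \mathcal{S} \subseteq \mathsf{V}$.

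Finally, take $T \subseteq \mathsf{V}$ to be any set of size $k$ containing $S$ (padding $S$ with arbitrary elements of $\mathsf{V}$; in the typical regime the bound on $|\mathcal{D}|$ from Lemma~\ref{lem:coord-oracle} combined with $|\mathsf{V}| \ge |S|$ allows such padding, and in the degenerate case where $|\mathsf{V}| < k$ one simply reinterprets $g'$ on $S \subseteq \mathsf{V}$ as a junta on at most $k$ variables). The function $g'$, viewed as a $k$-junta on $T \supseteq S$ (depending only on the $S$-coordinates), still has correlation at least $\mathsf{Corr}_{f,g} - \epsilon/4$ with $f$. The only substantive step is invoking Lemma~\ref{lem:coord-oracle}, which contributes the $\delta/4$ failure probability; there is no further probabilistic analysis to perform, and the main ``work'' of the proof has already been packaged into Claim~\ref{clm:correlation1} and Lemma~\ref{lem:coord-oracle}. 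The mildly fiddly step is the bookkeeping to ensure $|T| = k$, but this is essentially cosmetic.
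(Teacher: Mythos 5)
Your proposal is correct and follows the same route as the paper: apply Claim~\ref{clm:correlation1} with $\tau = \epsilon/4$ to pass to a sub-junta on high-influence coordinates, then apply Lemma~\ref{lem:coord-oracle} to conclude those coordinates lie in $\mathsf{V}$ with probability $1-\delta/4$. The only extra content is your bookkeeping to pad $S$ up to size $k$, which the paper glosses over but which is harmless.
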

\begin{proof}
Let $W$ be the set of variables appearing $g$. By Claim~\ref{clm:correlation1} (setting $\tau = \epsilon/4$), there is subset $W' \subseteq W$ and a $W'$-junta $r: \{-1,1\}^n \rightarrow \{-1,1\}$ with the following properties: (a) $\mathbf{E}_{\bx}[r(\bx) \cdot f(\bx) ] \ge \mathsf{Corr}_{f,g} - \frac{\epsilon}{4}$. (b) For every $i \in W'$, $\mathsf{Inf}^{\le k}_i(f) \ge \frac{\epsilon^2}{16k^2}$. Now, with the same value of $\tau$, applying Lemma~\ref{lem:coord-oracle} implies that $W' \subseteq \mathsf{V}$ with probability $1-\delta/4$. This finishes the claim. 
\end{proof}
The proof of correctness is now immediate from Claim~\ref{clm:completeness} and Claim~\ref{clm:soundness}. 
\end{proofof}

%

\ignore{
With this, we are ready to prove the correctness of the algorithm \textsf{Maximum-correlation-junta}. We first prove the soundness of the algorithm. 
\begin{lemma}~\label{lem:soundness}
If the algorithm outputs a function $g: \{-1,1\}^k \rightarrow \{-1,1\}$ and $\mathsf{Est}_{\max} \in [0,1]$, then with probability $1-\delta$, there is a set $\mathcal{T} \subseteq [n]$ of size $k$ such that 
\[
\mathbf{E}_{\bx}[f(\bx) \cdot g(\bx_{i_1},\ldots, \bx_{i_k})] \ge \mathsf{Est}_{\max}-\eta. 
\]
\end{lemma}
\begin{proof}
This is essentially immediate from the guarantee of the algorithm \textsf{Find-best-fit}. In particular, if the ``input dictator functions" to \textsf{Find-best-fit} were actual dictator functions, then this guarantee  is obvious from the definition of \textsf{Find-best-fit}. 

Now, instead consider that every input to \textsf{Find-best-fit} passes the dictator test with probability with soundness parameter $\epsilon_0$ which is $\delta.Q$ queries made to the algorithm. Since each query is to a uniformly random point of $\{-1,1
\}^n$, this finishes the proof. 
\end{proof}}


\section{The polynomial-query gap tester~\label{sec:poly-tester} }

Recall the context of the relaxed tester compared to the original one: we have
already identified (using Lemma~\ref{lem:coord-oracle}) a subset $\coords$ of
size $\poly(k, 1/\epsilon)$ such that all potentially interesting coordinates
of $f$ are contained in $\coords$ (in the sense of
Claim~\ref{clm:correlation1}).
Recall, moreover, that we do not have an explicit representation of the coordinates $\coords$;
we only have access to coordinate oracles in the sense of Definition~\ref{def:oracle}.
Nevertheless, let us first
pretend that we do have explicit access to $\coords$, in order to explain roughly
how our algorithm works.

The first step of our algorithm is to replace $f$ by $\favg$, defined by
$\favg(x) = \E_{\by \sim \Unif} [f(\by) \mid \by_\coords = x_\coords]$.
This step turns out to be unnecessary if we have explicit access to $\coords$,
but on the other hand it is also clearly harmless, because $\favg$ is the
orthogonal projection of $f$ onto the space of $\coords$-juntas, and hence
$\E[fg] = \E[\favg g]$ for every $g \in \junta_{\coords,k'}$.

The second step of our algorithm is to replace $\favg$ with $\fsmooth$, defined
by $\fsmooth = T_{1-\epsilon/(2k)} \favg$.
Applying this noise has two nice features: it approximately preserves correlation
with $k$-Juntas, and it allows us to bound total influence. First, we will show
that it preserves correlation:

\begin{lemma}\label{lem:smoothing}
    For every $k$-Junta $g$, if $s = \epsilon/(2k)$ then
    \[
        \|g - T_{1-s} g\|_2 \le \frac{\epsilon}{2} \|g\|_2.
    \]
    In particular, for all $f: \{\pm 1\}^n \to [-1, 1]$ and $g \in \junta_{\coords,k}$.
    \[
        \left|\E [(T_{1-s} f) g] - \E [fg]\right| \le \epsilon/2.
    \]
\end{lemma}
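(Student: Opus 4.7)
The plan is to prove both inequalities via a short Fourier-analytic computation, using only the expansion $T_{1-s} h = \sum_S (1-s)^{|S|} \hat{h}(S) \chi_S$ and the self-adjointness of $T_{1-s}$ with respect to the uniform-measure inner product.

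For the first inequality, I would start from Parseval to write
\[
    \|g - T_{1-s} g\|_2^2 \;=\; \sum_{S \subseteq [n]} \bigl(1 - (1-s)^{|S|}\bigr)^2 \hat{g}(S)^2.
\]
Since $g$ is a $k$-junta, $\hat{g}(S) = 0$ whenever $|S| > k$, so the sum is effectively restricted to $|S| \le k$. For such $S$, the elementary inequality $(1-s)^{|S|} \ge 1 - s|S| \ge 1 - sk$ (valid for $s \in [0,1]$) gives $1 - (1-s)^{|S|} \le sk = \epsilon/2$. Pulling out this uniform bound and using Parseval again yields
\[
    \|g - T_{1-s} g\|_2^2 \;\le\; \Bigl(\frac{\epsilon}{2}\Bigr)^2 \sum_{S} \hat{g}(S)^2 \;=\; \Bigl(\frac{\epsilon}{2}\Bigr)^2 \|g\|_2^2,
\]
which is exactly the first claim.

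For the second inequality, I would use that $T_{1-s}$ is self-adjoint under the uniform-measure inner product (a direct consequence of its diagonal action on the Fourier basis), so
\[
    \E\bigl[(T_{1-s} f)\, g\bigr] - \E[fg] \;=\; \E\bigl[f\,(T_{1-s} g - g)\bigr].
\]
By Cauchy--Schwarz, the absolute value of the right-hand side is at most $\|f\|_2 \cdot \|T_{1-s} g - g\|_2$. Since $f$ takes values in $[-1,1]$ we have $\|f\|_2 \le 1$, and since $g$ is Boolean ($g \in \junta_{\coords,k}$ means $g: \{\pm 1\}^n \to \{\pm 1\}$) we have $\|g\|_2 = 1$. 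Combining with the first part gives $\bigl|\E[(T_{1-s}f)g] - \E[fg]\bigr| \le \epsilon/2$, as desired.

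No step is really hard here: the whole argument is the standard observation that low-degree truncations are insensitive to small noise, together with self-adjointness of $T_\rho$. The only mild subtlety is remembering the sign/boundedness conventions ($\|f\|_2 \le 1$ because $f$ is $[-1,1]$-valued, $\|g\|_2 = 1$ because $g$ is $\pm 1$-valued) so that the bound reads $\epsilon/2$ rather than $(\epsilon/2)\|f\|_2 \|g\|_2$.
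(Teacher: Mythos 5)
Your proof is correct and follows essentially the same route as the paper: Parseval plus the junta condition restricting to $|S| \le k$, the bound $1 - (1-s)^{|S|} \le sk = \epsilon/2$, then self-adjointness of $T_{1-s}$ and Cauchy--Schwarz for the second claim. The only difference is that you spell out the elementary inequality $(1-s)^{|S|} \ge 1 - s|S|$ that the paper leaves implicit.
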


Consequently,  $\max_{g \in \junta_{\coords,k}} \E[\fsmooth g]$
is essentially the same as $\max_{g \in \junta_{\coords,k}} \E[\favg g]$.

\begin{proof}
    If $g$ is a $k$-Junta then $\hat g_S = 0$ whenever $|S| > k$, and so
    \[
        \|g - T_{1-s} g\|_2^2 = \sum_{|S| \le k} (1 - (1-s)^{|S|})^2 \hat g^2(S) \le \frac{\epsilon^2}{4} \sum_{S} \hat g^2(S)
        \le \frac{\epsilon^2}{4} \|g\|_2^2.
    \]
    The second claim follows from the fact that $\E[(T_{1-s} f) g] = \E [f (T_{1-s} g)]$, and so $\E[(T_{1-s} f - f) g] = \E[f (T_{1-s} g - g)]$.
    Then apply Cauchy-Schwarz and the first claim (together with the fact that $\|f\|_2$ and $\|g\|_2$ are at most 1).
\end{proof}

As we claimed above, we can also bound the total influence of a smoothed function:

\begin{lemma}\label{lem:high-influence-coords}
    If $f: \{\pm 1\}^n \to [-1, 1]$ then $\Inf(T_{1-s} f) \le \frac{1}{2 s}$. In particular,
    \[
        |\{1 \le j \le n: \Inf_j(T_{1-s} f) \ge 2s\}| \le \frac{1}{4s^2}.
    \]
\end{lemma}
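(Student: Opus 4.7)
My plan is to prove both claims directly through the Fourier expansion of influence. Recall that for any real-valued $h: \{\pm 1\}^n \to \R$ we have $\Inf(h) = \sum_{S} |S| \hat h(S)^2 = \sum_{j=1}^n \Inf_j(h)$, and the noise operator acts as $\widehat{T_{1-s} f}(S) = (1-s)^{|S|} \hat f(S)$.

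Applying these two identities to $h = T_{1-s} f$ yields
\[
    \Inf(T_{1-s} f) \;=\; \sum_{S \subseteq [n]} |S|\,(1-s)^{2|S|}\,\hat f(S)^2 \;\le\; \Big(\max_{k \ge 0} k(1-s)^{2k}\Big) \sum_{S} \hat f(S)^2.
\]
The first step, and the one quantitative estimate I need, is to bound $\max_{k \ge 0} k(1-s)^{2k}$. Writing $(1-s)^{2k} = e^{-uk}$ with $u = -2\ln(1-s) \ge 2s$, elementary calculus gives the maximum at $k = 1/u$ with value $1/(eu) \le 1/(2es) \le 1/(2s)$. Combined with Parseval's identity, $\sum_S \hat f(S)^2 = \E[f^2] \le 1$ (because $f$ takes values in $[-1,1]$), this gives the first claim $\Inf(T_{1-s} f) \le 1/(2s)$.

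The second claim follows by a straightforward counting (Markov) argument. Let $J = \{j : \Inf_j(T_{1-s} f) \ge 2s\}$. Since individual influences are nonnegative and sum to the total influence,
\[
    2s \cdot |J| \;\le\; \sum_{j \in J} \Inf_j(T_{1-s} f) \;\le\; \Inf(T_{1-s} f) \;\le\; \frac{1}{2s},
\]
which rearranges to $|J| \le 1/(4s^2)$, as required. There is no real obstacle here: the only nontrivial step is the one-variable optimization of $k(1-s)^{2k}$, which I would treat as standard.
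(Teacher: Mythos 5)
Your proof is correct and uses essentially the same approach as the paper: express $\Inf(T_{1-s}f)$ via the Fourier formula, bound the result by a one-variable maximization of $k\,(1-s)^{ck}$ together with Parseval's $\sum_S \hat f(S)^2 \le 1$, and deduce the second claim by a Markov-type counting. In fact your write-up is slightly cleaner than the paper's: you correctly carry the factor $(1-s)^{2|S|}$ through the influence formula, whereas the paper writes $(1-s)^{|S|}$ (an apparent typo, though harmless since it only weakens the equality to an inequality), and both routes arrive at the same $1/(2s)$ bound.
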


\begin{proof}
    Since $f$ takes values in $[-1, 1]$, $\E[f^2] \le 1$ and so $\sum_S \hat f^2(S) \le 1$. On the other hand,
    \[
    \Inf(T_{1-s} f) = \sum_S |S| (1 - s)^{|S|} \hat f^2(S) \le \max\{j (1-s)^j: 1 \le j \le n\} \le \frac{1}{e s} \le \frac{1}{2 s}
    \]
    This proves the first claim. The second follows from Markov's inequality.
\end{proof}

According to Lemma~\ref{lem:high-influence-coords}, if $\coords' = \{i \in \coords: \Inf_i(\fsmooth) \ge \epsilon/k\}$ then
$|\coords'| \le k^2/\epsilon^2 = k'$.
On the other hand, we can easily see that coordinates not belonging to $\coords'$ are irrelevant when
it comes to approximating $\fsmooth$ by a $k$-Junta:

\begin{lemma}\label{lem:low-influence-coords}
    For $f: \{\pm 1\}^n \to [-1, 1]$, let $\coords' \subset \coords \subset [n]$ satisfy $i \in \coords'$ whenever
    $\Inf_i(f) \ge t$. Then
    \[
        \left|\max_{g \in \junta_{\coords,k}} \E[f g] - \max_{g \in \junta_{\coords',k}} \E[f g] \right| \le tk.
    \]
\end{lemma}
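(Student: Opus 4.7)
The plan is to fix a $k$-junta $g \in \junta_{\coords,k}$ achieving $\max_{g \in \junta_{\coords,k}} \E[fg]$, with relevant variable set $J \subseteq \coords$ of size at most $k$, and to exhibit a Boolean $k$-junta $g' \in \junta_{\coords',k}$ with almost the same correlation with $f$. Split $J = J_0 \sqcup J_1$ where $J_0 := J \cap \coords'$ and $J_1 := J \setminus \coords'$; by the hypothesis on $\coords'$ every $i \in J_1$ has $\Inf_i(f) < t$, and clearly $|J_1| \le k$. The variables of $J_1$ are precisely the ones we need to ``remove.''

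The natural candidate is the orthogonal projection $\tilde g(x) := \E_{\by_{J_1}}[g(x_{J_0}, \by_{J_1})]$, which is a $J_0$-junta taking values in $[-1,1]$. To convert $\tilde g$ into a genuine Boolean junta (which will then lie in $\junta_{\coords',k}$, since $|J_0| \le k$), I would use a standard averaging argument: for $\by^*$ uniform on $\{\pm 1\}^{J_1}$, the Boolean junta $h_{\by^*}(x) := g(x_{J_0}, \by^*)$ satisfies $\E_{\by^*}\E_{\bx}[f(\bx) h_{\by^*}(\bx)] = \E[f\tilde g]$, so some specific realization $\by^*$ yields a Boolean $g' \in \junta_{\coords',k}$ with $\E[fg'] \ge \E[f\tilde g]$. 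The task therefore reduces to estimating the correlation loss $|\E[fg] - \E[f\tilde g]|$.

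By Parseval, $\widehat{\tilde g}(S) = \hat g(S) \cdot \mathbf{1}[S \subseteq J_0]$, and hence
\[
    \E[fg] - \E[f\tilde g] = \sum_{\substack{S \subseteq J \\ S \cap J_1 \ne \emptyset}} \hat f(S)\,\hat g(S).
\]
I would bound this by Cauchy--Schwarz using $\|g\|_2 = 1$ (since $g$ is Boolean) and the low-influence estimate $\sum_{S \cap J_1 \ne \emptyset} \hat f^2(S) \le \sum_{i \in J_1} \Inf_i(f) \le |J_1|\,t \le kt$; this immediately yields the claimed control on the correlation loss. The main obstacle is just the bookkeeping: once $\tilde g$ is identified as the right replacement and the Fourier support of the difference is localized to subsets that touch $J_1$, the low-influence hypothesis together with the boundedness of $g$ makes the rest routine. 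An equivalent approach is to iteratively remove the coordinates of $J_1$ one at a time (replacing $g^{(i-1)}$ by $g^{(i)}(x) = \frac{1}{2}(g^{(i-1)}(x) + g^{(i-1)}(x^{\oplus j_i}))$) and bound the loss at each step by $|\E[(D_{j_i} f)\,g^{(i-1)}]|$ via Cauchy--Schwarz, summing over the at-most-$k$ steps.
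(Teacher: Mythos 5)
Your decomposition is exactly the paper's: average $g$ over the low-influence coordinates $J_1 = J \setminus \coords'$ to obtain a $J_0$-junta $\tilde g$, observe that the correlation loss $\E[f(g-\tilde g)]$ is supported on Fourier sets meeting $J_1$, and round $\tilde g$ back to a Boolean $J_0$-junta by choosing a good fixing of the $J_1$-coordinates. All of that is correct and matches the paper's proof (the paper's $\gavg$ is your $\tilde g$).

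The quantitative step, however, does not deliver what you claim. Cauchy--Schwarz on $\sum_{S \cap J_1 \ne \emptyset} \hat f(S)\hat g(S)$ gives
\[
\left|\E[f(g - \tilde g)]\right| \le \|g\|_2 \cdot \Bigl(\textstyle\sum_{S \cap J_1 \ne \emptyset} \hat f^2(S)\Bigr)^{1/2} \le \sqrt{kt},
\]
which is $\sqrt{kt}$, \emph{not} $kt$; for $kt<1$ this is strictly weaker than the lemma's bound, so it does not ``immediately yield the claimed control.'' Your one-coordinate-at-a-time variant fares no better: bounding $\Inf_{j_i}(g^{(i-1)}) \le 1$ gives $k\sqrt t$, and a second Cauchy--Schwarz across the steps recovers $\sqrt{kt}$. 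In fact the $kt$ bound cannot be reached by this route because the lemma as stated is false: take $n=2$, $k=1$, $f(x_1,x_2)=(\sqrt t - \varepsilon)x_1$, $\coords=\{1,2\}$, $\coords'=\{2\}$; the hypothesis holds (no coordinate has influence $\ge t$), yet the two maxima differ by $\sqrt t - \varepsilon \gg t$. For what it is worth, the paper's own displayed step $\sum_{S \subset \othercoords,\, S \not\subset \coords'} \hat g_S \hat f_S \le \sum \hat f_S^2$ does not follow and hits the same obstruction; the correct conclusion of this argument is $\sqrt{kt}$, which is what your proof actually establishes, and which still suffices for the downstream use of the lemma after rescaling the threshold $t$ (at the cost of a somewhat larger $k'$).
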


\begin{proof}
    Let $g \in \junta_{\coords,k}$ maximize $\E[f g]$, and let $\othercoords$ be the set
    of coordinates on which $g$ depends. Define $\gavg$ by $\gavg(x) = \E_{\by \sim \Unif}[g(\by) \mid \by_S = x_S]$.
    Then $\gavg$ is a $(\coords' \cap \othercoords)$-Junta; it does not
    necessarily belong to $\junta_{\coords',k}$ because it is not necessarily
    Boolean.  However,  $\max_{h \in \junta_{\coords',k}} \E[f h] \ge \E[f\gavg]$ -- this is because
    $\gavg$ can be rounded to a Boolean function without decreasing $\E[f \gavg]$.
    Therefore, it suffices to show that
    \[
        \E[f\gavg] \ge \E[f g] - tk.
    \]
    To do this, note that the Fourier coefficients of $g$ and $\gavg$ are related by
    $\hat g_S = (\widehat \gavg)_S$ whenever $S \subset \coords'$, and $(\widehat \gavg)_S = 0$ otherwise.
    In particular,
    \[
        \E[f (g - \gavg)] = \sum_{\substack{S \subset \othercoords \\ S \not \subset \coords'}} \hat g_S \hat f_S
        \le \sum_{\substack{S \subset \othercoords \\ S \not \subset \coords'}} \hat f_S^2
        \le \sum_{i \in \othercoords \setminus \coords'} \Inf_i(f) \le tk,
    \]
    where the last inequality follows because $|\othercoords \setminus \coords'| \le k$.
\end{proof}

Thanks to Lemma~\ref{lem:low-influence-coords}, if we set $\coords' = \{i \in \coords: \Inf_i(\fsmooth) \ge \epsilon/k\}$
then $\max_{g \in \junta_{\coords,k}} \E[\fsmooth g] \approxeq \max_{g \in \junta_{\coords',k}} \E[\fsmooth g]$.
We do not know how to estimate this final quantity, but we can easily estimate (the larger quantity)
$\max_{g \in \junta_{\coords',k'}} \E[\fsmooth g]$
(note that $k$ has become $k'$). This is because $\junta_{\coords',k'}$ consists of all boolean functions
depending on the coordinates in $\coords'$. The one with maximal correlation can be found
by projecting $\fsmooth$ onto the space of $\coords'$-Juntas and rounding the result to a boolean function:
define $\fsmoothavg$ by $\fsmoothavg(x) = \E[\fsmooth(\by) \mid \by_{\coords'} = x_{\coords'}]$. Then
\[
 \max_{g \in \junta_{\coords',k'}} \E[\fsmooth g] = \E[|\fsmoothavg|].
\]
This number (or rather, an estimate of it) will be the final output of our algorithm.
Thanks to the preceding arguments, it is larger (or at least, not much smaller) than
$\max_{g \in \junta_{n,k}} \E[fg]$. On the other hand, it is certainly
smaller than $\max_{g \in \junta_{n,k'}} \E[fg]$.

In order to turn the description above into an algorithm, we need to describe how to
compute all the quantities above given implicit access to the coordinates in $\coords$.
In particular, we will first describe how to simulate query access to $\favg$.
Given this, it is obvious how to simulate query access to $\fsmooth$.
Then, we will show how to estimate $\coords'$; more accurately, we will
compute some $\tilde \coords'$ that contains all coordinates of influence at least $2\epsilon/2k$
and no coordinates of influence smaller than $\epsilon/k$.
Finally, we will show how to simulate query access to $f_{\mathsf{smooth},\mathsf{avg},\tilde \coords'}$;
we can use this query access to estimate $\E[|f_{\mathsf{smooth},\mathsf{avg},\tilde \coords'}|]$, which
is the final output of our algorithm.

\subsection{Averaging over irrelevant coordinates}\label{sec:averaging}

An important primitive for us will be the ability to average over irrelevant
coordinates given oracle access to the relevant ones.  That is, imagine that
there is a collection of coordinates $\coords \subset [n]$.  Given query access
to $f: \{\pm 1\}^n \to [-1, 1]$ and a fixed point $x \in \{\pm 1\}^\coords$, we
would like to estimate $\E_{\by \sim \Unif} [f(\by) \mid \by_\coords = x]$.  Were we given
$\coords$ explicitly, this would be easy; our challenge is to do it with only
oracle access to $\coords$, in the sense of Definition~\ref{def:oracle}.

\begin{algorithm}[H]
    \KwIn{a function $f$, $x \in \{\pm 1\}^n$, an oracle $\dictset$, $\gamma$ (accuracy parameter), $\delta$ (failure parameter)}
    \KwOut{a number}

    Let $T = C |\dictset| \frac{\log 1/\delta}{\gamma^2}$ \;
    Let $\bx^{(1)} = x$ \;
    \For{$i = 1$ to $T - 1$}{
        \Repeat{$g(\by) = g(x)$ for all $g \in \dictset$ \label{line:coord-test}} {
            Let $\by$ be a copy of $\bx^{(i)}$, but flip each bit independently with probability $\frac{1}{2|\dictset|}$\;
        }
        Let $\bx^{(i+1)} = \by$ \;
    }
    \Return{$\frac 1T \sum_{i=1}^T f(\bx^{(i)})$}
    \caption{\textsf{Coordinate-projection}\label{alg:averaging}}
\end{algorithm}

\begin{lemma}\label{lem:averaging}
    \textsf{Coordinate-projection} has the following guarantees.
    Given a function $f: \{\pm 1\}^n \to [-1, 1]$, a point $x \in \{\pm 1\}^n$,
    an oracle $\dictset$ for the coordinates $\coords \subset [n]$,
    and parameters $\delta, \gamma > 0$, the algorithm makes (in expectation)
    $\Theta(|\coords| \frac{\log 1/\delta}{\gamma^2})$ queries to $f$
    and to each element of $\dictset$ and, with probability at least $1 - \delta$, outputs
    a number within $\gamma$ of $\E_{\by \sim \Unif} [f(\by) \mid \by_\coords = x]$.
\end{lemma}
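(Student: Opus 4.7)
The plan is to recognize the algorithm as a simulation of the Bonami--Beckner Markov chain $\mathcal{M}_\rho$, for $\rho = 1 - 1/|\dictset|$, on the slice $\{y \in \{\pm 1\}^n : y_\coords = x_\coords\}$ (which is naturally isomorphic to $\{\pm 1\}^{[n]\setminus\coords}$), and then to invoke the Markov-chain concentration bound in Lemma~\ref{lem:bias}. Throughout, I would pretend the elements of $\dictset$ are exact dictators, a step I will justify at the end.

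First I would verify the transition kernel. The proposal $\by$, obtained from $\bx^{(i)}$ by flipping each coordinate independently with probability $p = \frac{1}{2|\dictset|}$, has the same distribution as $\bx^{(i)} \cdot \bz$ with $\bz \sim \bZ_{\overline{\rho}}$ for $\rho = 1 - 2p = 1 - 1/|\dictset|$. The acceptance test on line~\ref{line:coord-test} succeeds exactly when $\by_\coords = \bx^{(i)}_\coords$; by independence of the proposal across coordinates, conditioning on this event leaves the coordinates in $[n]\setminus\coords$ as independent flips of the corresponding bits of $\bx^{(i)}$ with probability $p$ each. One outer step therefore advances $\bx^{(i)}$ by a single step of the Bonami--Beckner chain on $\{\pm 1\}^{[n]\setminus\coords}$ with parameter $\rho$, while keeping the $\coords$-coordinates fixed at $x_\coords$. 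The chain is reversible and ergodic, and its stationary distribution is uniform on $\{y : y_\coords = x_\coords\}$, which is precisely the distribution we want to average $f$ against.

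Next I would apply Lemma~\ref{lem:bias} to the function $\tilde f : \{\pm 1\}^{[n]\setminus\coords} \to [-1,1]$ defined by $\tilde f(y) = f(y, x_\coords)$. Since $1-\rho = 1/|\dictset|$, Lemma~\ref{lem:bias} applied with this spectral gap and the chosen $T = C|\dictset|\log(1/\delta)/\gamma^2$ (line~1 of the algorithm) gives
\[
    \left| \frac{1}{T}\sum_{i=1}^T f(\bx^{(i)}) \;-\; \E_{\by\sim\Unif}[f(\by) \mid \by_\coords = x_\coords] \right| \;\le\; \gamma
\]
with probability at least $1-\delta$, which is exactly the required accuracy.

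Finally I would bound the expected query count. Each inner attempt is accepted with probability $(1-p)^{|\dictset|} = (1 - 1/(2|\dictset|))^{|\dictset|}$, which is bounded below by a positive constant of order $e^{-1/2}$, so the expected number of attempts per outer step is $O(1)$. Each attempt costs one query to every $g \in \dictset$ for the check on line~\ref{line:coord-test}, and each outer step contributes one query to $f$; summing over the $T$ outer steps yields the claimed $\Theta(|\coords|\log(1/\delta)/\gamma^2)$ expected queries to $f$ and to each element of $\dictset$. The main (but minor) obstacle is that $\dictset$ is only a $\nu$-oracle rather than an exact one: as discussed after Definition~\ref{def:approximate-oracle}, one picks $\nu$ and the internal confidence parameter of each oracle query small enough that a union bound over the $O(T|\dictset|)$ oracle calls in the routine lets us treat the oracles as perfect throughout the analysis above.
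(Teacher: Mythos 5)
Your proposal is correct and follows essentially the same route as the paper's proof: both identify the iterates $\bx^{(1)},\dots,\bx^{(T)}$ (restricted to $[n]\setminus\coords$) as a Bonami--Beckner Markov chain with $\rho = 1 - 1/|\coords|$, invoke Lemma~\ref{lem:bias} for the concentration estimate, and bound the expected number of rejection-sampling attempts per step by a constant via $(1-\tfrac{1}{2|\coords|})^{|\coords|} = \Omega(1)$. The only cosmetic addition is your closing remark on the $\nu$-oracle issue, which the paper handles in the surrounding discussion rather than inside this proof.
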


\begin{remark}\label{rem:averaging}
    It is not hard to see that \textsf{Coordinate-projection} succeeds even if it is given noisy
    query access to $f$. For example, if each query that
    \textsf{Coordinate-projection} makes to the function $f$ can produce an error of at most $\gamma$,
    then (under the assumptions of Lemma~\ref{lem:averaging}), the output of \textsf{Coordinate-projection}
    is, with probability at least $1 - \delta$, accurate to within $2\gamma$.
\end{remark}

\begin{proof}
    First, observe that for every $i = 1, \dots, T$, $\bx^{(i)}_\coords = x_\coords$:
    indeed, $\dictset$ is an oracle for $\coords$ and so the test on line~\ref{line:coord-test}
    will only pass if $\by_\coords = x_\coords$. Next, observe that conditioned on $\bx^{(i)}$,
    every coordinate $j \not \in \coords$ of $\bx^{(i+1)}$ is obtained by (independently)
    flipping $\bx^{(i)}_j$ with probability $\frac{1}{2|\coords|}$. In other
    words, $\bx^{(1)}_{\bar \coords}, \dots, \bx^{(T)}_{\bar \coords}$ is a
    Bonami-Beckner Markov chain with correlation parameter $\rho = 1 -
    \frac{1}{|\coords|}$.
    If $\bz$ is distributed according to the stationary distribution
    of this Markov chain then $\bz_\coords = x_\coords$ and $\bz_{\bar \coords}$
    is uniformly distributed on $\{\pm 1\}^{\bar \coords}$. In particular,
    $\E[f(\bz)] = \E_{\by \sim \Unif}[f(\by) \mid \by_\coords = x_\coords]$ and so Lemma~\ref{lem:bias}
    implies that with probability at least $1-\delta$, $\frac{1}{T} \sum_{i=1}^T f(\bx^{(i)})$
    is within $\gamma$ of $\E_\by[f(\by) \mid \by_\coords = x_\coords])$.

    It remains to check how many oracle queries our procedure makes; let $X_i$
    be the number of attempts it takes to successfully generate $\bx^{(i)}$.
    For each $i$, the probability that the test on line~\ref{line:coord-test}
    succeeds is exactly $(1 - \frac{1}{2|\coords|})^{|S|} \ge \frac 14$.
    In particular, each sample takes (in expectation) at most four queries to each element of $\dictset$,
    and so the overall number of queries is at most (in expectation) $O(T)$ to each element of $\dictset$.
\end{proof}

\subsection{Estimating influences}

One of the things we need to do is to estimate the influences of coordinates in
$\coords$. Again, if we had explicit access to these coordinates then this
would be trivial. The point is to get by with only oracle access, and the difficulty
is that given some $x \in \{\pm 1\}^n$, we cannot simply ``flip'' a bit belonging to $\coords$
because we don't know which bits those are. To work around this, we introduce the
notion of an influence-testing sample, which essentially is a collection of points
that manage to flip each bit in $\coords$:

\begin{definition}
    We say that $\infsamp \subset \{\pm 1\}^n$ is an \emph{influence-testing sample} at $x \in \{\pm 1\}^n$ with respect to $\coords$
    if we can enumerate $\infsamp = \{y^{(i)}: i \in \coords\}$ where $y^{(i)}_i = -x_i$ and $y^{(i)}_j = x_j$ for $j \in \coords \setminus \{i\}$.
\end{definition}

Note that the definition above doesn't guarantee anything about bits not belonging to $\coords$.
The point of the definition above is that if the function $f$ depends only on coordinates in $\coords$, then we can use
an influence-testing sample at a random point to obtain an unbiased estimator for all the influences of $f$.

The first important point about influence-testing samples is that we can produce them
given oracle access to $\coords$.

\begin{algorithm}[H]
    \KwIn{$x \in \{\pm 1\}^n$, and an oracle $\dictset$ for $\coords$}
    \KwOut{an influence-testing sample at $x$ with respect to $\coords$}

    Define (for $z \in \{\pm 1\}^n$) $I(z) = \{g \in \dictset: g(z) \ne g(x)\}$ \;
    Initialize $\infsamp = \emptyset$ \;
    \While{$|\infsamp| < |\dictset|$}{
        Let $\by$ be a copy of $x$, but flip each bit independently with probability $\frac{1}{2|\dictset|}$\;
        \If{$|I(\by)| = 1$ and $I(\by) \ne I(z)$ for all $z \in \infsamp$ \label{line:inf-samp-test}}{
            Add $\by$ to $\infsamp$ \;
        }
    }
    \Return{$\infsamp$}
    \caption{\textsf{Influence-testing-sample}}\label{alg:create-inf-samp}
\end{algorithm}

\begin{lemma}\label{lem:create-inf-samp}
    Let $\dictset$ be an oracle for $\coords$.
    Given $\dictset$ and $x \in \{\pm 1\}^n$ as input, \textsf{Influence-testing-sample}
    produces an influence-testing sample at $x$ with respect to $\coords$, while making
    (in expectation) $O(|\coords| \log |\coords|)$ queries to each element of $\dictset$.
\end{lemma}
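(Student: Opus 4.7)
The plan is to analyze the algorithm as a coupon-collector process on the coordinates of $\coords$, using the key fact that $|\dictset|=|\coords|$ (since $\dictset$ is an oracle for $\coords$ by Definition~\ref{def:oracle}). Each iteration produces $\by$ by flipping bits of $x$ independently with probability $1/(2|\coords|)$. For $i \in \coords$, define the indicator that $\by_i \ne x_i$; these are independent Bernoulli$(1/(2|\coords|))$ variables. Hence the number of coordinates in $\coords$ at which $\by$ disagrees with $x$ is Binomial$(|\coords|, 1/(2|\coords|))$, so
\[
\Pr[|\{i \in \coords : \by_i \ne x_i\}| = 1] = |\coords| \cdot \tfrac{1}{2|\coords|} \cdot \left(1 - \tfrac{1}{2|\coords|}\right)^{|\coords|-1} \ge \tfrac{1}{2} e^{-1/2} =: c,
\]
and conditional on this event, the unique flipped coordinate is uniform over $\coords$.

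Next I would establish correctness. Since $\dictset$ is an oracle for $\coords$, the set $I(\by)$ is precisely $\{g_i : i \in \coords, \by_i \ne x_i\}$ where $g_i \in \{\Dict_i, -\Dict_i\}$ is the unique element of $\dictset$ corresponding to coordinate $i$. Thus the test $|I(\by)|=1$ on line~\ref{line:inf-samp-test} is exactly the event above, and $I(\by)=\{g_i\}$ identifies the unique coordinate $i \in \coords$ at which $\by$ differs from $x$. The condition $I(\by) \ne I(z)$ for all $z \in \infsamp$ ensures each accepted $\by$ corresponds to a new coordinate. Once $|\infsamp| = |\dictset| = |\coords|$, the set contains, for each $i \in \coords$, a point $\by^{(i)}$ with $\by^{(i)}_i = -x_i$ and $\by^{(i)}_j = x_j$ for $j \in \coords \setminus \{i\}$, which is precisely the definition of an influence-testing sample.

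Finally I would bound the expected query count. Each iteration of the while loop makes one evaluation of each $g \in \dictset$ in order to compute $I(\by)$. If $|\infsamp| = j$ at the start of an iteration, then by the preceding calculations the probability that this iteration successfully augments $\infsamp$ is at least $c \cdot (|\coords|-j)/|\coords|$, since we need (a) exactly one coordinate of $\coords$ flipped (probability $\ge c$) and (b) the flipped coordinate to be among the $|\coords|-j$ unseen ones (conditional probability $(|\coords|-j)/|\coords|$). Therefore the expected number of iterations to raise $|\infsamp|$ from $j$ to $j+1$ is at most $|\coords|/(c(|\coords|-j))$, and summing the harmonic series yields an expected total of
\[
\sum_{j=0}^{|\coords|-1} \frac{|\coords|}{c(|\coords|-j)} = \frac{|\coords|}{c} H_{|\coords|} = O(|\coords| \log |\coords|)
\]
iterations, each making one query per element of $\dictset$. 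The main (minor) technical point to be careful about is verifying that the test on line~\ref{line:inf-samp-test} depends only on $I(\by)$ and therefore is fully determined by the oracle queries, so that the analysis above is not muddied by any behavior on coordinates outside $\coords$; this follows directly from the oracle property and completes the argument.
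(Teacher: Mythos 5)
Your proof is correct and follows essentially the same approach as the paper: verify that each accepted $\by$ flips exactly one $\coords$-coordinate and that the duplicate check ensures full coverage, then bound the iterations by coupon collecting. One small remark: your probability computation $|\coords|\cdot\frac{1}{2|\coords|}\cdot\bigl(1-\frac{1}{2|\coords|}\bigr)^{|\coords|-1}\ge\frac{1}{2}e^{-1/2}$ is the correct one; the paper's displayed formula $(1-2|\coords|^{-1})^{|\coords|}$ appears to contain a typo, so your version is actually the cleaner statement of the same bound.
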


\begin{proof}
    Each element that \textsf{Influence-testing-sample} adds to $\infsamp$ differs from $x$
    on exactly one coordinate of $\coords$. Moreover, the test on line~\ref{line:inf-samp-test} ensures
    that every coordinate of $\coords$ is represented by at most one element of $\infsamp$.
    Hence, by the time the loop is complete, $\infsamp$ is an influence-testing sample at $x$ with
    respect to $\coords$.

    Note that each time we sample $\by$, we have $\Pr[|I(\by)| = 1] = (1 - 2|\coords|^{-1})^{|\coords|} \ge \frac 14$.
    Moreover, conditioned on $|I(\by)| = 1$, $I(\by)$ is uniformly distributed among all size-one subsets of $\dictset$.
    Hence, the number of times that we need to sample $\by$ in order to see all size-one subsets at least once
    is distributed according to the coupon collector problem, and hence takes $O(|\coords| \log |\coords|)$ iterations
    in expectation.
\end{proof}

The other important point about influence-testing samples is that if we can use them to estimate influences.
The basic idea is to take the trivial algorithm for estimating influences (sample random elements, and check
whether flipping the $j$th bit changes the value of $f$), but using the notion of an influence-testing sample
to replace the need to bits; in \textsf{Threshold-influences}, $\bx^{(i)}$ are the random points
at which we're testing bit-flips, and $\by^{(i)}_g$ is the copy of $\bx^{(i)}$ with a bit (the one corresponding
to $g \in \dictset$) flipped.

\begin{algorithm}[H]
    \KwIn{a function $f$, an oracle $\dictset$ for $\coords$, threshold parameter $t$, failure parameter $\delta$}
    \KwOut{an oracle $\dictset'$}

    Let $T = C t^{-2} \log \frac{1}{\delta |\dictset|}$ \;
    For $s = 1, \dots, T$, sample $\bx^{(i)}$ uniformly from $\{\pm 1\}^n$ \;
    For $s = 1, \dots, T$, let $\{\by^{(i)}_g: g \in \dictset\}$ be an influence testing sample at $\bx^{(i)}$ (from \textsf{Influence-testing-sample}), where $\by^{(i)}_g$ is the element for which $g(\by^{(i)}) \ne g(\bx^{(i)})$ \;
    For $g \in \dictset$, let $\widehat{\Inf_g} = \frac 1T \sum_{i=1}^T (f(\bx^{(i)}) - f(\by^{(i)}_g))^2$ \;
    \Return{$\dictset' = \{g \in \dictset: \widehat{\Inf_g} \ge \frac 32 t\}$}
    \caption{\textsf{Threshold-influences}}\label{alg:threshold-influences}
\end{algorithm}

\begin{lemma}\label{lem:threshold-influences}
    Assume that $\dictset$ is an oracle for $\coords$ and that $f$ is a $\coords$-junta.
    With probability at least $1-\delta$, the output of \textsf{Threshold-influences}
    satisfies the following: there is a set $\coords' \subset \coords$ such that $\Inf_i(f) \ge 2t$ implies
    that $i \in \coords'$, and $i \in \coords'$ implies that $\Inf_i(f) \ge t$, and the output
    of \textsf{Threshold-influences} is an oracle for $\coords'$.

    Moreover, \textsf{Threshold-influences} makes $O(t^{-2} \log \frac{1}{\delta |\coords|})$ queries to
    $f$ and (in expectation) $O(t^{-2} |\coords|^2 \log \frac{1}{\delta})$ queries to each element of $\dictset$.
\end{lemma}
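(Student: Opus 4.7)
The plan is to leverage the $\coords$-junta hypothesis to argue that each $\widehat{\Inf_g}$ is an unbiased estimator (up to a fixed scaling) of $\Inf_{i_g}(f)$, where $i_g \in \coords$ is the coordinate represented by $g$. First I would observe that because $\dictset$ is an oracle for $\coords$ in the sense of Definition~\ref{def:oracle}, every $g \in \dictset$ satisfies $g = \pm \Dict_{i_g}$ for a unique $i_g \in \coords$, and by Lemma~\ref{lem:create-inf-samp} the accompanying sample point $\by^{(s)}_g$ differs from $\bx^{(s)}$ at exactly the coordinate $i_g$ among indices in $\coords$ (with possibly arbitrary differences on $[n]\setminus\coords$). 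The $\coords$-junta hypothesis on $f$ then gives $f(\by^{(s)}_g) = f((\bx^{(s)})^{\oplus i_g})$, so each summand $(f(\bx^{(s)}) - f(\by^{(s)}_g))^2$ is an i.i.d.\ sample of $(f(\bx) - f(\bx^{\oplus i_g}))^2$ under a uniformly random $\bx$, whose mean equals (a fixed multiple of) $\Inf_{i_g}(f)$.

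Next I would apply Hoeffding's inequality: since $f$ takes values in $[-1, 1]$, each summand lies in $[0, 4]$, so $T = \Theta(t^{-2} \log(|\dictset|/\delta))$ independent samples suffice to estimate this mean to additive accuracy $t/2$ with probability at least $1 - \delta/|\dictset|$. A union bound over the $|\dictset| \le |\coords|$ oracles yields that, with probability at least $1-\delta$, every $\widehat{\Inf_g}$ is within $t/2$ of $\Inf_{i_g}(f)$. The threshold $\widehat{\Inf_g} \ge \tfrac{3}{2}t$ then produces the two-sided separation claimed in the lemma: $\Inf_{i_g}(f) \ge 2t$ forces $\widehat{\Inf_g} \ge \tfrac{3}{2}t$ and hence inclusion of $g$ in $\dictset'$, while inclusion of $g$ forces $\widehat{\Inf_g} \ge \tfrac{3}{2}t$ and hence $\Inf_{i_g}(f) \ge t$.

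Setting $\coords' := \{i_g : g \in \dictset'\}$, the output $\dictset' \subseteq \dictset$ inherits the oracle property for $\coords'$ directly from Definition~\ref{def:oracle}, since a subset of an oracle (with exactly one retained representative per retained coordinate) is itself an oracle for the restricted set of coordinates. For the query bounds I would combine the $T = O(t^{-2} \log(|\dictset|/\delta))$ outer-loop iterations with Lemma~\ref{lem:create-inf-samp}, which contributes $O(|\coords|\log|\coords|)$ expected queries per element of $\dictset$ per iteration, yielding the claimed $O(t^{-2}|\coords|^2 \log(1/\delta))$ expected queries per oracle, together with the stated count for queries to $f$.

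The only real subtlety -- and the one place where the $\coords$-junta hypothesis is essential -- is the argument that the extraneous bit flips outside $\coords$ in $\by^{(s)}_g$ do not contaminate the estimator; once this is in hand, the remainder is a routine Hoeffding-plus-union-bound analysis. I therefore do not anticipate any significant obstacle beyond careful bookkeeping of the scaling constants between the Fourier definition of $\Inf_i(f)$ and the empirical quantity $\tfrac 1T \sum_s (f(\bx^{(s)}) - f(\by^{(s)}_g))^2$.
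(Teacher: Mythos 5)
Your proposal is correct and follows essentially the same route as the paper's own proof: reduce to showing that each $\widehat{\Inf_g}$ concentrates around $\Inf_{i_g}(f)$ (up to a fixed scaling), use the $\coords$-junta hypothesis together with the structure guaranteed by Lemma~\ref{lem:create-inf-samp} to argue that $f(\by^{(s)}_g)$ coincides with the value of $f$ at the true bit-flipped point $(\bx^{(s)})^{\oplus i_g}$, apply a Chernoff/Hoeffding bound for each oracle, union bound over $\dictset$, and observe that the $\frac{3}{2}t$ threshold gives the two-sided separation. Your explicit remark that a constant factor separates the empirical quantity from the Fourier definition of $\Inf_i$ is in fact a point the paper glosses over; the paper asserts $\Inf_j(f) = \E_\bz[(f(\bz)-f(\tilde\bz))^2]$, which is off by a factor of $4$ under its own real-valued definition, so your caution there is warranted (and harmless, since only the constants in $T$ and the threshold change).
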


\begin{remark}\label{rem:threshold-influences}
    It is not hard to see that \textsf{Threshold-influences} succeeds even if it is given access
    to a slightly noisy version of $f$: if each evaluation
    of $f$ is guaranteed to be accurate to within $t/10$ then the guarantees of Lemma~\ref{lem:threshold-influences}
    still hold (at the cost of increasing the constant $C$ in \textsf{Threshold-influences}).
\end{remark}

\begin{proof}
    Let us fix a single $j \in \coords$ and its oracle $g \in \dictset$. We will show that
    with probability at least $1 - \frac{\delta}{|\coords|}$, if $\Inf_j(f) \le t$ then $g \in \dictset'$
    and if $\Inf_j(f) \ge 2t$ then $g \not \in \dictset'$ (and then the lemma will follow by a union bound over
    $g$). To prove this claim, it suffices to show that (with probability at least $1 - \frac{\delta}{|\coords|}$)
    $|\Inf_j(f) - \widehat \Inf_g| \le t/2$.

    For $z \in \{\pm 1\}^n$, let $\tilde z$ denote a copy of $z$ with bit $j$ flipped.
    Recall that $\Inf_j(f) = \E_{\bz \sim \Unif} [(f(\bz) - f(\tilde \bz))^2]$; by a Chernoff
    bound, with probability at least $1 - \frac{\delta}{|\coords|}$, $\Inf_j(f)$ is within $t/2$ of
    \[
        \frac{1}{T} \sum_{i=1}^T \left(f(\bx^{(i)}) - f(\tilde \bx^{(i)})\right)^2,
    \]
    where $\bx^{(i)}$ are (as in \textsf{Threshold-influences}) independent and uniform in $\{\pm 1\}^n$.
    Finally, recall that $f$ was assumed to be a $\coords$-Junta, and recall that among coordinates of $\coords$,
    $\by^{(i)}_g$ differs from $\bx^{(i)}$ exactly in coordinate $j$. Hence, $f(\by^{(i)}_g) = f(\tilde \bx^{(i)})$
    for all $i$, and so $\widehat \Inf_g$ coincides with the displayed quantity above.
    This proves the claim about the correctness of \textsf{Threshold-influences}; the claims
    about the number of queries follow immediately from the algorithm and Lemma~\ref{lem:create-inf-samp}.
\end{proof}

\subsection{The algorithm}

In order to set up the algorithm, recall from Section~\ref{sec:oracle}
that we can begin by finding an oracle $\dictset$ to a $\poly(k)$-sized set of relevant
coordinates $\coords$. As explained before, (e.g. by applying Claim~\ref{clm:correlation1}),
in order to complete the task it suffices
to compute $\max_{g \in \junta_{\coords,k'}} \E[fg]$ to accuracy $\epsilon$, where $k' = k^2/\eps^2$.

Recall that \textsf{Coordinate-projection} and \textsf{Threshold-influences}
have a failure probability $\delta > 0$; we will fix take $\gamma =
\poly(\epsilon, 1/k)$ and $\delta = 2^{-(k/\gamma)}$, and in what follows we will
guarantee to invoke these two algorithms at most $\poly(k, 1/\epsilon)$ times,
meaning that with high probability every single invocation will succeed.

Here is the final algorithm, together with a justification of the sample complexity:

\begin{itemize}
    \item $\favg: \{\pm 1\} \to [-1, 1]$ is defined by $\favg(x) = \E_{\by \sim
        \Unif}[f(\by) \mid \by_\coords = x_\coords]$.  According to
        Lemma~\ref{lem:averaging}, for every $x \in \{\pm 1\}^n$ we can
        approximate $\favg(x)$ to within $\gamma$ using $\poly(k, 1/\epsilon)$
        queries. Moreover (recalling our choice of $\delta$), as long as we
        repeat this process at most $\poly(k, 1/\epsilon)$ times, with high probability
        we will never
        fail to obtain accuracy $\gamma$.
        Also, $\favg$ is an $\coords$-Junta.
    \item $\fsmooth$ is defined by $\fsmooth = T_{1-\epsilon/(2k)} \favg$.
        By na\"ive sampling and a Chernoff bound, for every $x$ in $\{\pm 1\}^n$
        we can (with probability at least $1-\delta$)
        approximate $\fsmooth(x)$ to within $2\gamma$ using $\poly(1/\gamma, \log(1/\delta))$
        queries (each with accuracy $\gamma$) to $\favg$. Thanks to the previous point,
        this can be done using $\poly(k, 1/\epsilon)$ queries to $f$, and thanks to the
        choice of $\delta$ we can repeat this $\poly(k, 1/\epsilon)$ times without failure.
    \item Thanks to Lemma~\ref{lem:threshold-influences} and Remark~\ref{rem:threshold-influences}
        (with $t = \epsilon/k$), we can compute an oracle 
        $\tilde \dictset'$ to some set $\tilde \coords'$ such that $i \in \tilde \coords'$ whenever
        $\Inf_i(\fsmooth) \ge 2\epsilon/k$, and
        $\Inf_i(\fsmooth) \ge \epsilon/k$ for all $i \in \tilde \coords'$.
        This can be done using $\poly(k, 1/\epsilon)$ queries to $\fsmooth$, which (thanks to the previous point)
        can be done using $\poly(k, 1/\epsilon)$ queries to $f$.
        By Lemma~\ref{lem:high-influence-coords}, $|\tilde \coords'| \le k^2/\epsilon^2$.
    \item
        Applying Lemma~\ref{lem:averaging} again (this time, also applying Remark~\ref{rem:averaging}),
        for every $x \in \{\pm 1\}^n$ we can approximate
        $\ffinal(x) := \E_\by[\fsmooth(\by) \mid \by_{\tilde \coords'} = x_{\tilde \coords'}]$ to within $3\gamma$ using $\poly(k, 1/\epsilon)$ queries to $\fsmooth$ (which in turn requires $\poly(k, 1/\epsilon)$ queries to $f$).

    \item
        Finally, use $\poly(k, 1/\epsilon)$ independent samples to estimate
        $\E_\by [|\ffinal|]$ to within
        accuracy $4\gamma$ (which we can assume is at most $\epsilon$).
        This estimate is the output of the algorithm.
\end{itemize}

To complete the proof that this algorithm is correct (i.e.\ it fulfills the claims
made in Theorem~\ref{thm:main-tester-gap}), let us combine our previous bounds to show that
the final quantity being estimated in our algorithm (namely, $\E[|\ffinal|]$) is approximately
between $\max_{g \in \junta_{\coords,k'}} \E[fg]$ and $\max_{g \in \junta_{\coords,k}} \E[fg]$.

\begin{lemma}
    With $\ffinal$ defined as above,
    \[
        \max_{g \in \junta_{\coords,k'}} \E[fg] \ge \E[|\ffinal|] = \max_{g \in \junta_{n,k'}} \E[\ffinal g] \ge \max_{g \in \junta_{\coords,k}} \E[fg] - \frac 32 \epsilon.
    \]
\end{lemma}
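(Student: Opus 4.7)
The statement decomposes into three (in)equalities. The middle equality $\E[|\ffinal|] = \max_{g \in \junta_{n,k'}} \E[\ffinal g]$ is the easiest: by Lemma~\ref{lem:high-influence-coords} applied to $\fsmooth$ (with the noise rate $s = \epsilon/(2k)$), $|\tilde\coords'| \le k^2/\epsilon^2 = k'$, so $\ffinal$, being a $\tilde\coords'$-Junta, is in particular a $k'$-Junta. Then $\sgn(\ffinal) \in \junta_{n,k'}$ realizes $\E[\ffinal \cdot \sgn(\ffinal)] = \E[|\ffinal|]$, while any $g \in \junta_{n,k'}$ with $|g|\le 1$ obeys $\E[\ffinal g] \le \E[|\ffinal|]$.

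For the upper bound $\max_{g \in \junta_{\coords,k'}}\E[fg] \ge \E[|\ffinal|]$, I plan to compare $\ffinal$ to the un-smoothed projection $f_{\mathsf{avg},\tilde\coords'}(x) := \E_\by[f(\by) \mid \by_{\tilde\coords'} = x_{\tilde\coords'}]$. A direct Fourier-coefficient check---using $\tilde\coords' \subseteq \coords$, so that averaging over $[n]\setminus\tilde\coords'$ subsumes the $\coords$-averaging already present in $\favg$ and commutes with the diagonal operator $T_{1-s}$---gives the clean identity $\ffinal = T_{1-s} f_{\mathsf{avg},\tilde\coords'}$. Since $T_{1-s}$ is an $L^1$-contraction (it averages against a probability kernel on the cube), $\E[|\ffinal|] \le \E[|f_{\mathsf{avg},\tilde\coords'}|] = \E[f \cdot \sgn(f_{\mathsf{avg},\tilde\coords'})]$, and the rightmost quantity is the correlation of $f$ with an element of $\junta_{\tilde\coords',k'} \subseteq \junta_{\coords,k'}$.

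The lower bound $\E[|\ffinal|] \ge \max_{g \in \junta_{\coords,k}} \E[fg] - \tfrac{3}{2}\epsilon$ is where the real work lies. Letting $g^*$ achieve the maximum, I would chain three estimates. \textbf{(i)} By self-adjointness of $T_{1-s}$, $\E[\fsmooth g^*] = \E[\favg \cdot T_{1-s}g^*] = \E[f \cdot T_{1-s} g^*]$, where the last equality uses that $T_{1-s}g^*$ is a $\coords$-Junta and hence pairs identically with $f$ and with $\favg$; Lemma~\ref{lem:smoothing} applied to the $k$-Junta $g^*$ then bounds $|\E[f\cdot T_{1-s}g^*]-\E[fg^*]|$ by $\epsilon/2$. \textbf{(ii)} Lemma~\ref{lem:low-influence-coords} applied to the $[-1,1]$-valued $\fsmooth$ and the set $\tilde\coords'$ (which by Lemma~\ref{lem:threshold-influences} captures every $i$ with $\Inf_i(\fsmooth)$ above the chosen threshold) bounds the loss in passing from $\junta_{\coords,k}$ to $\junta_{\tilde\coords',k}$ by $\epsilon$. \textbf{(iii)} Since $k\le k'$, $\junta_{\tilde\coords',k}\subseteq\junta_{\tilde\coords',k'}$, and the argument of the middle equality applied to $\fsmooth$ yields $\max_{g\in\junta_{\tilde\coords',k'}}\E[\fsmooth g]=\E[|\ffinal|]$. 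Summing the losses delivers the claim (a minor retuning of the constants $s$ and $t$ in the algorithm makes the accumulated constant land exactly at $3\epsilon/2$).

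The only real subtlety is the Fourier-level identity $\ffinal = T_{1-s} f_{\mathsf{avg},\tilde\coords'}$, which is what lets us transport the final quantity $\E[|\ffinal|]$ back to a correlation with $f$ itself; everything else is careful bookkeeping of $\epsilon$'s through the three cited lemmas.
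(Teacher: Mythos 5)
Your proof follows essentially the same route as the paper's. The middle equality is argued identically (boundedness of $\tilde\coords'$ gives the $k'$-junta property, and $\sgn(\ffinal)$ achieves the max). The lower bound uses the paper's exact three-step chain: projection onto $\coords$-juntas, the smoothing loss via Lemma~\ref{lem:smoothing}, and the influence-thresholding loss via Lemma~\ref{lem:low-influence-coords}. For the first inequality the paper is terse (it gestures at ``Jensen's inequality and the fact that $\ffinal$ is defined by repeatedly averaging''); your version, via the operator identity $\ffinal = T_{1-s}\, f_{\mathsf{avg},\tilde\coords'}$ (valid because projection onto $\coords$-juntas, projection onto $\tilde\coords'$-juntas, and $T_{1-s}$ are all diagonal in the Fourier basis, and $\tilde\coords' \subseteq \coords$), followed by $L^1$-contractivity of $T_{1-s}$ and self-adjointness of the $\tilde\coords'$-projection, is a complete and correct instantiation of that sketch.

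One point worth making explicit: the constant-bookkeeping concern you flagged is real, and it is present in the paper's proof as well. The set $\tilde\coords'$ is only guaranteed to contain every $i$ with $\Inf_i(\fsmooth) \ge 2\epsilon/k$, so invoking Lemma~\ref{lem:low-influence-coords} with $t = 2\epsilon/k$ yields a loss of $tk = 2\epsilon$, not $\epsilon$; as the thresholds are currently set, the chain gives $\epsilon/2 + 2\epsilon = 5\epsilon/2$, whereas the paper asserts $\epsilon$ for that step and hence $3\epsilon/2$ overall. Your remark that a retuning of $s$ and/or $t$ restores the stated constant is the correct fix, and it is the kind of thing the paper should have said explicitly.
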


\begin{proof}
    The first inequality follows immediately from Jensen's inequality and the fact
    that $\ffinal$ is defined by repeatedly averaging $f$ in various senses; we will focus on the last
    inequality.
    As we discussed previously,
    \[
        \max_{g \in \junta_{\coords,k}} \E[fg]
        = \max_{g \in \junta_{\coords,k}} \E[\favg g].
    \]
    By Lemma~\ref{lem:smoothing} applied to $\favg$,
    \[
        \max_{g \in \junta_{\coords,k}} \E[\fsmooth] \ge \max_{g \in \junta_{\coords,k}} \E[\favg g] - \epsilon/2.
    \]
    By Lemma~\ref{lem:low-influence-coords} applied with $\coords' = \tilde \coords'$ and $t = 2\epsilon/k$,
    \[
        \max_{g \in \junta_{\coords',k}} \E[\fsmooth g] \ge \max_{g \in \junta_{\coords,k}} \E[\fsmooth g] - \epsilon.
    \]
    Finally (and for the same reason as the first point),
    \[
        \E[|\ffinal|] = \max_{g \in \junta_{\coords',k'}} \E[\ffinal g] = \max_{g \in \junta_{\coords',k'}} \E[\fsmooth g].
    \]
\end{proof}

\bibliographystyle{plain}
\bibliography{allrefs}

\appendix
\section{Description of the algorithm \textsf{Find-best-fit}}~\label{sec:best-fit}
In this section, we describe the algorithm \textsf{Find-best-fit} (from Lemma~\ref{lem:find-best-fit}) and give its proof of correctness. We begin by describing the algorithm. 

\begin{algorithm}[H]
    \SetKwFor{RepeatTimes}{repeat}{times}{end}
    \KwIn{$f$ (target function), 
   $k$ (arity of Junta),    $\epsilon$ (distance parameter) \\ $\mathcal{S}$ (oracles to dictator functions), $\delta$ (confidence parameter)}
    \KwOut{$\widehat{\mathsf{Corr}}_{f,\mathcal{S},k} \in [0,1]$ and $h_{\mathcal{S},k}: \{-1,1\}^k \rightarrow \{-1,1\}$} 
    \tcp{Set parameters for the algorithm}
   Set $N =  2^k \cdot \frac{1}{\epsilon^2} \cdot 4 \cdot \big( |\mathcal{S}| + \log (1/\delta) + k^2\big)$ 
 and  $M = N \cdot 2^{-k}$  \; 
 \tcp{Evaluation of the oracles}
    Let $N' = \mathsf{Poi}(N)$ and sample 
$\bx^{(1)}, \ldots, \bx^{(N')} \in \{-1,1\}^n$ \; 
 Evaluate $f$ on $\bx^{(1)}, \ldots, \bx^{(N')}$ \;
 For all $\mathcal{S} \ni \mathsf{Dict}_j(\cdot)$, evaluate $\mathsf{Dict}_j$ on $\bx^{(1)}, \ldots, \bx^{(N')}$ \;
 \tcp{Hypothesis testing}
\For{$\mathcal{T} \subseteq \mathcal{S}$ and $|\mathcal{T}| =k$} {
 Let $\mathcal{T}= \{\mathsf{Dict}_{i_1}, \ldots, \mathsf{Dict}_{i_k}\}$ \;
  \For{$y \in \{-1,1\}^k$}{Define $ \mathcal{A}_{\mathcal{T},y} = \{\bx^{(s)} : \textrm{for }1 \le s \le N', \ \ 1 \le j \le k \ \textrm{and} \ \mathsf{Dict}_{i_j}(\bx^{(s)}) = y_j\}.$ \; 
  Define $\widehat{\mathsf{Corr}}_{\mathcal{T},y} = \frac{\sum_{\bx^{(s)} \in \mathcal{A}_{\mathcal{T},y}} f(\bx^{(s)})}{M} $}
  Define $\widehat{\mathsf{Corr}}_{\mathcal{T}} = \frac{\sum_y |\widehat{\mathsf{Corr}}_{\mathcal{T},y}|}{2^k}$ and $h_{\mathcal{T}}: \{-1,1\}^k \rightarrow \{-1,1\}$ as 
$
h_{\mathcal{T}}(y) = \mathsf{sign} (\widehat{\mathsf{Corr}}_{\mathcal{T},y} ). 
$ 
 }
 Define $\mathcal{T}^\ast = \arg \max_{\mathcal{T} \subseteq  \mathcal{S} : |\mathcal{T}| =k} \widehat{\mathsf{Corr}}_{\mathcal{T}}$. \; Output $\widehat{\mathsf{Corr}}_{f, \mathcal{S},k} \leftarrow \widehat{\mathsf{Corr}}_{\mathcal{T}^\ast}$ and $h_{\mathcal{S},k} \leftarrow h_{\mathcal{T}^\ast} : \{-1,1\}^k \rightarrow \{-1,1\}$. 
    \caption{\textsf{Find-best-fit}}\label{alg:find-best-fit}
\end{algorithm}
\ignore{
\begin{figure}[t]
\hrule
\vline
\begin{minipage}[t]{0.98\linewidth}
\vspace{10 pt}
\begin{center}
\begin{minipage}[h]{0.95\linewidth}
{\small
\underline{\textsf{Input}}
\vspace{5 pt}

\begin{tabular}{ccl}
$k$ &:=& arity of junta  \\
$\delta$ &:=& distance parameter \\
$\kappa$ &:=& confidence parameter \\ 
$f$ &:=& target function \\ 
$\mathcal{S}$  &:=& Oracles to dictator functions  \\ 
\end{tabular}

\underline{\textsf{Parameters}}
\vspace{5 pt}

\begin{tabular}{ccl}
$N$ &:=& $2^k \cdot \frac{1}{\delta^2} \cdot 4 \cdot \big( |\mathcal{S}| + \log (1/\kappa) + k^2\big)$ \\
$M$ &:=& $N \cdot 2^{-k}$ \\ 
\end{tabular}

\vspace{5 pt}
\underline{\textsf{Testing algorithm}}
\begin{enumerate}
\item  
\item Evaluate  $f$  on $\bx^{(1)}, \ldots, \bx^{(N')}$. 
\item .
\item For each subset $\mathcal{T} \subseteq \mathcal{S}$ and $|\mathcal{T}| =k$, 
\item \hspace*{10pt} Let $\mathcal{T}= \{\mathsf{Dict}_{i_1}, \ldots, \mathsf{Dict}_{i_k}\}$. 
\item \hspace*{10pt} For each $y \in \{-1,1\}^k$, define $\mathcal{A}_{\mathcal{T},y}$ as 
\[ \mathcal{A}_{\mathcal{T},y} = \{\bx^{(s)} : \textrm{for }1 \le s \le N', \ \ 1 \le j \le k \ \textrm{and} \ \mathsf{Dict}_{i_j}(\bx^{(s)}) = y_j\}.\]
\item \hspace*{10pt} For each $y \in \{-1,1\}^k$,  define $\widehat{\mathsf{Corr}}_{\mathcal{T},y} = \frac{\sum_{\bx^{(s)} \in \mathcal{A}_{\mathcal{T},y}} f(\bx^{(s)})}{M} $. 
\item  \hspace*{10pt} Define $\widehat{\mathsf{Corr}}_{\mathcal{T}} = \frac{\sum_y |\widehat{\mathsf{Corr}}_{\mathcal{T},y}|}{2^k}$ and $h_{\mathcal{T}}: \{-1,1\}^k \rightarrow \{-1,1\}$ as 
$
h_{\mathcal{T}}(y) = \mathsf{sign} (\widehat{\mathsf{Corr}}_{\mathcal{T},y} ). 
$
\end{enumerate}

\vspace{5 pt}
}
\end{minipage}
\end{center}

\end{minipage}
\hfill \vline
\hrule
\caption{Description of the \textsf{Find-best-fit} algorithm}
\label{fig:trj3}
\end{figure}
}

\ignore{
\begin{lemma}~\label{lem:find-best-fit}
There is an algorithm \textsf{Find-best-fit}
with the following guarantee: 
\begin{enumerate}
\item The algorithm gets as input oracle access to a function $f: \{-1,1\}^n \rightarrow \{-1,1\}$ as well as a set of oracles $\mathcal{S} \subseteq \{\mathsf{Dict}_1, \ldots, \mathsf{Dict}_n\}$. We clarify that algorithm is only given the oracle $\mathsf{Dict}_i$ (for $\mathsf{Dict}_i \in \mathcal{S}$) but not $i$. 
\item The algorithm gets as input error parameter $\delta>0$, arity parameter $k$ and confidence parameter $\kappa>0$. 
\item The algorithm  makes $O(2^k \cdot |\mathcal{S}| \cdot (\log (1/\kappa) + k^2 + |\mathcal{S}|))$ queries
with probability $1-\kappa$. 
\item Each query is distributed as a uniformly random element of $\{-1,1\}^n$. 
\item With probability $1-\kappa$, the algorithm outputs a number $\widehat{\mathsf{Corr}}_{f,\mathcal{S},k}$ and $h_{\mathcal{S},k}:\{-1,1\}^k \rightarrow \{-1,1\}$ such that there exists oracles $\mathsf{Dict}_{i_1}, \ldots, \mathsf{Dict}_{i_k} \in \mathcal{S}$ 
\[
 \mathbf{E}_{\bx} [f(\bx)  \cdot h_{\mathcal{S},k}(\bx_{i_1}, \ldots, \bx_{i_k})] \ge \max_{\ell \in \mathcal{J}_{\mathcal{S},k}} [f(\bx)  \cdot \ell(\bx)] - \delta \ \ \textrm{and} \ \ |\widehat{\mathsf{Corr}}_{f,\mathcal{S},k} - \max_{\ell \in \mathcal{J}_{\mathcal{S},k}} [f(\bx)  \cdot \ell(\bx)]| \le \delta.
\]
\end{enumerate} 
\end{lemma}
}
For convenience of the reader, we restate Lemma~\ref{lem:find-best-fit} and then give the proof. 
\rl*
\begin{proof}
The proof of Item~4 is immediate from Step~2 of the algorithm. 
Similarly,  note that the total number of queries made is $|\mathcal{S}| \cdot \mathsf{Poi}(N)$ where $N= O(2^k/\epsilon^2 \cdot (\log(1/\delta) + k^2 + |\mathcal{S}|))$. Item~3 now follows from tail bounds on the Poisson distribution. 

Thus, it just remains to prove Item~5. To prove this, 
consider a fixed $\mathcal{T}$ (from Step~5 of the algorithm \textsf{Find-best-fit}).  Define  $\mathcal{B}_{\mathcal{T},y} = \{x \in \{-1,1\}^n:$ for $1 \le j \le k, \ \mathsf{Dict}_{i_j}(x) = y_j\}$. Observe that $\mathcal{A}_{\mathcal{T}, y}$ can be seen as a sampling of $\mathsf{Poi}(M)$ elements from $\mathcal{B}_{\mathcal{T},y}$. Let us define $\mathsf{Corr}_{\mathcal{T},y} = \mathbf{E}_{\bx \in \mathcal{B}_{\mathcal{T},y}} [f(\bx)]$. Then, observe 
that 
$
{\mathbf{E}}_{\bx^{(1)}, \ldots, \bx^{(N)}} [ \widehat{\mathsf{Corr}}_{\mathcal{T},y} ]  = \mathsf{Corr}_{\mathcal{T},y}. 
$ In fact, by Chernoff-Hoeffding bounds, it follows that 
\[
\Pr_{\bx^{(1)}, \ldots, \bx^{(N)}} [|\mathsf{Corr}_{\mathcal{T},y} - \widehat{\mathsf{Corr}}_{\mathcal{T},y}|> \epsilon/2] \le \frac{2^{-k} \cdot \delta}{10 \cdot 2^{|\mathcal{S}|}}. 
\]
By a union bound, this implies that 
\begin{eqnarray}
\textrm{For all} \ y \in \{-1,1\}^k, \ \ \Pr_{\bx^{(1)}, \ldots, \bx^{(N)}} [|\mathsf{Corr}_{\mathcal{T},y} - \widehat{\mathsf{Corr}}_{\mathcal{T},y}|> \epsilon/2] \le \frac{ \delta}{10 \cdot 2^{|\mathcal{S}|}}. \label{eq:bound-estimate-1}
\end{eqnarray}
This implies that for any subset $\mathcal{T}$, 
\begin{equation}\label{eq:bound-estimate-2}
\Pr_{\bx^{(1)}, \ldots, \bx^{(N)}}[|\mathsf{Corr}_{\mathcal{T},y} - \widehat{\mathsf{Corr}}_{\mathcal{T},y}|> \epsilon/2] \le \frac{\delta}{10 \cdot 2^{|\mathcal{S}|}}. 
\end{equation}

Finally, note that for any subset $\mathcal{T}$,  
\begin{eqnarray*}
\mathbf{E}_{\bx \in \{-1,1\}^n}[h_{\mathcal{T}}(\bx_{i_1}, \ldots, \bx_{i_k}) \cdot f(\bx)]  &=& \mathbf{E}_{\by \in \{-1,1\}^k} \mathbf{E}_{\bx \in \mathcal{A}_{\mathcal{T},y}} [f(\bx) \cdot \mathsf{sign}(\widehat{\mathsf{Corr}}_{\mathcal{T},y})] \\ 
&=& \mathbf{E}_{\by \in \{-1,1\}^k} [ \mathsf{sign}(\widehat{\mathsf{Corr}}_{\mathcal{T},y}) \cdot \mathsf{Corr}_{\mathcal{T},y}]  \\ 
&\ge& \mathbf{E}_{\by \in \{-1,1\}^k} [\mathsf{Corr}_{\mathcal{T},y}] - 2 \max_{\by \in \{-1,1\}^k} [|\mathsf{Corr}_{\mathcal{T},y} - \widehat{\mathsf{Corr}}_{\mathcal{T},y}|]
\end{eqnarray*}
Using (\ref{eq:bound-estimate-1}), we have that 
\begin{equation}\label{eq:bound-estimate-3}
\Pr_{\bx^{(1)}, \ldots, \bx^{(N)}} [ \mathbf{E}_{\bx \in \{-1,1\}^n}[h_{\mathcal{T}}(\bx_{i_1}, \ldots, \bx_{i_k}) \cdot f(\bx)] \ge \mathsf{Corr}_{\mathcal{T}} - \epsilon] \ge 1-\frac{\delta}{10 \cdot 2^{|\mathcal{S}|}}. 
\end{equation}
A union bound over all subsets $\mathcal{T} \subseteq \mathcal{S}$ on (\ref{eq:bound-estimate-2}) and (\ref{eq:bound-estimate-3}) yields Item~5. 
\end{proof}

\end{document}